\newif\iftr\trtrue
\def\OPTIONConf{1}         %
\def\OPTIONLoudLabels{0}         %
\newcommand{\arrayenvc}[1]{\renewcommand{\arraystretch}{1} \begin{array}[c]{@{}c@{}}#1\end{array}}
\newcommand{\arrayenvcl}[1]{\renewcommand{\arraystretch}{1} \begin{array}[c]{@{}l@{}}#1\end{array}}
\newcommand{\arrayenvl}[1]{\renewcommand{\arraystretch}{1} \begin{array}[t]{@{}l@{}}#1\end{array}}
\newcommand{\arrayenvbl}[1]{\renewcommand{\arraystretch}{1}  \begin{array}[b]{@{}l@{}}#1\end{array}}
\newcommand{\tabularenvl}[1]{\renewcommand{\arraystretch}{1} \!\begin{tabular}[t]{@{}l@{}}#1\end{tabular}}
\newcommand{\invisiclues}[1]{{}}
\newenvironment{arraybl}%
    {\begin{array}[b]{@{}l@{}}}%
    {\end{array}}
\newdimen\zzfontsz
\newcommand{\fontsz}[2]{\zzfontsz=#1%
  {\fontsize{\zzfontsz}{1.2\zzfontsz}\selectfont{#2}}}
\newcommand{\runonfontsz}[1]{\zzfontsz=#1%
\fontsize{\zzfontsz}{1.2\zzfontsz}\selectfont}
\newcommand{\mathsz}[2]{\text{\fontsz{#1}{$#2$}}}
\newcommand{\runonmathsz}[1]{\runonfontsz{#1}}
\newcommand{\judgboxfontsize}[1]{%
    \ifnum\OPTIONConf=1%
        \fontsz{11pt}{\strut#1}%
    \else%
        \fontsz{13pt}{\strut#1}%
    \fi}
\newcommand{\judgboxmathsize}[1]{%
    \ifnum\OPTIONConf=1%
        \mathsz{11pt}{\mathstrut#1}%
    \else%
        \mathsz{13pt}{\mathstrut#1}%
    \fi}
\newcommand{\lesscaptionspacing}{\vspace{-1pt}}
\newcommand{\ottdrule}[4][]{{\displaystyle\frac{\begin{array}{l}#2\end{array}}{#3}\quad\ottdrulename{#4}}}
\newcommand{\ottusedrule}[1]{\[#1\]}
\newcommand{\ottpremise}[1]{ #1 \\}
\newenvironment{ottdefnblock}[3][]{ \framebox{\mbox{#2}} \quad #3 \\[0pt]}{}
\newenvironment{ottfundefnblock}[3][]{ \framebox{\mbox{#2}} \quad #3 \\[0pt]\begin{displaymath}\begin{array}{l}}{\end{array}\end{displaymath}}
\newcommand{\ottmv}[1]{\mathit{#1}}
\newcommand{\ottkw}[1]{\mathbf{#1}}
\newcommand{\ottsym}[1]{#1}
\newcommand{\ottcom}[1]{\text{#1}}
\newcommand{\ottdrulename}[1]{\textsc{#1}}
\newcommand{\ottprodline}[6]{& & $#1$ & $#2$ & $#3 #4$ & $#5$ & $#6$}
\newcommand{\ottdruleGrwfXXemp}[1]{\ottdrule[#1]{%
}{
 G   \vdash    \varepsilon  \; {\txtsf{wf} } }{%
{\ottdrulename{Grwf\_emp}}{}%
}}
\newcommand{\ottdruleGrwfXXval}[1]{\ottdrule[#1]{%
\ottpremise{ G   \vdash   H \; {\txtsf{wf} } }%
\ottpremise{ G ( p )   \ottsym{=}  v}%
}{
 G   \vdash   H  \ottsym{,}   p {:} v  \; {\txtsf{wf} } }{%
{\ottdrulename{Grwf\_val}}{}%
}}
\newcommand{\ottdruleGrwfXXthunk}[1]{\ottdrule[#1]{%
\ottpremise{ G   \vdash   H \; {\txtsf{wf} } }%
\ottpremise{{\begin{arraybl}
{ G ( p )   \ottsym{=}  e}%
\end{arraybl}}}
}{
 G   \vdash   H  \ottsym{,}   p {:} e  \; {\txtsf{wf} } }{%
{\ottdrulename{Grwf\_thunk}}{}%
}}
\newcommand{\ottdruleGrwfXXthunkCache}[1]{\ottdrule[#1]{%
\ottpremise{ G   \vdash   H \; {\txtsf{wf} } }%
\ottpremise{{\begin{arraybl}
{ G ( p ) = ( e , \mathrm{t} ) }%
\\
{ \text{if~}  \txtsf{all-clean-out}( G , p )  \text{~then~} \ottsym{(}   \D  \derives   G   \vdash ^{ p }_{ \omega }  e   \Downarrow   G  ;  \mathrm{t}    \ottsym{)} }%
\\
{ \text{if~}  \text{not~}  \txtsf{all-clean-out}( G , p )   \text{~then~}  \txtsf{all-dirty-in}( G , p )  }%
\end{arraybl}}}
}{
 G   \vdash   H  \ottsym{,}   p {:}( e , \mathrm{t} )  \; {\txtsf{wf} } }{%
{\ottdrulename{Grwf\_thunkCache}}{}%
}}
\newcommand{\ottdruleGrwfXXdirtyEdge}[1]{\ottdrule[#1]{%
\ottpremise{ G   \vdash   H \; {\txtsf{wf} } }%
\ottpremise{{\begin{arraybl}
{ \ottsym{(}  p  \ottsym{,}  a  \ottsym{,}   \txtsf{dirty}   \ottsym{,}  q  \ottsym{)}  \in  G }%
\\
{ q  \in   \txtsf{dom}( H )  }%
\\
{ \text{if~}  \ottsym{(}  p_{{\mathrm{0}}}  \ottsym{,}  a_{{\mathrm{0}}}  \ottsym{,}  b_{{\mathrm{0}}}  \ottsym{,}  p  \ottsym{)}  \in  G  \text{~then~} b_{{\mathrm{0}}}  \ottsym{=}   \txtsf{dirty}  }%
\end{arraybl}}}
}{
 G   \vdash   H  \ottsym{,}  \ottsym{(}  p  \ottsym{,}  a  \ottsym{,}   \txtsf{dirty}   \ottsym{,}  q  \ottsym{)} \; {\txtsf{wf} } }{%
{\ottdrulename{Grwf\_dirtyEdge}}{}%
}}
\newcommand{\ottdruleGrwfXXcleanEdge}[1]{\ottdrule[#1]{%
\ottpremise{ G   \vdash   H \; {\txtsf{wf} } }%
\ottpremise{{\begin{arraybl}
{ \ottsym{(}  p  \ottsym{,}  a  \ottsym{,}   \txtsf{clean}   \ottsym{,}  q  \ottsym{)}  \in  G }%
\\
{ \txtsf{consistent-action} ( H ,  a ,  q  ) }%
\\
{ \txtsf{all-clean-out}( G , q ) }%
\end{arraybl}}}
}{
 G   \vdash   H  \ottsym{,}  \ottsym{(}  p  \ottsym{,}  a  \ottsym{,}   \txtsf{clean}   \ottsym{,}  q  \ottsym{)} \; {\txtsf{wf} } }{%
{\ottdrulename{Grwf\_cleanEdge}}{}%
}}
\newcommand{\ottdefngrwf}[1]{\begin{ottdefnblock}[#1]{$ G   \vdash   H \; {\txtsf{wf} } $}{}
\ottusedrule{\ottdruleGrwfXXemp{}}
\ottusedrule{\ottdruleGrwfXXval{}}
\ottusedrule{\ottdruleGrwfXXthunk{}}
\ottusedrule{\ottdruleGrwfXXthunkCache{}}
\ottusedrule{\ottdruleGrwfXXdirtyEdge{}}
\ottusedrule{\ottdruleGrwfXXcleanEdge{}}
\end{ottdefnblock}}
\newcommand{\ottdefnsGrwf}{
\ottdefngrwf{}}
\newcommand{\ottdruleEvalXXterm}[1]{\ottdrule[#1]{%
}{
 G   \vdash ^{ p }_{ \omega }  \mathrm{t}   \Downarrow   G  ;  \mathrm{t} }{%
{\ottdrulename{Eval\_term}}{}%
}}
\newcommand{\ottdruleEvalXXapp}[1]{\ottdrule[#1]{%
\ottpremise{{\begin{arraybl}
{ G_{{\mathrm{1}}}   \vdash ^{ p }_{ \omega }  e_{{\mathrm{1}}}   \Downarrow   G_{{\mathrm{2}}}  ;  \lambda  x  \ottsym{.}  e_{{\mathrm{2}}} }%
\\
{ G_{{\mathrm{2}}}   \vdash ^{ p }_{ \omega }  \ottsym{[}  v  \ottsym{/}  x  \ottsym{]}  e_{{\mathrm{2}}}   \Downarrow   G_{{\mathrm{3}}}  ;  \mathrm{t} }%
\end{arraybl}}}
}{
 G_{{\mathrm{1}}}   \vdash ^{ p }_{ \omega }  e_{{\mathrm{1}}} \, v   \Downarrow   G_{{\mathrm{3}}}  ;  \mathrm{t} }{%
{\ottdrulename{Eval\_app}}{}%
}}
\newcommand{\ottdruleEvalXXfix}[1]{\ottdrule[#1]{%
\ottpremise{ G_{{\mathrm{1}}}   \vdash ^{ p }_{ \omega }  \ottsym{[}  \ottsym{(}  \ottkw{fix} \, f  \ottsym{.}  e  \ottsym{)}  \ottsym{/}  f  \ottsym{]}  e   \Downarrow   G_{{\mathrm{2}}}  ;  \mathrm{t} }%
}{
 G_{{\mathrm{1}}}   \vdash ^{ p }_{ \omega }  \ottkw{fix} \, f  \ottsym{.}  e   \Downarrow   G_{{\mathrm{2}}}  ;  \mathrm{t} }{%
{\ottdrulename{Eval\_fix}}{}%
}}
\newcommand{\ottdruleEvalXXbind}[1]{\ottdrule[#1]{%
\ottpremise{{\begin{arraybl}
{ G_{{\mathrm{1}}}   \vdash ^{ p }_{ \omega }  e_{{\mathrm{1}}}   \Downarrow   G_{{\mathrm{2}}}  ;  \ottkw{ret} \, v }%
\\
{ G_{{\mathrm{2}}}   \vdash ^{ p }_{ \omega }  \ottsym{[}  v  \ottsym{/}  x  \ottsym{]}  e_{{\mathrm{2}}}   \Downarrow   G_{{\mathrm{3}}}  ;  \mathrm{t} }%
\end{arraybl}}}
}{
 G_{{\mathrm{1}}}   \vdash ^{ p }_{ \omega }   \textbf{let}\, x \,{\leftarrow}\, e_{{\mathrm{1}}} \, \ottkw{in} \, e_{{\mathrm{2}}}    \Downarrow   G_{{\mathrm{3}}}  ;  \mathrm{t} }{%
{\ottdrulename{Eval\_bind}}{}%
}}
\newcommand{\ottdruleEvalXXcase}[1]{\ottdrule[#1]{%
\ottpremise{ G_{{\mathrm{1}}}   \vdash ^{ p }_{ \omega }  \ottsym{[}  v  \ottsym{/}  x_{\ottmv{i}}  \ottsym{]}  e_{\ottmv{i}}   \Downarrow   G_{{\mathrm{2}}}  ;  \mathrm{t} }%
}{
 G_{{\mathrm{1}}}   \vdash ^{ p }_{ \omega }  \ottkw{case} \, \ottsym{(}   \ottkw{inj} _{  i  }~ v   \ottsym{,}  x_{{\mathrm{1}}}  \ottsym{.}  e_{{\mathrm{1}}}  \ottsym{,}  x_{{\mathrm{2}}}  \ottsym{.}  e_{{\mathrm{2}}}  \ottsym{)}   \Downarrow   G_{{\mathrm{2}}}  ;  \mathrm{t} }{%
{\ottdrulename{Eval\_case}}{}%
}}
\newcommand{\ottdruleEvalXXsplit}[1]{\ottdrule[#1]{%
\ottpremise{ G_{{\mathrm{1}}}   \vdash ^{ p }_{ \omega }  \ottsym{[}  v_{{\mathrm{1}}}  \ottsym{/}  x_{{\mathrm{1}}}  \ottsym{]}  \ottsym{[}  v_{{\mathrm{2}}}  \ottsym{/}  x_{{\mathrm{2}}}  \ottsym{]}  e   \Downarrow   G_{{\mathrm{2}}}  ;  \mathrm{t} }%
}{
 G_{{\mathrm{1}}}   \vdash ^{ p }_{ \omega }  \ottkw{split} \, \ottsym{(}  \ottsym{(}  v_{{\mathrm{1}}}  \ottsym{,}  v_{{\mathrm{2}}}  \ottsym{)}  \ottsym{,}  x_{{\mathrm{1}}}  \ottsym{.}  x_{{\mathrm{2}}}  \ottsym{.}  e  \ottsym{)}   \Downarrow   G_{{\mathrm{2}}}  ;  \mathrm{t} }{%
{\ottdrulename{Eval\_split}}{}%
}}
\newcommand{\ottdruleEvalXXfork}[1]{\ottdrule[#1]{%
}{
 G   \vdash ^{ p }_{ \omega }   \Grn{\textbf{fork}(  \Grn{ \ottkw{nm} \, k }  )}    \Downarrow   G  ;  \ottkw{ret} \, \ottsym{(}   \Grn{ \ottkw{nm} \,  { k }{\cdot}{ \ottsym{1} }  }   \ottsym{,}   \Grn{ \ottkw{nm} \,  { k }{\cdot}{ \ottsym{2} }  }   \ottsym{)} }{%
{\ottdrulename{Eval\_fork}}{}%
}}
\newcommand{\ottdruleEvalXXrefPlain}[1]{\ottdrule[#1]{%
\ottpremise{{\begin{arraybl}
{q  \ottsym{=}   k  @  \omega }%
\\
{ \Grn{ q } \notin   \txtsf{dom}( G_{{\mathrm{1}}} )  }%
\end{arraybl}}}
\ottpremise{ G_{{\mathrm{1}}} \{ q {\mapsto} v \}   \ottsym{=}  G_{{\mathrm{2}}}}%
}{
 G_{{\mathrm{1}}}   \vdash ^{ p }_{ \omega }   \textbf{ref}(\Grn{  \Grn{ \ottkw{nm} \, k }  }, v )    \Downarrow   G_{{\mathrm{2}}}  ;  \ottkw{ret} \,  \ottkw{ref} \,\Grn{ q }  }{%
{\ottdrulename{Eval\_refPlain}}{}%
}}
\newcommand{\ottdruleEvalXXrefClean}[1]{\ottdrule[#1]{%
\ottpremise{q  \ottsym{=}   k  @  \omega }%
\ottpremise{ G ( q )   \ottsym{=}  v}%
}{
 G   \vdash ^{ p }_{ \omega }   \textbf{ref}(\Grn{  \Grn{ \ottkw{nm} \, k }  }, v )    \Downarrow   G  \ottsym{,}  \ottsym{(}  p  \ottsym{,}  \ottkw{alloc} \, v  \ottsym{,}   \txtsf{clean}   \ottsym{,}  q  \ottsym{)}  ;  \ottkw{ret} \,  \ottkw{ref} \,\Grn{ q }  }{%
{\ottdrulename{Eval\_refClean}}{}%
}}
\newcommand{\ottdruleEvalXXrefDirty}[1]{\ottdrule[#1]{%
\ottpremise{q  \ottsym{=}   k  @  \omega }%
\ottpremise{ G_{{\mathrm{1}}} \{ q {\mapsto} v \}   \ottsym{=}  G_{{\mathrm{2}}}}%
\ottpremise{ \txtsf{dirty-paths-in} ( G_{{\mathrm{2}}} , q )   \ottsym{=}  G_{{\mathrm{3}}}}%
}{
 G_{{\mathrm{1}}}   \vdash ^{ p }_{ \omega }   \textbf{ref}(\Grn{  \Grn{ \ottkw{nm} \, k }  }, v )    \Downarrow   G_{{\mathrm{3}}}  \ottsym{,}  \ottsym{(}  p  \ottsym{,}  \ottkw{alloc} \, v  \ottsym{,}   \txtsf{clean}   \ottsym{,}  q  \ottsym{)}  ;  \ottkw{ret} \,  \ottkw{ref} \,\Grn{ q }  }{%
{\ottdrulename{Eval\_refDirty}}{}%
}}
\newcommand{\ottdruleEvalXXgetPlain}[1]{\ottdrule[#1]{%
\ottpremise{ G ( q )   \ottsym{=}  v}%
}{
 G   \vdash ^{ p }_{ \omega }  \ottkw{get} \, \ottsym{(}   \ottkw{ref} \,\Grn{ q }   \ottsym{)}   \Downarrow   G  ;  \ottkw{ret} \, v }{%
{\ottdrulename{Eval\_getPlain}}{}%
}}
\newcommand{\ottdruleEvalXXgetClean}[1]{\ottdrule[#1]{%
\ottpremise{ G ( q )   \ottsym{=}  v}%
}{
 G   \vdash ^{ p }_{ \omega }  \ottkw{get} \, \ottsym{(}   \ottkw{ref} \,\Grn{ q }   \ottsym{)}   \Downarrow   G  \ottsym{,}  \ottsym{(}  p  \ottsym{,}  \ottkw{obs} \, v  \ottsym{,}   \txtsf{clean}   \ottsym{,}  q  \ottsym{)}  ;  \ottkw{ret} \, v }{%
{\ottdrulename{Eval\_getClean}}{}%
}}
\newcommand{\ottdruleEvalXXthunkPlain}[1]{\ottdrule[#1]{%
\ottpremise{{\begin{arraybl}
{q  \ottsym{=}   k  @  \omega }%
\\
{ \Grn{ q } \notin   \txtsf{dom}( G_{{\mathrm{1}}} )  }%
\end{arraybl}}}
\ottpremise{ G_{{\mathrm{1}}} \{ q {\mapsto} e \}   \ottsym{=}  G_{{\mathrm{2}}}}%
}{
 G_{{\mathrm{1}}}   \vdash ^{ p }_{ \omega }   \textbf{thunk}(\Grn{  \Grn{ \ottkw{nm} \, k }  }, e )    \Downarrow   G_{{\mathrm{2}}}  ;  \ottkw{ret} \, \ottsym{(}   \ottkw{thk} \,\Grn{ q }   \ottsym{)} }{%
{\ottdrulename{Eval\_thunkPlain}}{}%
}}
\newcommand{\ottdruleEvalXXthunkDirty}[1]{\ottdrule[#1]{%
\ottpremise{q  \ottsym{=}   k  @  \omega }%
\ottpremise{ G_{{\mathrm{1}}} \{ q {\mapsto} e \}   \ottsym{=}  G_{{\mathrm{2}}}}%
\ottpremise{ \txtsf{dirty-paths-in} ( G_{{\mathrm{2}}} , q )   \ottsym{=}  G_{{\mathrm{3}}}}%
}{
 G_{{\mathrm{1}}}   \vdash ^{ p }_{ \omega }   \textbf{thunk}(\Grn{  \Grn{ \ottkw{nm} \, k }  }, e )    \Downarrow   G_{{\mathrm{3}}}  \ottsym{,}  \ottsym{(}  p  \ottsym{,}  \ottkw{alloc} \, e  \ottsym{,}   \txtsf{clean}   \ottsym{,}  q  \ottsym{)}  ;  \ottkw{ret} \, \ottsym{(}   \ottkw{thk} \,\Grn{ q }   \ottsym{)} }{%
{\ottdrulename{Eval\_thunkDirty}}{}%
}}
\newcommand{\ottdruleEvalXXthunkClean}[1]{\ottdrule[#1]{%
\ottpremise{q  \ottsym{=}   k  @  \omega }%
\ottpremise{ \txtsf{exp}( G ,  q )   \ottsym{=}  e}%
}{
 G   \vdash ^{ p }_{ \omega }   \textbf{thunk}(\Grn{  \Grn{ \ottkw{nm} \, k }  }, e )    \Downarrow   G  \ottsym{,}  \ottsym{(}  p  \ottsym{,}  \ottkw{alloc} \, e  \ottsym{,}   \txtsf{clean}   \ottsym{,}  q  \ottsym{)}  ;  \ottkw{ret} \, \ottsym{(}   \ottkw{thk} \,\Grn{ q }   \ottsym{)} }{%
{\ottdrulename{Eval\_thunkClean}}{}%
}}
\newcommand{\ottdruleEvalXXforcePlain}[1]{\ottdrule[#1]{%
\ottpremise{{\begin{arraybl}
{ G_{{\mathrm{1}}} ( q )   \ottsym{=}  e}%
\\
{ G_{{\mathrm{1}}}   \vdash ^{ q }_{  \txtsf{namespace}(  q  )  }  e   \Downarrow   G_{{\mathrm{2}}}  ;  \mathrm{t} }%
\end{arraybl}}}
}{
 G_{{\mathrm{1}}}   \vdash ^{ p }_{ \omega }  \ottkw{force} \, \ottsym{(}   \ottkw{thk} \,\Grn{ q }   \ottsym{)}   \Downarrow   G_{{\mathrm{2}}}  ;  \mathrm{t} }{%
{\ottdrulename{Eval\_forcePlain}}{}%
}}
\newcommand{\ottdruleEvalXXforceClean}[1]{\ottdrule[#1]{%
\ottpremise{ G ( q ) = ( e , \mathrm{t} ) }%
\ottpremise{ \txtsf{all-clean-out}( G , q ) }%
}{
 G   \vdash ^{ p }_{ \omega }  \ottkw{force} \, \ottsym{(}   \ottkw{thk} \,\Grn{ q }   \ottsym{)}   \Downarrow   G  \ottsym{,}  \ottsym{(}  p  \ottsym{,}  \ottkw{obs} \, \mathrm{t}  \ottsym{,}   \txtsf{clean}   \ottsym{,}  q  \ottsym{)}  ;  \mathrm{t} }{%
{\ottdrulename{Eval\_forceClean}}{}%
}}
\newcommand{\ottdruleEvalXXcomputeDep}[1]{\ottdrule[#1]{%
\ottpremise{{\begin{arraybl}
{ \txtsf{exp}( G_{{\mathrm{1}}} ,  q )   \ottsym{=}  e'}%
\\
{ \txtsf{del-edges-out} (  G_{{\mathrm{1}}} \{ q {\mapsto} e' \}  , q )   \ottsym{=}  G'_{{\mathrm{1}}}}%
\\
{ G'_{{\mathrm{1}}}   \vdash ^{ q }_{  \txtsf{namespace}(  q  )  }  e'   \Downarrow   G_{{\mathrm{2}}}  ;  \mathrm{t}' }%
\end{arraybl}}}
\ottpremise{{\begin{arraybl}
{ G_{{\mathrm{2}}} \{ q {\mapsto}( e' , \mathrm{t}' )\}   \ottsym{=}  G'_{{\mathrm{2}}}}%
\\
{ \txtsf{all-clean-out}( G'_{{\mathrm{2}}} , q ) }%
\\
{ G'_{{\mathrm{2}}}   \vdash ^{ p }_{ \omega }  \ottkw{force} \, \ottsym{(}   \ottkw{thk} \,\Grn{ p_{{\mathrm{0}}} }   \ottsym{)}   \Downarrow   G_{{\mathrm{3}}}  ;  \mathrm{t} }%
\end{arraybl}}}
}{
 G_{{\mathrm{1}}}   \vdash ^{ p }_{ \omega }  \ottkw{force} \, \ottsym{(}   \ottkw{thk} \,\Grn{ p_{{\mathrm{0}}} }   \ottsym{)}   \Downarrow   G_{{\mathrm{3}}}  ;  \mathrm{t} }{%
{\ottdrulename{Eval\_computeDep}}{}%
}}
\newcommand{\ottdruleEvalXXscrubEdge}[1]{\ottdrule[#1]{%
\ottpremise{{\begin{arraybl}
{ \txtsf{all-clean-out}( \ottsym{(}  G_{{\mathrm{1}}}  \ottsym{,}  G_{{\mathrm{2}}}  \ottsym{)} , q_{{\mathrm{2}}} ) }%
\\
{ \txtsf{consistent-action} ( \ottsym{(}  G_{{\mathrm{1}}}  \ottsym{,}  G_{{\mathrm{2}}}  \ottsym{)} ,  a ,  q_{{\mathrm{2}}}  ) }%
\\
{ G_{{\mathrm{1}}}  \ottsym{,}  \ottsym{(}  q_{{\mathrm{1}}}  \ottsym{,}  a  \ottsym{,}   \txtsf{clean}   \ottsym{,}  q_{{\mathrm{2}}}  \ottsym{)}  \ottsym{,}  G_{{\mathrm{2}}}   \vdash ^{ p }_{ \omega }  \ottkw{force} \, \ottsym{(}   \ottkw{thk} \,\Grn{ p_{{\mathrm{0}}} }   \ottsym{)}   \Downarrow   G_{{\mathrm{3}}}  ;  \mathrm{t} }%
\end{arraybl}}}
}{
 G_{{\mathrm{1}}}  \ottsym{,}  \ottsym{(}  q_{{\mathrm{1}}}  \ottsym{,}  a  \ottsym{,}   \txtsf{dirty}   \ottsym{,}  q_{{\mathrm{2}}}  \ottsym{)}  \ottsym{,}  G_{{\mathrm{2}}}   \vdash ^{ p }_{ \omega }  \ottkw{force} \, \ottsym{(}   \ottkw{thk} \,\Grn{ p_{{\mathrm{0}}} }   \ottsym{)}   \Downarrow   G_{{\mathrm{3}}}  ;  \mathrm{t} }{%
{\ottdrulename{Eval\_scrubEdge}}{}%
}}
\newcommand{\ottdruleEvalXXnamespace}[1]{\ottdrule[#1]{%
\ottpremise{ G_{{\mathrm{1}}}   \vdash ^{ p }_{ \omega }  \ottsym{[}   \Grn{\textbf{ns}\,  \omega . k  }   \ottsym{/}  x  \ottsym{]}  e   \Downarrow   G_{{\mathrm{2}}}  ;  \mathrm{t} }%
}{
 G_{{\mathrm{1}}}   \vdash ^{ p }_{ \omega }   \Grn{\textbf{ns}\,(  \Grn{ \ottkw{nm} \, k }  , x . e )}    \Downarrow   G_{{\mathrm{2}}}  ;  \mathrm{t} }{%
{\ottdrulename{Eval\_namespace}}{}%
}}
\newcommand{\ottdruleEvalXXnest}[1]{\ottdrule[#1]{%
\ottpremise{ G_{{\mathrm{1}}}   \vdash ^{ p }_{ \mu }  e_{{\mathrm{1}}}   \Downarrow   G_{{\mathrm{2}}}  ;  \ottkw{ret} \, v }%
\ottpremise{ G_{{\mathrm{2}}}   \vdash ^{ p }_{ \omega }  \ottsym{[}  v  \ottsym{/}  x  \ottsym{]}  e_{{\mathrm{2}}}   \Downarrow   G_{{\mathrm{3}}}  ;  \mathrm{t} }%
}{
 G_{{\mathrm{1}}}   \vdash ^{ p }_{ \omega }   \Grn{\textbf{nest}(  \Grn{\textbf{ns}\, \mu }  , e_{{\mathrm{1}}} , x . e_{{\mathrm{2}}} )}    \Downarrow   G_{{\mathrm{3}}}  ;  \mathrm{t} }{%
{\ottdrulename{Eval\_nest}}{}%
}}
\renewcommand{\ottprodline}[6]{\ifthenelse{\equal{#3}{} }{ $\mid$~\hbox{$#2$}}{}}
\renewcommand{\ottkw}[1]{\textbf{#1}}
\newcommand{\derives}{\ensuremath{\mathrel{::}}}
\newcommand{\mytodo}[1]%
{\begin{spacing}{0.5}\todo{#1}\end{spacing}\xspace}%
\definecolor{lightgreen}{rgb}{.85,.95,.85}
\definecolor{lightblue}{rgb}{.85,.90,1}
\definecolor{lightred}{rgb}{.95,.85,.85}
\definecolor{lightgrey}{rgb}{.95,.95,.95}
\definecolor{drkyellow}{rgb}{0.4,0.3,0}
\definecolor{drkred}{rgb}{0.5,0,0}
\definecolor{drkgreen}{rgb}{0,0.5,0}
\definecolor{medgreen}{rgb}{0.2,0.60,0.2}
\definecolor{dkred}{rgb}{0.5,0,0}
\definecolor{gray}{rgb}{0.5,0.5,0.5}
\definecolor{mauve}{rgb}{0.58,0,0.82}
\definecolor{hickw}{rgb}{0.2,0.2,0.8}
\definecolor{type}{rgb}{0.6,0.2,0.6}
\definecolor{typebg}{rgb}{0.1,0.0,0.1}
\definecolor{punc}{rgb}{0.2,0.2,0.5}
\definecolor{highlightcolor}{rgb}{0.1,0,0.1}
\newcommand{\secref}[1]{Section~\ref{sec:#1}}
\newcommand{\secreftwo}[2]{Sections~\ref{sec:#1} and~\ref{sec:#2}}
\newcommand{\figref}[1]{Figure~\ref{fig:#1}}
\newcommand{%
\vspace*{-2ex}
{\hspace*{-1.5ex}\input{}}
}[1]{%
\vspace*{-2ex}
{\hspace*{-1.5ex}\input{#1}}
}
\newcommand{\codeLineL}[1]{}
\newcommand{\myinline}[1]{\lstinline!#1!}
\newcommand{\keywordcolor}{dBlue}
\newcommand{\cod}[1]{\textsf{#1}}
\newcommand{\kw}[1]{\textcolor{\keywordcolor}{\textsf{#1}}}
\newtheoremstyle{slplain}%
        {3pt}
        {3pt}
        {\slshape}
        {}
        {\bf}
        {.}
        {.2em}
        {}%
\theoremstyle{slplain}
\newtheorem{thm}{Theorem}[section]
\newtheorem*{thm*}{Theorem}
\newtheorem{lem}[thm]{Lemma}
\newtheorem{conj}[thm]{Conjecture}
\renewenvironment{proof}[1][\proofname]{\par
  \pushQED{\qed}%
  \normalfont
  \topsep2pt \partopsep2pt 
  \trivlist
  \item[\hskip\labelsep
        \itshape
    #1\@addpunct{.}]\ignorespaces
}{%
  \popQED\endtrivlist\@endpefalse
  \addvspace{2pt plus 2pt} 
}
\renewenvironment{proof}[1][\proofname]{\par%
  \vspace{-4pt}%
  \pushQED{\qed}%
  \normalfont
  \trivlist
  \item[\hskip\labelsep
        \itshape
    #1\@addpunct{.}]\ignorespaces%
}{%
  \popQED\endtrivlist\@endpefalse
}
\theoremstyle{definition}
\newtheorem{defn}[thm]{Definition}
\theoremstyle{remark}
\newcounter{cons}
\newcommand{\textvtt}[1]{{\normalfont\fontfamily{cmvtt}\selectfont {#1}}}
\newcommand{\marginnote}[1]{\marginpar{\raggedright\scriptsize{#1}}}
\newcommand{\LoudLabel}[1]{\label{#1}%
\ifnum\OPTIONLoudLabels=1%
  \marginnote{\tiny\textvtt{#1}}%
\fi%
}
\newcommand{\Label}[1]{\LoudLabel{#1}}%
\newcommand{\FLabel}[1]{\label{#1}%
{\tt\scriptsize{#1}}}%
\newcommand{\Label}[1]{\label{#1}}%
\newcommand{\FLabel}[1]{\label{#1}}%
\newcommand{\textgraybox}[1]{\boxed{#1}}
\newcommand{\loudjudgbox}[3]{%
      {\ifx\\#1\\
        \else 
           \ifnum\OPTIONConf=0
               \judgboxfontsize{\text{\textcolor{dBlue}{\textvtt{\Backslash{#1}}}}}
           \fi
        \fi
        \raggedright \textgraybox{\ensuremath{\judgboxmathsize{#2}}}\!%
        \fontsz{9pt}{\begin{tabular}[c]{l} #3 \end{tabular}} %
}}
\newcommand{\txtsf}[1]{{\normalfont\textsf{#1}}}
\newcommand{\union}{\mathrel{\cup}}
\newcommand{\monotonic}{~\txtsf{monotonic}}
\newcommand{\satisfactory}{~\txtsf{satisfactory}}
\definecolor{dRed}{rgb}{0.65, 0.0, 0.0}
\definecolor{dGreen}{rgb}{0.05, 0.30, 0.20}
\definecolor{DRED}{rgb}{0.65, 0.0, 0.0}
\newcommand{\authnote}[2]{{\footnotesize\color{dRed}[{#1}: #2]}}
\newcommand{\showindraft}[1]{}
\newcommand{\showindraft}[1]{#1}
\newcommand{\jeff}[1]{\showindraft{\authnote{Jeff}{#1}}}
\newcommand{\jana}[1]{\showindraft{\authnote{Jana}{#1}}}
\newcommand{\mathcolorbox}[2]{\colorbox{#1}{\hspace{-1pt}\ensuremath{#2}\hspace{-1pt}}}
\newcommand{\Grn}[1]{\mathcolorbox{green!20}{#1}}
\definecolor{dBlue}{rgb}{0.0, 0.0, 0.65}
\newcommand{\NominalAdapton}{\textsc{Nominal Adapton}\xspace}
\newcommand{\mergesym}{\mathrel{\union}}  
\newcommand{\joinsym}{\mathrel{*}}  
\newenvironment{bnfarray}{\begin{tabular}{c}\begin{array}[t]{@{}r@{~~}c@{~~}l@{~~}l@{~~}l}}{\end{array}\end{tabular}}
\newenvironment{xbnfarray}{\begin{tabular}{c}\begin{array}[t]{@{}r@{~~}c@{~}l@{~~}ll}}{\end{array}\end{tabular}}
\newcommand{\bnfheader}[1]{\multicolumn{4}{@{}l}{\text{#1}}}
\newcommand{\bnfas}{\mathrel{::=}}
\newcommand{\bnfalt}{\mathrel{\mid}}
\newcommand{\BnfaltBRK}{ \\ & & & \!\!\!\bnfalt}
\newcommand{\bnfaltBR}{ & & \!\!\!\bnfalt}
\newcommand{\bnfaltBRK}{ \\ \bnfaltBR}
\renewcommand{\mathrm}[1]{\ensuremath{#1}}
\begin{document}
\exclusivelicense

\conferenceinfo{OOPSLA '15}{October 25--30, 2015, Pittsburgh, PA, USA}
\copyrightyear{2015}
\copyrightdata{978-1-4503-3689-5/15/10}
\doi{2814270.2814305}

\newcommand{\mytitle}{Incremental Computation with Names}
\title{\mytitle}
\iftr
\subtitle{Extended Version}
\else\fi

\fboxsep1.5pt

\authorinfo{Matthew~A.~Hammer$^{1,2}$ \and
  Jana~Dunfield$^3$ \and
  Kyle~Headley$^{1,2}$ \and
  Nicholas~Labich$^2$ \and
  Jeffrey~S.~Foster$^2$ \and
  Michael~Hicks$^2$ \and
  David~Van~Horn$^2$}{
  \begin{minipage}{2in}
    \begin{center}
      $^1$ University of Colorado \\
      Boulder, USA
    \end{center}
  \end{minipage}
  \and
  \begin{minipage}{2in}
    \begin{center}
      $^2$ University of Maryland \\
      College Park, USA
    \end{center}
  \end{minipage}
  \and
  \begin{minipage}{2in}
    \begin{center}
      $^3$ University of British Columbia \\
      Vancouver, Canada
    \end{center}
  \end{minipage}%
\iftr
~\\
\vspace*{-22pt} %
\fi}
           {\relax}

\maketitle

%
%
\begin{abstract}
  Over the past thirty years, there has been significant progress in
  developing general-purpose, language-based approaches to
  \emph{incremental computation}, which aims to efficiently update the
  result of a computation when an input is changed. A key design
  challenge in such approaches is how to provide efficient incremental
  support for a broad range of programs.  In this paper, we argue that
  first-class \emph{names} are a critical linguistic feature for
  efficient incremental computation. Names identify computations to be
  reused across differing runs of a program, and making them first
  class gives programmers a high level of control over reuse. We
  demonstrate the benefits of names by presenting \NominalAdapton, an
  ML-like language for incremental computation with names. We describe
  how to use \NominalAdapton to efficiently incrementalize several
  standard programming patterns---including maps, folds, and
  unfolds---and show how to build efficient, incremental probabilistic
  trees and tries. Since \NominalAdapton's implementation is subtle,
  we formalize it as a core calculus and prove it is
  \emph{from-scratch consistent}, meaning it always produces the same
  answer as simply re-running the computation. Finally, we demonstrate
  that \NominalAdapton can provide large speedups over both
  from-scratch computation and \Adapton, a previous state-of-the-art
  incremental computation system.
\end{abstract}


{
\category{D.3.1}{Programming Languages}{Formal Definitions and Theory}
\category{D.3.3}{Programming Languages}{Language Constructs and Features}
\category{F.3.2}{Logics and Meanings of Programs}{Semantics of Programming Languages}
}

\keywords
laziness, thunks, call-by-push-value (CBPV), demanded computation graph (DCG), incremental computation, self-adjusting computation, memoization, nominal matching, structural matching

\section{Introduction}
\Label{sec:intro}

\emph{Memoization} is a widely used technique to speed up running time
by caching and reusing prior results~\citep{Michie68}. The idea is
simple---the first time we call a pure function $f$ on immutable
inputs $\vec{x}$, we store the result $r$
in a \emph{memo table} mapping $\vec{x}$ to $r$. Then on subsequent
calls $f(\vec{y})$, we can return $r$ immediately if $\vec{x}$ and
$\vec{y}$ match. \emph{Incremental computation}
(IC)~\citep{PughThesis} takes this idea a step further, aiming to
reuse prior computations even if there is a small change in the
input. %
Recent forms of IC as exemplified by self-adjusting computation
(SAC)~\citep{AcarThesis} and \Adapton~\citep{Adapton2014} support 
\emph{mutable} inputs, meaning that two calls to $f(\vec{x})$
might produce different results because values reachable from the
same arguments $\vec{x}$ have been mutated. As such, before reusing a
memoized result $r$, any inconsistencies are repaired via a process
called \emph{change propagation}.

An important goal of an IC system is to minimize the work performed in
support of change propagation, and thus improve overall
performance. Matching---the task of determining whether a call's
arguments are ``the same'' as those of a memoized call---turns out to
play a key role, as we show in this paper.
The most common mechanism is \emph{structural matching}, which
traverses an input's structure to check whether each of its components match.
To make it fast,
implementations use variants of hash-consing (e.g.,
see~\cite{filliatre2006type}) which, in essence, associates with a
pointer a hash of its contents and compares pointers by hash.

Structural matching works well when memoizing pure computations over
immutable data because such computations always produce the same
result. But for IC involving mutable references, structural matching
can be too specific and therefore too fragile.
For example, suppose we map a function~\ml{f} over a mutable list \ml{input} ${=}$ $[0,1,3]$
producing \ml{output} $=$ \ml{map f input} $=$ $[f~0, f~1, f~3]$.
Next, suppose we mutate \ml{input}
by inserting \ml{2}, so \ml{input} becomes $[0,1,\mathbf{2},3]$.
Finally, suppose we recompute \ml{map f input}, now~$[f~0,f~1,\mathbf{f~2},f~3]$, attempting
to reuse as much of the prior computation as possible.
Structural matching will successfully identify and reuse the
recursive sub-call \ml{map f [3]}, reusing the result $[f~3]$. 
More generally, it will reuse
the mapped suffix after the inserted element, since the computation of
this output is independent of the mutated prefix. However, structural
matching will not match the sub-computations that map 0 to $f~0$ and 1
to $f~1$ because these sub-computations' outputs transitively include
the newly inserted value of $f~2$ (via their tail pointers). As a result, an IC system that uses
structural matching will rerun those sub-computations from scratch,
recomputing $f~0$ and $f~1$ and allocating new list cells to hold the
results. (Section~\ref{sec:overview} covers this example in detail.)

The key takeaway is that structural matching is too
conservative---it was designed for immutable inputs, in which case a
structural match produces a correct memoized result. But with IC using
mutable inputs, a match need not return a correct result; rather, our
aim should be to return a result that requires only a little work to
repair.  For our example, an ideal IC system
would be able to memoize the prefix, repairing it by mutating the old
output cell containing $f~1$ to insert $f~2$.

In this paper, we propose to overcome the deficiencies of structural
matching for IC by employing an alternative matching strategy that
involves \emph{names}. We implement our solution in
\NominalAdapton, an extension to the \Adapton IC
framework.
Our new \textit{nominal matching} strategy permits the programmer to
explicitly associate a name---a first-class (but abstract)
value---with a pointer such that pointers match when
their names are equal.
A program produces names by generating them
from existing names or other seed values. 
Returning to our example, we can add names to list cells such that an
output cell's name is derived from the corresponding input cell's
name. With this change, insertion into a list does not affect output
cells' names, and hence we can successfully reuse the computation of \ml{map} on the prefix before
the inserted element.
The particular naming strategy is explained in detail in  
Section~\ref{sec:overview}, which also gives an overview of the \NominalAdapton
programming model and shows how it improves performance on the
\ml{map} example compared to prior structural approaches.

Nominal matching is strictly more powerful than structural matching: The
programmer can choose names however they wish, including mimicking
structural matching by using hashing. 
That said, there is a risk that names could be used ambiguously,
associating the same name with distinct pointers.
In \NominalAdapton, if the programmer makes a mistake and uses a name ambiguously,
efficient run-time checks detect this misuse and raise an exception.
The use-once restriction can be limiting, however, so in 
addition to supporting first-class names, \NominalAdapton provides first-class
\emph{namespaces}---the same name can be reused as long as each use is
in a separate namespace. For example, we can safely map different
functions over the same list by wrapping the computations in
separate namespaces. 
Section~\ref{sec:patterns} illustrates several use cases of the \NominalAdapton
programming model, presenting naming design patterns for incremental
lists and trees and common computation patterns over them. We also
propose a fundamental data structure for probabilistically balanced
trees that works in a variety of applications.

We have formalized 
\NominalAdapton in a core calculus \calc and proved its incremental
recomputation is \emph{from-scratch consistent}, meaning it produces
the same answer as would a recomputation from scratch. As such,
mistakes from the programmer will never produce incorrect
results. Section~\ref{sec:corecalc} presents our formalism and
theorem.

We have implemented \NominalAdapton in OCaml as an extension to
\Adapton (Section~\ref{sec:implementation}).
We evaluated our implementation by comparing it to \Adapton on a set of
subject programs commonly evaluated in the IC literature, including
\cod{map}, \cod{filter}, \cod{reduce}, \cod{reverse}, \cod{median}, \cod{mergesort}, and \cod{quickhull}.
As a more involved example, we implemented an interpreter for an
imperative programming language (IMP with arrays), showing that
interpreted programs enjoy incrementality by virtue of using
\NominalAdapton as the meta-language.
Across our benchmarks, we find that \adapton{} is nearly always slower
than \NominalAdapton (sometimes orders of magnitude slower),
and is sometimes orders of magnitude slower than from-scratch
computation. By contrast, \NominalAdapton uniformly enjoys
speedups over from-scratch computation (up to $10900\times$) as well
as classic \adapton{} (up
to $21000\times$). (Section~\ref{sec:experiments} describes our
experiments.)

The idea of names has come up in prior incremental computation
systems, but only in an informal way. For example, \citet{AcarLW09}
includes a paragraph describing the idea of named references (there
called ``keys'') in the DeltaML implementation of SAC\@. To our
knowledge, our work is the first to formalize a notion of named
computations in IC and prove their usage correct. We are also the
first to empirically evaluate the costs and benefits of
programmer-named references and thunks. Finally, the notion of
first-class namespaces, with the same determinization benefits as
named thunks and references, is also new.  (Section~\ref{sec:related}
discusses SAC and other related work in more detail.)

\section{Overview}
\Label{sec:overview}

In this section we present \NominalAdapton and its programming model,
illustrating how names can be used to improve opportunities for
reuse. We start by introducing \Adapton's approach to incremental
computation, highlighting how \NominalAdapton extends its programming
model with support for names. Next we use an example, mapping over a
list, to show how names can be used to improve incremental performance.

\subsection{\Adapton and \NominalAdapton}

\Adapton aims to reuse prior computations as much as possible after a
change to the input. \Adapton achieves this by memoizing a
function call's arguments and results, reusing memoized results
when the arguments match (via structural matching).
In this section, we write \ml{memo}$(e)$ to
indicate that the programmer wishes $e$ to be memoized.\footnote{Programmers
actually have more flexibility thanks to \Adapton's support for
laziness, but laziness is orthogonal to names, which we focus on in this section.
We discuss laziness in Section \ref{sec:corecalc}.}

\Adapton provides mutable references: $\kw{ref}~e$ allocates a memory
location $p$ which it initializes to the result of evaluating $e$, and
$\kw{!}~p$ retrieves the contents of that cell. Changes to inputs
are expressed via reference cell mutations; \Adapton propagates
the effect of such changes to update previous results. Like
many approaches to incremental computation, \Adapton distinguishes
two layers of computation. Computations in the \emph{inner layer} are
incremental, but can only read and allocate references,
while computations in the \emph{outer layer} can change reference
values (necessitating change propagation for the affected inner-layer
computations) but are not themselves incremental.
This works by having the initial incremental run produce a
\emph{demanded computation graph} (DCG),
which stores values of memoized computations and tracks dependencies
between those computations and references.
Changes to mutable state ``dirty'' this graph, and change propagation
``cleans'' it, making its results consistent.

\subsection{Incremental Computation in \Adapton}

As a running example, consider incrementalizing a program that maps
over a list's elements. To support this, we define a list data structure that allows the
tail to be imperatively modified by the outer context:
\begin{OCaml}
    type 'a list = Nil | Cons of 'a * ('a list) ref
    let rec map f xs = memo( match xs with
      | Nil -> Nil
      | Cons (x, xs) -> Cons (f x, ref (map f !xs)) )
\end{OCaml}
This is a standard map function, except for two
twists: the function body is memoized via \cod{memo}, and the
input and output \cod{Cons} tails are reference cells.  The use of
\cod{memo} here records function calls to \cod{map}, identifying prior
calls using the function~\cod{f} and input list~\cod{xs}.  In turn,
\cod{xs} is either \cod{Nil} or is identified by a value of type \cod{'a} and
a reference cell.  Hence, reusing the identity of references is critical
to reusing calls to \cod{map} via \cod{memo}.
Now we can create a list (in the outer layer) and map over it (in the
inner layer):

\begin{OCaml}
    let l3 = Cons(3, ref Nil)  (* l3 = [3] *)
    let l1 = Cons(1, ref l3)   (* l1 = [1; 3] *)
    let l0 = Cons(0, ref l1)   (* l0 = [0; 1; 3] *)
    let m0 = (* inner start *) map f l0 (* inner end *)
        (* m0 = [f 0; f 1; f 3] *)
\end{OCaml}
Suppose we change the input to \cod{map} by
inserting an element:
\begin{OCaml}
    (tl l1) := Cons(2, ref l3)  (* l0 = [0; 1; 2; 3] *)
\end{OCaml}
Here, \cod{tl} returns the tail of its list argument.  After this change,
\cod{m0} will be updated to \cod{[f 0; f 1; f 2; f 3]}. In the best case,
computing \cod{m0} should only require applying \cod{f 2}
and inserting the result into the original output.  However,
\Adapton performs much more work for the above code. Specifically,
\Adapton will recompute \cod{f 0} and \cod{f 1}; if the change were
in the middle of a longer list, it would recompute the
\emph{entire} prefix of the list before the change. In contrast,
\NominalAdapton will only redo the minimal amount of work.

\begin{figure}[t]
   \begin{subfigure}{\columnwidth}
     \centering
    \includegraphics[width=2.3in]{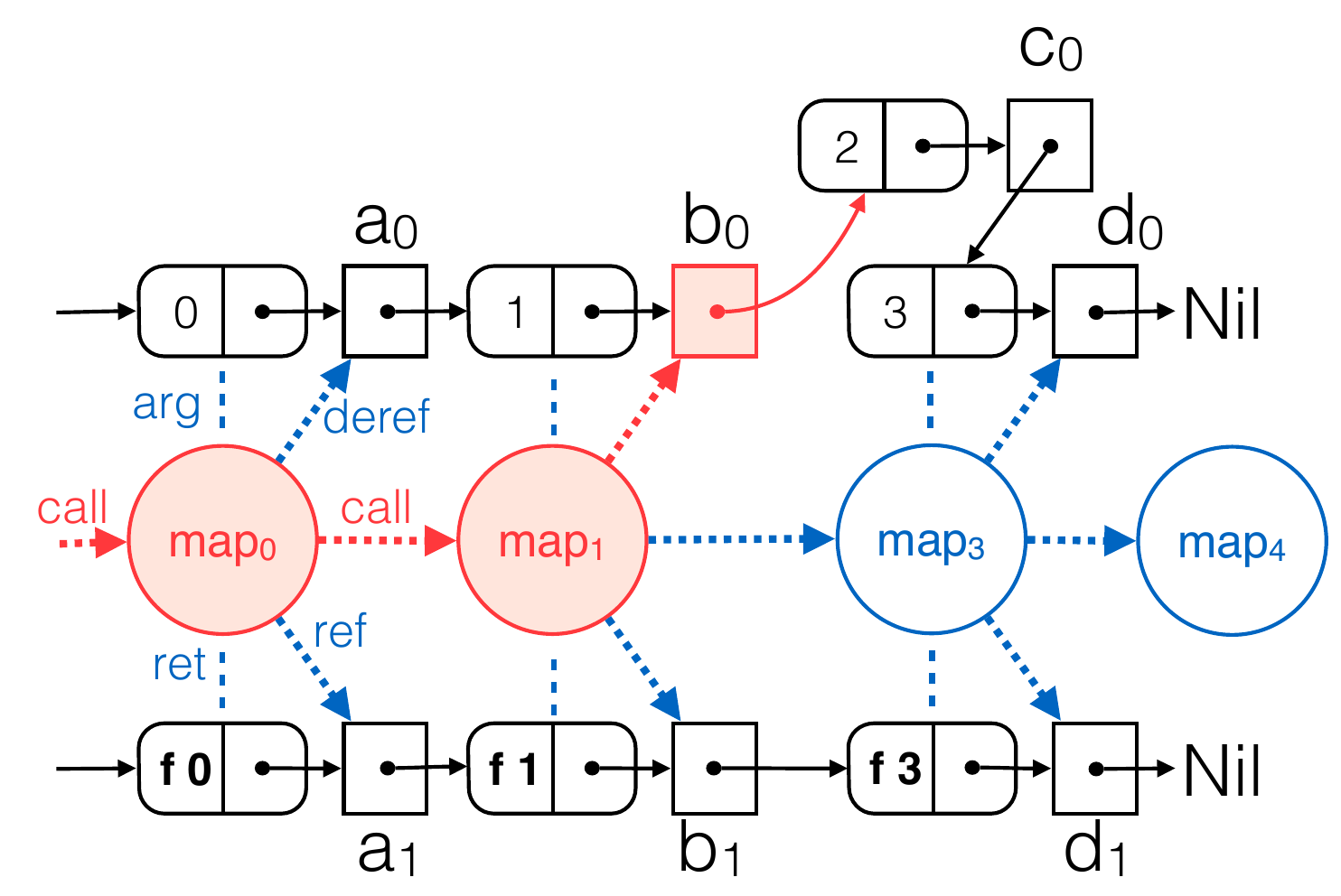}
     \caption{\Adapton, after insertion update}
     \label{fig:structbefore}
   \end{subfigure}

\medskip

   \begin{subfigure}{\columnwidth}
     \centering
    \includegraphics[width=2.9in]{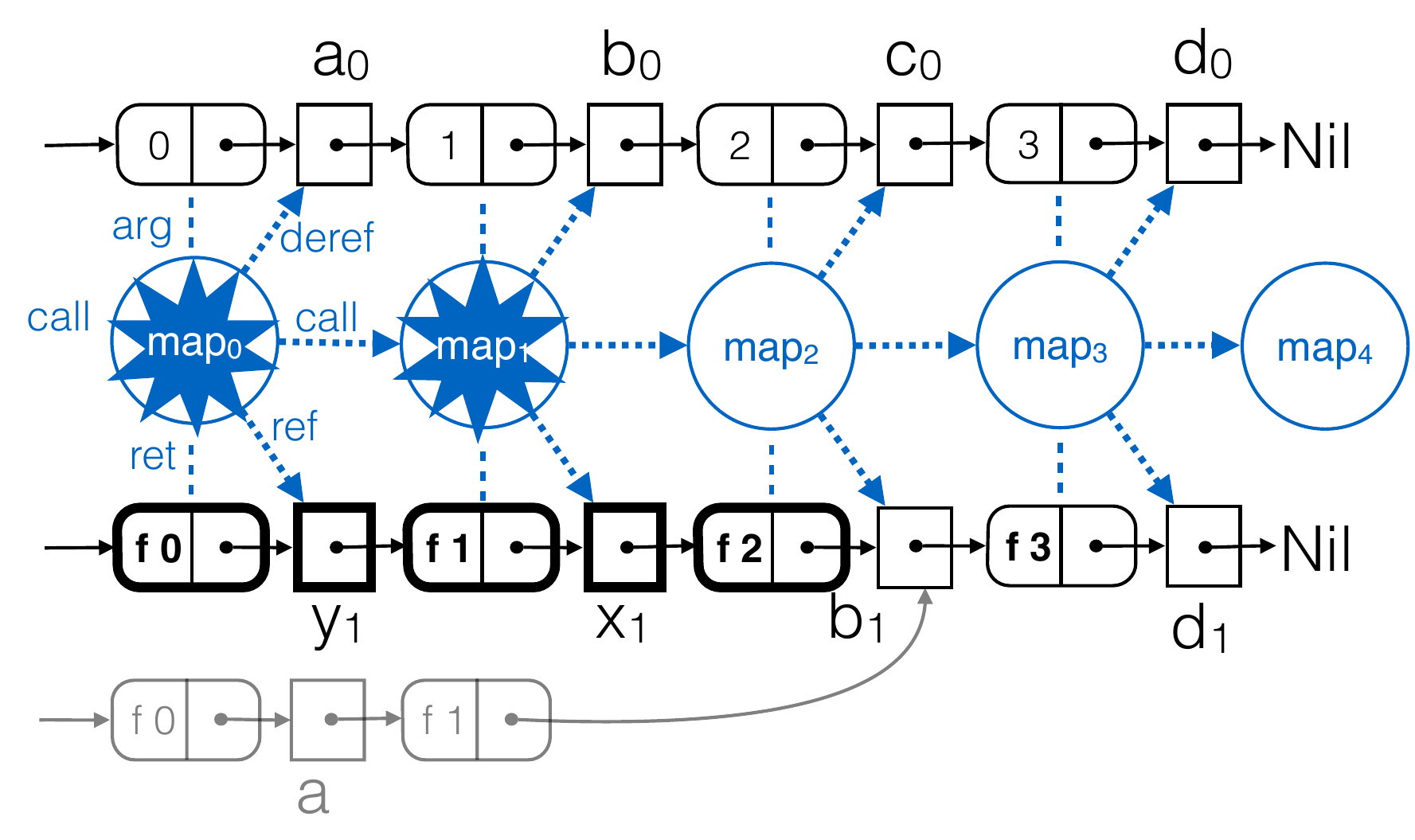}
     \caption{\Adapton, after change propagation}
     \label{fig:structafter}
  \end{subfigure}
  \caption{Incremental computation of \cod{map} in \Adapton}
  \label{fig:structexample}
\end{figure}

To understand why, consider
Figure~\ref{fig:structbefore}, which illustrates what
happens after the list update.  In this figure, the initial input and
output lists are shown in black at the top and the bottom of the
figure, respectively.
The middle of the figure shows the \emph{demanded computation graph}
(DCG), which records each recursive call of \cod{map} and its dynamic
dependencies. Here, nodes \cod{map}$_0$, \cod{map}$_1$, \cod{map}$_3$ and
\cod{map}$_4$ correspond to the four calls to \cod{map}. For
each call, the DCG records the arguments, the result, and the
computation's effects. Here, the effects are: dereferencing a pointer;
making a recursive call; and allocating a \cod{ref} cell in the
output list.  We label the arrows/lines of the first node only, to avoid
clutter; the same pattern holds for \cod{map}$_1$ and \cod{map}$_3$.

In the input and output, the tail of each \cod{Cons}~cell (a rounded
box) consists of a reference (a square box).  The input list is labeled
$\textsf{a}_0,\textsf{b}_0,\textsf{d}_0$, and the output list is labeled
$\textsf{a}_1,\textsf{b}_1,\textsf{d}_1$.
In \Adapton, structural matching determines the labels chosen by the
inner layer, and in particular, whenever the inner layer allocates a
reference cell to hold a value that is already contained in an existing
reference cell, it reuses this first cell and its label.

After the list is updated, \Adapton dirties all the computations that
transitively depend on the changed reference cell, $\textsf{b}_0$. The
dirtied elements are shaded in \textcolor{red}{red}. Dirtying is how
\Adapton knows that previously memoized results cannot be reused
safely without further processing.
\Adapton processes dirty parts of the DCG into clean computations
on demand, when they are demanded by the outer program.  To do so, it
either re-executes the dirty computations, or
verifies that they are unaffected by the original set of changes.

Figure~\ref{fig:structafter} shows the result of recomputing the output following
the insertion change, using this mechanism.  When
the outer program recomputes \cod{map f l0} (shown as \cod{map}$_0$),
\Adapton will clean (either recompute or reuse) the dirty nodes of the
DCG.
First, it re-executes computation \cod{map}$_1$, because it is the
first dirty computation to be affected by the changed reference
cell. We indicate re-executed computations with stars in their DCG
nodes.
Upon re-execution, \cod{map}$_1$ dereferences $\textsf{b}_0$ and
calls \cod{map} on the inserted \cod{Cons} cell holding \cod{2}.
This new call \cod{map}$_2$ calls \cod{f 2} (not shown),
dereferences $\textsf{c}_0$, calls \cod{map}$_3$'s computation
\cod{map f (Cons(3,d$_0$))}, allocates
$\textsf{b}_1$ to hold its result and returns \cod{Cons(f 3, d$_1$)}.

The recomputation of \cod{map}$_2$ exploits two instances of reuse. First, when
it calls $\cod{map f (Cons(3,d$_0$))}$, \Adapton reuses this portion of the DCG and
the result it computed in the first run.  \Adapton knows that the
prior result of $\cod{map f (Cons(3,d$_0$))}$ is unchanged because \cod{map}$_3$
is not dirty.  Notice that even if the tail of the list were much
longer, the prior computation of \cod{map}$_3$ could still be reused, since
the insertion does not affect it.

Second, when \cod{map}$_2$ allocates the reference cell to hold
the result of
$\cod{map f (Cons(3,d$_0$)}$, i.e.\ \cod{Cons(f 3, d$_1$)}, it reuses and shares
the existing reference cell \textsf{b}$_1$ that already holds this
content. Notice that this maintains our labeling invariant, so we
can continue to perform structural matching by comparing
labels.
As a side effect, it also improves performance by avoiding allocation
of an (isomorphic) copy of the output.\footnote{Note that the outer
  layer may cause two cells with the same contents to be labeled
  differently: If two cells are initially allocated with different
  contents, but then the outer layer mutates one cell to contain the
  contents of the other, the cells will still have different
  labels and hence will not match in \Adapton. This is a practical
  implementation choice, since otherwise \Adapton might
  need to do complicated heap operations to merge the identities of
  two cells. At worst, this choice causes \Adapton to miss some minor opportunities
  for reuse.
  However, this choice is consistent with the standard implementation of hash-consing,
  which only aims to share immutable structures~\citep{Allen1978,filliatre2006type}.
}

So far, \Adapton has successfully reused subcomputations, but
consider what happens next.
When the call \cod{map}$_2$ completes, \cod{map}$_1$
resumes control and allocates a reference cell to hold \cod{Cons(f 2,
  b$_1$)}.  Since no reference exists with this content, it
allocates a fresh reference $\textsf{x}_1$.  New references are shown
in bold.  The computation then returns \cod{Cons(f 1,
  x$_1$)}, which does not match its prior return value \cod{Cons(f 1,
  b$_1$)}.
Since this return value has changed from the prior run, \Adapton
re-runs \cod{map}$_1$'s caller, \cod{map}$_0$.  This consists of
re-running \cod{f 0}, reusing the (just recomputed, hence no longer dirty) computation
\cod{map}$_1$, and allocating a reference to hold its result
\cod{Cons(f 1, x$_1$)}.  Again, no reference exists yet with this
content (\cod{x$_1$} is a fresh tail pointer), so \Adapton allocates
a fresh cell~\cod{y$_1$}.
Finally, the call to \cod{map}$_0$ completes, returning a new list
prefix with the \emph{same content} (\cod{f 0} and \cod{f 1}) as in
the first run, but with new reference cell identities (\cod{y$_1$} and
\cod{x$_1$}).

In this example, small changes cascade into larger changes
because \Adapton identifies reference cells \emph{structurally}
based on their contents.  Thus, the entire prefix of
the output list before the insertion is reallocated and recomputed,
which is much more work than should be necessary.

\subsection{The Nominal Approach}
\label{sec:ovw-nominal}

We can solve the problems with structural matching by giving the
programmer explicit \emph{names} to control reuse. In this particular
case, we aim to avoid re-computing \cod{map}$_0$
and any
preceding computations.  In particular, we wish to re-compute only
\cod{map}$_1$ (since it reads a changed pointer)
and to compute
\cod{map}$_2$, the mapping for the inserted \cod{Cons} cell.

The first step is to augment mutable lists with names provided by
\NominalAdapton:
\begin{OCaml}
  type list = Nil | Cons of int * name * (list ref)
\end{OCaml}
Globally fresh names are generated either
non-\-deter\-min\-istic\-ally via \cod{new}, or from an
existing name via \cod{fork}. In particular, \cod{fork n} returns a
pair of distinct names based on the name~\cod{n} with the property
that it always returns the \emph{same} pair of names given the same
name \cod{n}.  In this way, the inner layer can
\emph{deterministically} generate additional names from a given one to
enable better reuse.
Finally, when the programmer allocates a reference cell, she explicitly
indicates which name to use, e.g.\ \cod{ref(n,1)}
instead of \cod{ref(1)}.

Now, when the list is created, the outer layer calls \cod{new} to
generate fresh, globally distinct names for each \cod{Cons} cell:
\begin{OCaml}
    let l3 = Cons(3, new, ref(new, Nil))  (* l3 = [3] *)
    let l1 = Cons(1, new, ref(new, l3 ))  (* l1 = [1; 3] *)
    let l0 = Cons(0, new, ref(new, l1 ))  (* l0 = [0; 1; 3] *)
\end{OCaml}

When the inner layer computes with the list, it uses the names in
each \cod{Cons} cell to indicate dependencies between the inputs and
outputs of the computation. In particular, we rewrite \cod{map} as
follows:
\clearpage
\begin{OCaml}
    let rec map f xs = memo( match xs with
      | Nil -> r
      | Cons(x, nm, xs) ->
          let nm1, nm2 = fork nm in
            Cons(f x, nm1, ref (nm2, map f (!xs))) )
\end{OCaml}
Unlike the outer program, which chooses reference names using \cod{new},
the inner program uses \cod{fork} to relate the
names and references in the output list to the names in the
input list.

\begin{figure}[t]
   \begin{subfigure}{\columnwidth}
     \centering
    \includegraphics[width=3in]{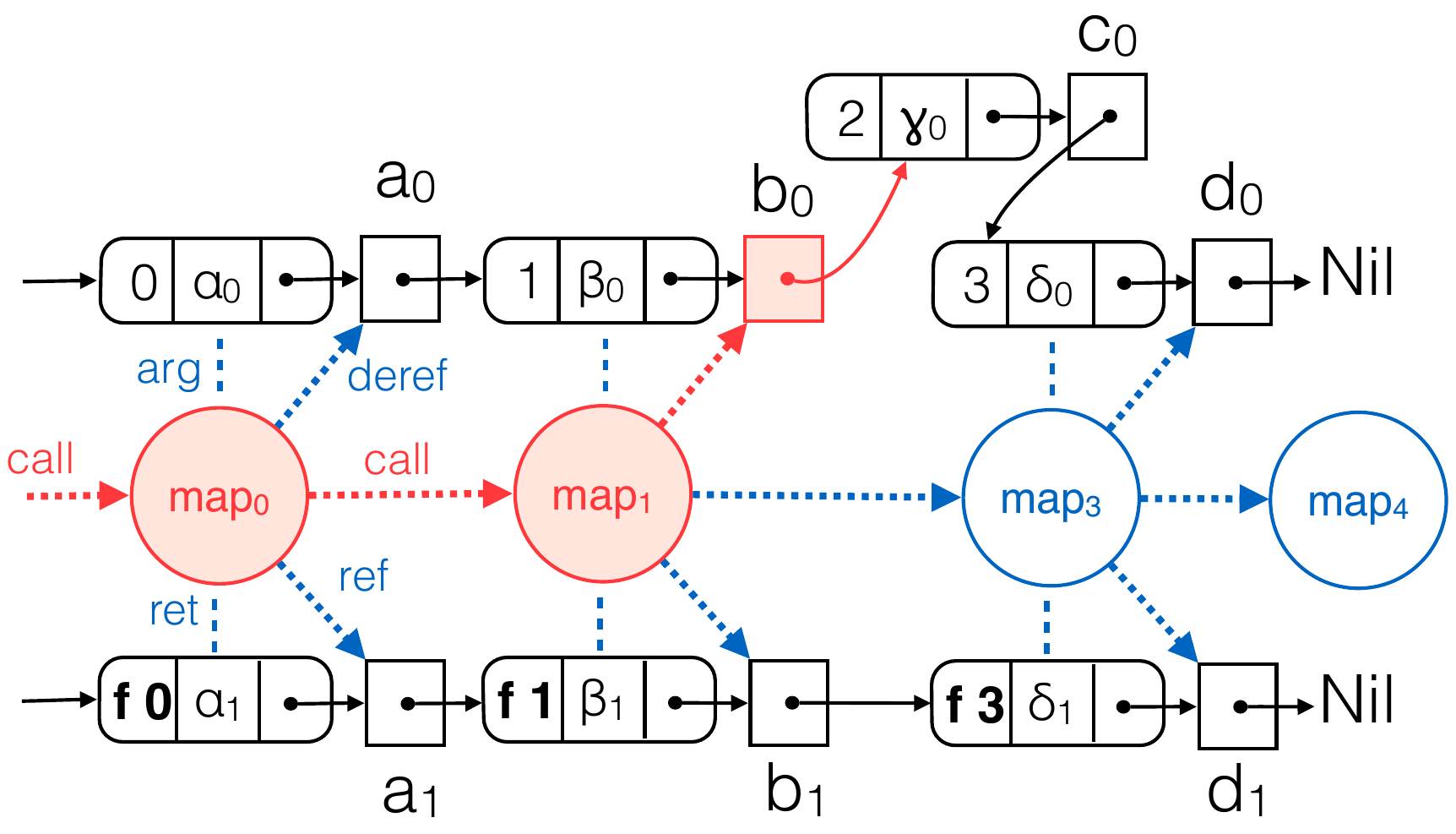}
     \caption{\NominalAdapton, after insertion update}
     \label{fig:nominalbefore}
  \end{subfigure}

\medskip

   \begin{subfigure}{\columnwidth}
     \centering
    \includegraphics[width=3.2in]{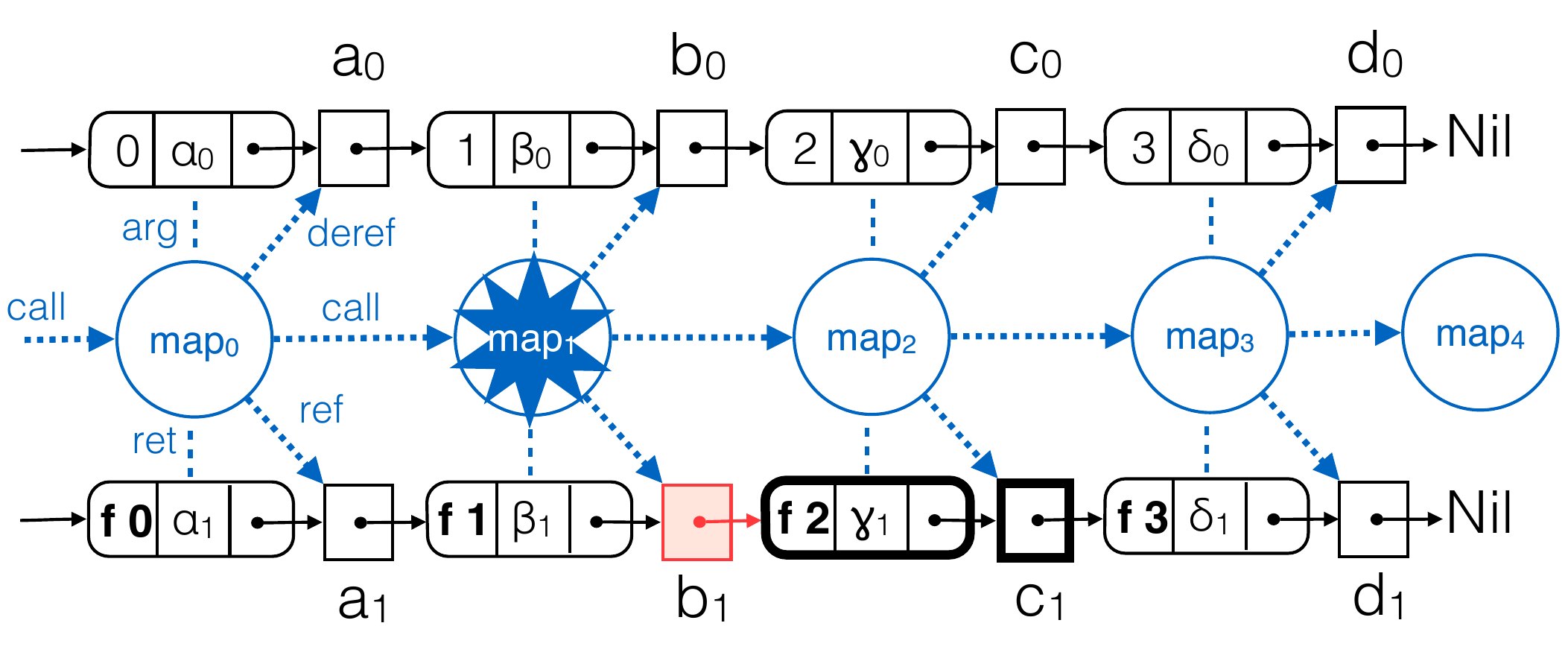}
    \vspace{-0.5em}
     \caption{\NominalAdapton, after change propagation}
     \label{fig:nominalafter}
  \end{subfigure}
  \caption{IC of \cod{map} with \NominalAdapton}
  \label{fig:overviewnominal}
\end{figure}

Now consider applying this function and making the same change as
above:
\begin{OCaml}
    let m0 = map f l0 in
    (tl l1) := Cons(2, new, ref l3)
\end{OCaml}
Figure~\ref{fig:overviewnominal} shows what happens.  The initial
picture in Figure~\ref{fig:nominalbefore} is similar to
the structural case in Figure~\ref{fig:structbefore}, except the input and output lists
additionally contain names $\alpha_0,\beta_0,\gamma_0$ and $\delta_0$.
The first part of the recomputation is the same: \NominalAdapton recomputes
\cod{map}$_1$, which reads the mutated reference~\cod{b}$_0$.
In turn, it recomputes \cod{map}$_2$, which reuses the call to \cod{map}$_3$
 to compute \cod{Cons(f 2, $\gamma_1$, c$_1$)}.
The recomputation of \cod{map}$_2$
returns a different value than in the prior run, with the new head
value \cod{f 2}.

At this point, the critical difference occurs.  Even though the result of
\cod{map}$_2$ is distinct from any list in the prior run, the call
\cod{map}$_1$ allocates the \emph{same} ref cell~\cod{b$_0$} as
before, because the name it uses for this allocation, $\beta_1$, is
the same as before. 
In the figure, \cod{fork $\beta_0$}~$\mapsto$ ($\beta_1$,$\beta_2$),
where $\beta_1$ becomes the name in the output list and
$\cod{b}_1 = \kw{ref}~\beta_2$ identifies the reference cell in its tail.
\NominalAdapton dirties the reference \cod{b}$_1$, to ensure that any
dependent computations will be cleaned before their results are reused. Due to this reuse, the
result of the call \cod{map}$_1$ is identical to its prior result: The
value of \cod{f 1} is unaffected, and the tail pointer \cod{b}$_1$ was
reused exactly.
Next, \NominalAdapton examines \cod{map}$_0$ and all prior calls. Because
the return value of \cod{map}$_1$ did not change, \NominalAdapton
simply marks the DCG node for its caller, \cod{map}$_0$, as clean; no
more re-evaluation occurs. This cleaning step breaks the cascade of
changes that occurred under \Adapton.  Prior computations are now
clean, because they only depend on clean nodes.

As a result of this difference in behavior, \NominalAdapton is able to
reuse all but two calls to function~\cod{f} for an insertion at any
index~$i$, while \Adapton will generally re-execute all $i-1$ calls to
function~\cod{f} that precede the inserted cell.
Moreover, \Adapton allocates a new copy of the output prefix (from 0
to $i$), while \NominalAdapton reuses all prior allocations.
Our experiments (\secref{experiments}) confirm that these differences
make \Adapton over 10$\times$ slower than \NominalAdapton, even for
medium-sized lists (10k elements) and cheap instances of~\cod{f}~(integer
arithmetic).

\subsection{Enforcing that Nominal Matching is Correct}
\label{sec:correctnaming}

Putting the task of naming in the programmer's hands can
significantly improve performance, but opens the possibility of
mistakes that lead to correctness problems. In particular, a
programmer could use the same name for two different objects:
\begin{OCaml}
    let y = ref n 1
    let z = ref n false
\end{OCaml}
Double use leads to problems since the variables \cod{y} and \cod{z} have
distinct types, yet they actually reference the same nominal object,
the number or boolean named~\cod{n}.
Consequently, the dereferenced values of~\cod{y} and~\cod{z} are sensitive to
the order of the allocations above, where the last allocation
``wins.''  This imperative behavior is undesirable because it is
inconsistent with our desired from-scratch semantics, where allocation
always constructs new objects.

To forbid double use errors,
our implementation uses an efficient dynamic check.
As the DCG evolves during program execution, \NominalAdapton maintains
a stack of DCG nodes, its \emph{force stack}, which consists of those
DCG nodes currently being forced (evaluated, re-evaluated, or reused).
When nominal matching re-associates a name with a different value or computation than a prior usage,
it overwrites information in the DCG, and
it dirties the old use and its transitive dependencies in the DCG.
To check that a nominal match is unambiguous, we exploit a key
invariant: A name is used ambiguously by a nominal match if and
  only if one or more DCG nodes on the force stack are dirtied when
said nominal match occurs.
If no such node exists, then the name is unambiguous.
\NominalAdapton implements this check %
by maintaining, for each DCG node, a bit
that is set and unset when the node is pushed and popped from
the force stack, respectively. This implementation is very efficient,
with $O(1)$ overhead.

Returning to the example above, the allocation of \cod{z} nominally
matches the allocation on the prior line, for \cod{y}.  Since \cod{1}
and \cod{false} are not equal, the nominal match dirties the DCG node
that allocates \cod{1}, which is also the ``current'' DCG node, and
thus is on the top of the force stack.  Hence, \NominalAdapton
raises an exception at the allocation of \cod{z}, indicating that
\cod{n} is used ambiguously. %

Note that because this check is dynamic and based on the DCG, it works
even when ambiguous name uses are separated across function or module
boundaries. This is important since, in our experience, most name
reuse errors are not nearly as localized as the example above. This
dynamic check was essential for our own development process; without
it, nominal mistakes were easy to make and nearly impossible to
diagnose.

\subsection{Namespaces}

Unfortunately, forbidding multiple uses of names altogether prevents
many reasonable coding patterns. For example, suppose we want to map
an input list twice: 
\begin{OCaml}
    let ys = map f input_list
    let zs = map g input_list
\end{OCaml}
Recall that in the \cod{Cons} case of \cod{map} we use each
name in the input list to create a corresponding name in the output
list. As such, the two calls to \cod{map} result in \cod{ys} and
\cod{zs} having cells with the same names, which is forbidden by our
dynamic check.

We can address this problem by creating distinct \emph{namespaces} for
the distinct functions (\cod{f} versus \cod{g}), where the same names
in two different namespaces are treated as distinct. A modified version of
\cod{map} using namespaces would be written thus:
\begin{OCaml}
  let map' n h xs = nest(ns(n), map h xs)
\end{OCaml}
The code~\cod{nest(s,e)} performs the nested computation \cod{e} in
namespace~\cod{s}, and the code \cod{ns(n)} creates a name\-space
from a given name \cod{n}. Just as with references,
we must be careful about how namespaces correspond across
different incremental runs, and thus we seed a namespace with a given
name. Now, distinct callers can safely call \cod{map'} with distinct
names: 
\begin{OCaml}
    let n1, n2 = (new, new) in
    let xs = map' n1 f input_list in
    let ys = map' n2 g input_list in
\end{OCaml}
The result is that each name in the input list is used only once per
namespace: Names in \cod{map f} will be associated with the first
namespace (named by \cod{n1}), and names in \cod{map g} will associate
with the second name\-space (named by \cod{n2}).

\paragraph{Section summary.}
The use of names allows the programmer to control (1) how
mutable reference names are chosen the first time, and (2) how to
selectively reuse and overwrite these references to account for
incremental input changes from the outer layer.
These names are transferred from input to output through the data
structures that they help identify (the input and output lists here),
by the programs that process them (such as \cod{map}).
Sometimes we want to use the same name more than once, in different
program contexts (e.g., \cod{map f $\cdot$} versus \cod{map g
  $\cdot$}); we distinguish these program contexts using
namespaces.

\section{Programming with Names}
\Label{sec:patterns}

While \NominalAdapton's names are a powerful tool for improving
incremental reuse, they create more work for the programmer.
In our experience so far, effective name reuse follows
easy-to-understand patterns. Section~\ref{sec:fpn} shows how to
augment standard data structures---lists and trees---and operations
over them---maps, folds, and unfolds---to incorporate names in a way
that supports effective reuse. Section~\ref{sec:sets}
describes \emph{probabilistic tries}, a nominal data structure we
developed that efficiently implements incremental maps and
sets. Finally, Section~\ref{sec:imp} describes our implementation of
an incremental IMP
interpreter that takes advantage of these
data structures to support incremental evaluation of its imperative
input programs. The benefits of these patterns and structures are
measured precisely in  Section~\ref{sec:experiments}.

\subsection{General Programming Patterns}
\Label{sec:fpn}

Practical functional programs use a wide variety of programming
patterns; three particularly popular ones are mapping, folding,
and unfolding. We consider them here in the context of lists and
trees.

\paragraph*{Mapping.}

Maps traverse a list (as in Section~\ref{sec:overview}) or tree and produce
an output structure that has a
one-to-one correspondence with the input structure.  
We have
already seen how to incrementalize mapping by associating a name with
each element of the input list and using \cod{fork} to derive a
corresponding name for each element in the output list, thereby
avoiding spurious recomputation of whole list prefixes
on a change.

\paragraph*{Folding.}
Folds traverse a list or tree and reduce subcomputations to provide a
final result. Examples are summing list elements or finding the
minimum element in a tree.

If we implement folding in a straightforward way in \NominalAdapton,
the resulting program tends to perform poorly.  The problem is that
every step in a list-based reduction uses an accumulator or result
that induces a global dependency on all prior steps---i.e., every step
depends on the \emph{entire prefix} or the \emph{entire suffix} of the
sequence, meaning that any change therein necessitates recomputing the
step. 

The solution is to use an approach from parallel programming: Use
trees to structure the input data, rather than lists, to permit
expressing independence between sub-problems.
Consider the following code, which
defines a type \cod{tree} for trees of integers:
\newcommand{\baltreeoflist}{\cod{baltree\_of\_list}\xspace}
\newcommand{\baltreerec}{\cod{baltree\_rec}\xspace}
\newcommand{\treemin}{\cod{tree\_min}\xspace}
\begin{OCaml}
  type tree = Leaf | Bin of name * int * (tree ref) * (tree ref)
\end{OCaml}
Like lists, these trees use \cod{ref}s to permit their recursive
structure to change incrementally, and each tree node includes a name.
We can reduce over this tree in standard functional style:
\begin{OCaml}
  let tree_min tree = memo(match tree with
    | Leaf         -> max_int
    | Bin(n,x,l,r) -> min3(x, tree_min(!l), tree_min(!r)))
\end{OCaml}
If we later update the tree and recompute, we can reuse subtree minimum computations,
because the names are stable in the tree.

Below, we show the original input tree alongside two 
illustrations (also depicted as trees) of which element of each
subtree is the minimum element, 
before and after the replacement of element~$1$ with the new
element~$9$:

\begin{center}
\vspace{1mm}
\begin{tabular}{|c|c|c|}
  Original tree
  &
  \parbox{0.3\columnwidth}{\centering Minimums\\(pre-change)}
  &
  \parbox{0.3\columnwidth}{\centering Minimums\\(post-change)}
  \\[4mm]
  \hline
  \begin{tikzpicture} %
    \tikzset{level distance=20pt,sibling distance=0pt}
    \tikzset{grow'=up} %
    \tikzset{every tree node/.style={anchor=base west}} %
    \Tree [.$3$ [.$2$ [.$1$ ] [.$4$ ] ] [.$5$ {$\cdot$} [.$6$ ] ] ]
\end{tikzpicture}
&
\begin{tikzpicture} %
    \tikzset{level distance=20pt,sibling distance=0pt}
    \tikzset{grow'=up} %
    \tikzset{every tree node/.style={anchor=base west}} %
    \Tree [.$1$ [.$1$ [.$1$ ] [.$4$ ] ] [.$5$ {$\cdot$} [.$6$ ] ] ]
\end{tikzpicture}
&
\begin{tikzpicture} %
    \tikzset{level distance=20pt,sibling distance=0pt}
    \tikzset{grow'=up} %
    \tikzset{every tree node/.style={anchor=base west}} %
    \Tree [.$2$ [.$2$ [.$9$ ] [.$4$ ] ] [.$5$ {$\cdot$} [.$6$ ] ] ]
\end{tikzpicture}
\end{tabular}
\end{center}

Notice that while the first element~$1$ changed to~$9$, this only
affects the minimum result along one path in the tree: the path from
the root to the changed element.
In contrast, if we folded the sequence naively as a list, all the
intermediate computations of the minimum could be affected by a change
to the first element (or last element, depending on the fold
direction).
By contrast, the balanced tree structure (with expected logarithmic
depth) overcomes this problem by better isolating independent
subcomputations.  

Pleasingly, as first shown by \cite{PughTe89}, we can efficiently
build a tree probabilistically from an input list, and thus transfer
the benefits of incremental tree reuse to list-style
computations. Building such a tree is an example of unfolding,
described next.

\paragraph{Unfolding.}

Unfolds iteratively generate lists or trees using a ``step function''
with internal state. As just mentioned, one example is building a
balanced tree from a list. 
Unfortunately, if we implement unfolding in a straightforward way,
incremental computation suffers. In particular, we want similar
lists (related by small edits) to lead to similar trees, with many
common subtrees; meanwhile, textbook algorithms for building balanced
trees, such as AVL trees and splay trees, are too sensitive to changes
to individual list elements.

\newcommand{\ptr}{%
  \mathrel{\bullet\!\!\!\to}%
}

The solution to this problem is to construct a \emph{probabilistically
  balanced tree}, with expected $O(\log n)$ height for input list $n$.
The height of each element in the resulting tree is determined by a
function that counts the number of trailing
zero bits in a hash of the given integer.
For example, given input
elements $[a,b,c,d,e,f]$ with heights $[0,1,0,2,1,0]$, respectively,
then our tree-construction function will produce the binary
tree shown below:

\begin{center}
\vspace{1mm}
\begin{tikzpicture} %
    \tikzset{level distance=20pt,sibling distance=2pt}
    \tikzset{grow'=up} %
    \tikzset{every tree node/.style={anchor=base west}} %
    \Tree [.$d$ [.$b$ [.$a$ ] [.$c$ ] ] [.$e$ {$\cdot$} [.$f$ ] ] ]
\end{tikzpicture}
\end{center}

\citet{PughTe89} showed that this procedure induces a
probabilistically balanced tree, with similar lists inducing
similar trees, as desired.
Further, each distinct list of elements maps to exactly one tree
structure.  This property is useful in \NominalAdapton, since a
canonical structure is more likely to be reused than one that can
exhibit more structural variation. While past work has
  considered incremental computations over such balanced trees, in
  this work we find that the \emph{construction} of the tree from a
  mutable, changing sequence can also be efficiently incrementalized
  (Pugh's work focused only on a \emph{pure} outer program).

\subsection{Probabilistic Tries}
\Label{sec:sets}
\Label{sec:maps}

Inspired by probabilistic trees, we developed efficient, incremental
\emph{probabilistic tries}, which
use a different naming pattern in which certain names are \emph{external} to
the data structure.

We define tries as binary trees whose nodes hold a name and two
children (in reference cells), and whose leaves store data.  Here we
use integers for simplicity, but in general nodes would hold arbitrary
data (e.g., for maps, they would hold key-value pairs):
\begin{OCaml}
    type trie = Nil
                 | Leaf of int
                 | Bin of name * (trie ref) * (trie ref)
\end{OCaml}
The key idea of a probabilistic trie is to use a bit string to
identify two things at once: the element stored in the structure (via
its hash) and the path to retrieve that element, if it is present.  To
keep it simple, the code below assumes that all data elements have a
unique hash, and that the input trie is complete, meaning that all
paths are defined and either terminate in a \cod{Nil} or a
\cod{Leaf}. Our actual implementation of tries makes
neither assumption.

The first operation of a trie is \cod{find}, which returns either
\cod{Some} data element or \cod{None}, depending on whether data with
the given hash (a list of \cod{bool}s) is present in the trie.
\begin{OCaml}
    val find : trie -> bool list -> int option
    let rec find trie bits =
      match bits, trie with
      | [], Leaf(x) -> Some x
      | [], Nil     -> None
      | true::bits,  Bin(_,left,right) -> find (!left)  bits
      | false::bits, Bin(_,left,right) -> find (!right) bits
\end{OCaml}
The other operation on tries is \cod{extend n t b d} which, given an
input trie \cod{t}, a data element \cod{d} and its hash \cod{b},
produces a new trie with \cod{d} added to it:
\begin{OCaml}
   val extend : name -> trie -> bool list -> int -> trie
   let rec extend nm trie bits data = memo (
     match bits, trie with
     | [],           (Leaf _ | Nil)    -> Leaf(x)
     | false::bits, Bin(_,left,right) -> (* symmetric to below *)
     | true::bits,  Bin(_,left,right) ->
          let n1, n2, n3, n4 = fork4 nm in
          let right' = extend n1 (!right) bits data in
            Bin(n2, ref(n3, !left), ref(n4, right'))  )
\end{OCaml}

Critically, the first argument to \cod{extend} is an externally
provided \cod{nm} that is used to derive names (using \cod{fork})
for each \cod{ref} in the new path.
Thus, the identity of the trie returned by \cod{extend n t b d} only
depends directly on the name~\cod{n} and the inserted data, and
\emph{not} on the names or other content of the input trie~\cod{t}.

Any incremental program that sequences multiple trie extensions makes
critical use of this independence (e.g., the interpreter discussed
below).  To see how, consider two incremental runs of such a program
with two similar sequences of extensions, $[1,2,3,4,5,6]$ versus
$[2,3,4,5,6]$, with the same sequence of five names for common
elements~$[2,\ldots,6]$.  Using the name-based extension above, the
tries in both runs will use exactly the same reference cells. By
contrast, the structural approach will build entirely new tries in the
second run, since the second sequence is missing the leading~$1$ (a
different initial structure).  Similarly, using the names \emph{in the
  trie} to extend it will also fail here, since it will effectively
identify each extension by a global count, which in this case shifts
by one for every extension in the second run.
Using external names for each extension overcomes both of these
problematic behaviors, and gives maximal reuse.

\subsection{Interpreter for \textsc{IMP}}
\Label{sec:imp}
\Label{sec:interpreter}

Finally, as a challenge problem, we used \NominalAdapton to build an
interpreter for \textsc{IMP}~\citep{winskel}, a simple imperative
programming language. Since our
interpreter is incremental, we can efficiently recompute an
interpreted program's output after a change to the program code
itself. While Adapton requires incrementalized computations to be fully functional, implementing a purely functional interpreter for IMP allows the imperative object language to inherit the incrementality of the meta-language.

The core of our interpreter is a simple, big-step \cod{eval} function
that recursively evaluates an IMP command in some store (that is, a heap)
and environment, returning the final store and environment.

Commands include while loops, sequences, conditionals, assignments (of
arithmetic and boolean expressions) to variables, and array operations
(allocation, reading, and writing).  All program values are either
booleans or integers.  As in C, an integer also doubles as a
``pointer'' to the store.
The interpreter uses finite maps to represent its environment of
type~\cod{env} (a mapping from variables to \cod{int}s) and its
\cod{store} of array content (a mapping from \cod{int}s to
\cod{int}s).

For incremental efficiency, the interpreter makes critical use of the
programming patterns we have seen so far.
We use probabilistic tries to represent the finite maps for stores and
environments. Each variable assignment or array update in \textsc{IMP}
is implemented as a call to \cod{extend} for the appropriate trie.
Names have two uses inside the interpreter. Each call to
\cod{extend} requires a name, as does each recursive call to
\cod{eval}. Classic \adapton{} would identify each with the full
structural content of its input. For the IMP language, we require far
less information to disambiguate one program state from another.

Consider first the IMP language without while loops. Each subcommand
is interpreted at most once; as such, each program state and each
path created in the environment or store can be identified with the
particular program position of the subcommand being interpreted.

With the addition of while loops, we may interpret a single program
position multiple times. To disambiguate these, we thread loop counts
through the interpreter represented as a list of integers.
The loop count $[3, 4]$, for example, tells us we are inside two
while loops: the third iteration through the inner loop, within the
fourth iteration through the outer loop. We extend the
\NominalAdapton API to allow creating names not just by forking, but
also by adding in a count---here, names are created using the loop
count paired with the name of the program position. As such, the names
sufficiently distinguish recursive calls to the interpreter and paths
inside the environment and store.

\smallskip 

\begin{OCaml}
  val eval : nm * int list * store * env * cmd -> store * env
\end{OCaml}

In our experiments, we show that the combination of this naming
strategy and probabalistically balanced tries yields an efficient
incremental interpreter. Moreover, the interpreter's design provides
evidence that \NominalAdapton's programming patterns
are \emph{compositional}, allowing us to separately choose
how to use names for different parts of the program.

\section{Formal Development}
\Label{sec:corecalc}

The interaction between memoization, names, and the demanded computation
graph is subtle.  For this reason, we have distilled \NominalAdapton
to a core calculus called \calc, which represents the essence of the
\NominalAdapton implementation and formalizes the key algorithms
behind incremental computation with names.  We prove the fundamental
correctness property of incremental computation, dubbed
\emph{from-scratch consistency}, which states that incrementally
evaluating a program produces the same answer as re-evaluating it
from scratch.  This theorem also establishes that (mis)use of names
cannot interfere with the meaning of programs; while a poor use of
names may have negative impact on performance, it will not cause a
program to compute the wrong result.

We present the full theory but only sketch the consistency result here;
the details and proofs are in the extended version of this paper~\citep{arxiv-version}.

The formal semantics presented here differs considerably from that of
the original \adapton system~\citep{Adapton2014}.
In particular, the original semantics modeled the DCG as an idealized
cache of tree-shaped traces that remember all input--output
relationships, forever.
In contrast, the one developed here models the DCG as it is actually
implemented in both \adapton and \NominalAdapton, as a changing graph
whose nodes are memoized refs and memoized computations (thunks),
and whose edges are their dependencies.
Further, via names, this semantics permits more kinds of DCG mutation,
since the computations and values of a node can be overwritten via
nominal matching.
In summary, compared to the prior semantics, our theory more accurately
models the implementation, and our metatheory proves correctness in a
more expressive setting.

\begin{figure}[thbp]
~\hspace{-2ex}
{
\small
\begin{bnfarray}
\text{Pointers} &
p, q
      & \bnfas
      &
          k  @  \omega 
            \bnfalt
          \txtsf{root} 
       \\
\text{Names}  &
 k
      & \bnfas
      &
           \bullet 
              \bnfalt
           { k }{\cdot}{ \ottsym{1} } 
              \bnfalt
           { k }{\cdot}{ \ottsym{2} } 
       \\
\text{Namespaces}  &
\omega, \mu
      & \bnfas
      &
         \top
                \bnfalt
          \omega . k 
\\
\text{Values} &
v
      & \bnfas
      &
          x  %
              \bnfalt
          \ottsym{(}  v_{{\mathrm{1}}}  \ottsym{,}  v_{{\mathrm{2}}}  \ottsym{)}  %
    \BnfaltBRK
           \ottkw{inj} _{ i }~ v   %
    \BnfaltBRK
           \Grn{ \ottkw{nm} \, k }   %
          & \!\!\!\!\!\!\!\!\!\text{name}
    \BnfaltBRK
           \ottkw{ref} \,\Grn{ p }   %
          & \!\!\!\!\!\!\!\!\!\text{pointer to value (reference cell)}
    \BnfaltBRK
           \ottkw{thk} \,\Grn{ p }   %
          & \!\!\!\!\!\!\!\!\!\text{pointer to computation}
    \BnfaltBRK
           \Grn{\textbf{ns}\, \omega } 
          & \!\!\!\!\!\!\!\!\!\text{namespace identifier}
\end{bnfarray}

\begin{xbnfarray}
\bnfheader{Results} \\
 \mathrm{t}& \bnfas
     &
       \lambda  x  \ottsym{.}  e
           \bnfalt
       \ottkw{ret} \, v
\\[1ex]
\bnfheader{Computations}\\
e
     & \bnfas
     & \mathrm{t}
       \bnfalt
       e \, v
    \bnfalt
      f %
    \bnfalt
      \ottkw{fix} \, f  \ottsym{.}  e %
   \bnfalt
       \textbf{let}\, x \,{\leftarrow}\, e_{{\mathrm{1}}} \, \ottkw{in} \, e_{{\mathrm{2}}}  %
   \hspace{-18ex}~
   \bnfaltBRK
       \ottkw{case} \, \ottsym{(}  v  \ottsym{,}  x_{{\mathrm{1}}}  \ottsym{.}  e_{{\mathrm{1}}}  \ottsym{,}  x_{{\mathrm{2}}}  \ottsym{.}  e_{{\mathrm{2}}}  \ottsym{)}
       \bnfalt
       \ottkw{split} \, \ottsym{(}  v  \ottsym{,}  x_{{\mathrm{1}}}  \ottsym{.}  x_{{\mathrm{2}}}  \ottsym{.}  e  \ottsym{)}
   \hspace{-18ex}~
   \bnfaltBRK
       \textbf{thunk}(\Grn{ v }, e )  %
      & \text{create thunk $e$ at name $v$}
   \bnfaltBRK
      \ottkw{force} \, \ottsym{(}  v  \ottsym{)} %
      & \text{force thunk that $v$ points to}
   \bnfaltBRK
       \Grn{\textbf{fork}( v )}  %
      & \text{fork name $v$ into two halves $v{\cdot}1$, $v{\cdot}2$}
   \bnfaltBRK
       \textbf{ref}(\Grn{ v_{{\mathrm{1}}} }, v_{{\mathrm{2}}} )  %
       & \text{allocate $v_2$ at name $v_1$}
   \bnfaltBRK
      \ottkw{get} \, \ottsym{(}  v  \ottsym{)} %
          & \text{get value stored at pointer $v$}
   \bnfaltBRK
       \Grn{\textbf{ns}\,( v , x . e )}  %
           & \text{bind $x$ to a new namespace $v$}
   \bnfaltBRK
       \Grn{\textbf{nest}( v , e_{{\mathrm{1}}} , x . e_{{\mathrm{2}}} )} \!\!\!\! %
          & \text{evaluate $e$ in namespace $v$} \\
&&& \text{~~and bind $x$ to the result}
\end{xbnfarray}
}
\caption{Syntax}
\FLabel{fig:syntax}
\end{figure}

\begin{figure}[thbp]
{
\small
\begin{xbnfarray}
\bnfheader{Graphs} \\
 G  \ottsym{,}  H\! &
    \bnfas &
    \varepsilon
        & \text{empty graph}
        \bnfaltBRK
    G  \ottsym{,}   p {:} v 
        & \text{$p$ points to value $v$}
        \bnfaltBRK
    G  \ottsym{,}   p {:} e 
        & \text{$p$ points to thunk $e$ (no cached result)}
        \bnfaltBRK
    G  \ottsym{,}   p {:}( e , \mathrm{t} ) 
        & \text{$p$ points to thunk $e$ with cached result $t$}
       \bnfaltBRK
    G  \ottsym{,}  \ottsym{(}  p  \ottsym{,}  a  \ottsym{,}  b  \ottsym{,}  q  \ottsym{)}
        & \text{\small$p$ depends on $q$ due to action $a$,}
        \\ & & & \text{~~with status $b$}
\end{xbnfarray}

\begin{bnfarray}
\bnfheader{Edge actions} \\
a &
   \bnfas
   && \text{For edge $\ottsym{(}  p  \ottsym{,}  a  \ottsym{,}  b  \ottsym{,}  q  \ottsym{)}$, the computation at $p$\dots}
   \\
   &&
       \ottkw{alloc} \, v
          & \text{\dots created reference $v$ at $q$}
   \bnfaltBRK
      \ottkw{alloc} \, e
          & \text{\dots created thunk $e$ at $q$}
   \bnfaltBRK
      \ottkw{obs} \, v
          & \text{\dots read $q$'s value, which was $v$}
   \bnfaltBRK
      \ottkw{obs} \, \mathrm{t}
          & \text{\dots forced thunk $q$, which returned $t$}
\\[1ex]
\bnfheader{Edge statuses} \\
b &
   \bnfas &
       \txtsf{clean} 
      & \text{value or computation at sink is out of date}
   \bnfaltBRK
       \txtsf{dirty} 
      & \text{value or computation at sink is up to date}
\end{bnfarray}
}
\caption{Graphs}
\FLabel{fig:graph}
\end{figure}

\subsection{Syntax}

The syntax of \calc is defined in the style of
the call-by-push-value (CBPV) calculus~\citep{Levy99subsuming}, a
standard variant of the lambda calculus with an explicit thunk
mechanism. \Figureref{fig:syntax} gives the syntax of the language.
The non-highlighted features are standard, and
the \Grn{\text{highlighted}} forms are new in \calc.

\NominalAdapton follows \Adapton in supporting \emph{demand-driven}
incremental computation using a lazy programming model.  In \Adapton,
programmers can write \kw{thunk}$(e)$ to create a suspended
computation, or \emph{thunk}.
The thunk $v$'s value is computed only
when it is \emph{forced}, using syntax \kw{force}$(v)$.  Thunks also
serve as \Adapton's (and \NominalAdapton's) unit of incremental reuse:
if we want to reuse a computation, we must make a thunk out of it.  
The syntax
\cod{\kw{memo}($e$)} we used earlier is shorthand for
\cod{\kw{force}(\kw{thunk}($e, e$))}, where we abuse notation and treat $e$~as a name that identifies itself.\footnote{Notice that \NominalAdapton
thunks are also named, which provides greater control over reuse.}  This construction introduces a
\kw{thunk} that the program immediately \kw{force}s, eliminating
laziness, but supporting memoization.

CBPV distinguishes values, results (or \emph{terminal computations}),
and computations.  A computation $e$ can be turned into a value by
thunking it via $ \textbf{thunk}(\Grn{ v }, e ) $.  The first argument, the name $v$, is particular to
\NominalAdapton; ordinary CBPV does not explicitly name thunks.
Conversely, a value $v$ can play the role of a result $t$ via
$\ottkw{ret} \, v$; results $t$ are a subclass of computations $e$.

Functions $\lambda  x  \ottsym{.}  e$ are terminal computations; $e \, v$ evaluates
$e$ to a function $\lambda  x  \ottsym{.}  e'$ and substitutes $v$ for $e'$.  Note that the
function argument in $e \, v$ is a value, not a computation.

Let-expressions $ \textbf{let}\, x \,{\leftarrow}\, e_{{\mathrm{1}}} \, \ottkw{in} \, e_{{\mathrm{2}}} $ evaluate the computation $e_1$ first.
  The usual
  $\lambda$-calculus application $e_1\,e_2$ can be simulated by
  $ \textbf{let}\, x \,{\leftarrow}\, e_{{\mathrm{2}}} \, \ottkw{in} \, \ottsym{(}  e_{{\mathrm{1}}} \,  x   \ottsym{)} $.
Fixed points $\ottkw{fix} \, f  \ottsym{.}  e$ are computations, and so are fixed point variables $f$.

Given an injection into a disjoint union, $ \ottkw{inj} _{ i }~ v $,
  the \textbf{case} computation form eliminates the sum and computes
  the corresponding $e_i$ branch, with $v$ substituted for $x_i$.
Given a pair $(v_1, v_2)$, the computation $\ottkw{split} \, \ottsym{(}  v  \ottsym{,}  x_{{\mathrm{1}}}  \ottsym{.}  x_{{\mathrm{2}}}  \ottsym{.}  e  \ottsym{)}$
computes $e$, first substituting $v_1$ for $x_1$, and $v_2$ for $x_2$.

For more on (non-nominal, non-incremental) formulations of
CBPV, including discussion of \emph{value types} and
\emph{computation types}, see \citet{Levy99subsuming,LevyThesis}.

\paragraph{Graphs, Pointers, and Names.}
Graphs $G$ are defined in \Figureref{fig:graph}.
They represent the mutable
store (references), memo tables (which cache thunk results), and the
DCG\@.  Element $ p {:} v $ says that pointer $p$'s
current value is $v$. Element $ p {:} e $ says that pointer
$p$ is the name of thunk $e$. Element $ p {:}( e , \mathrm{t} ) $ says
that $p$ is the name of thunk $e$ with a previously computed result
$\mathrm{t}$ attached.  Element $\ottsym{(}  p  \ottsym{,}  a  \ottsym{,}  b  \ottsym{,}  q  \ottsym{)}$ is a DCG edge indicating that the thunk
pointed to by $p$ depends on node $q$, where $q$ could
either name another thunk or a reference cell.  Dependency edges also
reflect the \emph{action} $a$ that produced the edge and the edge's
\emph{status} $b$ (whether it is clean or dirty).

Pointers in \calc are represented as pairs $ k  @  \omega $,\footnote{The
  pointer $ \txtsf{root} $ is needed to represent the top-level ``thunk'' in
  the semantics, but will never be mapped to an actual value or
  expression.} where $ \Grn{ \ottkw{nm} \, k } $ was the name given as the
first argument in a call to \textbf{thunk} or \textbf{ref},
and $\omega$ was the namespace in which the call to \textbf{thunk}
or \textbf{ref} took place.
This namespace $\omega$ is either $\top$ for the top-level,
or some $ \mu . k_{{\mathrm{1}}} $
as set by a call to \textbf{nest}.
For the latter case, the program would first construct a
value $ \Grn{\textbf{ns}\,  \mu . k_{{\mathrm{1}}}  } $ by calling \textbf{ns} with
first argument $ \Grn{ \ottkw{nm} \, k_{{\mathrm{1}}} } $ while in namespace $\mu$.
Notice that
pointers and namespaces have similar structure, and similar assurances
of determinism
when creating named thunks, references,
and namespaces. Finally, names $k$ consist of the root name
$ \bullet $, while other names are created by ``forking'' existing
names: invoking
$ \Grn{\textbf{fork}(  \Grn{ \ottkw{nm} \, k }  )} $ produces names $ \Grn{ \ottkw{nm} \,   { k }{\cdot}{ \ottsym{1} }   } $ and %
$ \Grn{ \ottkw{nm} \,   { k }{\cdot}{ \ottsym{2} }   } $.

\newcommand{\nonincrcolorbox}[1]{%
      \colorbox{gray!10}{%
           \ensuremath{#1}\!\!%
      }%
}

\begin{figure*}[p]

\begin{drulepar}{$ G_{{\mathrm{1}}}   \vdash ^{ p }_{ \omega }  e   \Downarrow   G_{{\mathrm{2}}}  ;  \mathrm{t} $}{\ottcom{Under graph
      $G_{{\mathrm{1}}}$, evaluating expression $e$ as part of thunk
      $p$ in namespace $\omega$ yields $G_{{\mathrm{2}}}$ and $\mathrm{t}$.}}
~
\vspace{-3ex} %
\vspace{-0ex} %
~\\
\text{Rules \textbf{common} to the non-incremental and incremental systems:}\hfill
~\\
\runonmathsz{9pt}
~\!\!
    \ottdruleEvalXXterm{}
    \and
    \ottdruleEvalXXapp{}
    \vspace{-0.5ex}
    \\
    \ottdruleEvalXXfix{}
    \and
    \ottdruleEvalXXbind{}
\vspace{-0.5ex}
\\
\ottdruleEvalXXcase{}
\and
\ottdruleEvalXXsplit{}
\vspace{-2.2ex}
\\
\ottdruleEvalXXfork{}
\vspace{-0.5ex}
\\
\ottdruleEvalXXnamespace{}
\and
\ottdruleEvalXXnest{}
\end{drulepar}

\vspace{-9.0ex}
~\\

$~$\hrule$~$ \\[-2.5ex]
$~$\hrule$~$

\vspace{-2.5ex}

\begin{mathpar}
\text{Rules \textbf{specific} to the (\nonincrcolorbox{\!\text{non-incremental}\;} $||$ incremental) systems:}\hfill
\vspace{-0.2ex}
\\
\runonmathsz{9pt}
~\hspace{-1.0ex}
\begin{array}{l||l}
  \arrayenvc{\nonincrcolorbox{\ottdruleEvalXXrefPlain{}}}
  &
  \arrayenvc{
    \ottdruleEvalXXrefDirty{}
    \\[2.5ex]
    \ottdruleEvalXXrefClean{}
  }
\end{array}
\vspace{1ex}
\\
~\hspace{-2.5ex}
{
\renewcommand{\tweakrulename}{\!\!\runonfontsz{8pt}}
        \begin{array}{l@{~}||@{\,}l}
          \arrayenvc{\nonincrcolorbox{\ottdruleEvalXXthunkPlain{}}}
          &
          \arrayenvc{
            \ottdruleEvalXXthunkDirty{}
            \\[2.5ex]
            \ottdruleEvalXXthunkClean{}
          }
        \end{array}
}
\vspace{1ex}
\\
\begin{array}{l||l}
  \nonincrcolorbox{\ottdruleEvalXXgetPlain{}}
  &
  \ottdruleEvalXXgetClean{}
\end{array}
\vspace{1ex}
\\
~\hspace{-2ex}
\begin{array}{l@{~}||@{\;}l}
  \nonincrcolorbox{\arrayenvcl{\ottdruleEvalXXforcePlain{}}}
  &
  \arrayenvcl{
      \ottdruleEvalXXforceClean{}
      \vspace{2.5ex}
      \\
      \ottdruleEvalXXscrubEdge{}
      \vspace{2.5ex}
      \\
      \ottdruleEvalXXcomputeDep{}
  }
\end{array}
\end{mathpar}

\lesscaptionspacing
\caption{Evaluation rules of \calc; vertical bars separate
  non-incremental rules (left, \protect\nonincrcolorbox{\text{shaded\;}}) from incremental rules (right)}
\FLabel{fig:eval}
\end{figure*}

\subsection{Semantics}

We define a big-step operational semantics (\Figureref{fig:eval}) with judgments
$ G_{{\mathrm{1}}}   \vdash ^{ p }_{ \omega }  e   \Downarrow   G_{{\mathrm{2}}}  ;  \mathrm{t} $, which states that under input graph
$G_{{\mathrm{1}}}$ and within namespace $\omega$, evaluating the computation $e$
produces output graph $G_{{\mathrm{2}}}$ and result $\mathrm{t}$,
where $e$'s evaluation was triggered by a previous force of thunk $p$.

\paragraph{Incremental and Reference Systems.}
In order to state and later prove our central meta-theoretic result, from-scratch consistency,
we define two closely-related systems of evaluation rules:
an \emph{incremental system} and a \emph{non-incremental} or \emph{reference}
system. The incremental system models \NominalAdapton programs that transform a
graph whose nodes are store locations (values and thunks) and whose edges represent
dependencies (an edge from $p$ to $q$ means that $q$ depends on $p$); the reference
system models call-by-push-value programs under a plain store (a graph with no edges).
Since it has no IC mechanisms,
everything the reference system does is, by definition, from-scratch consistent.

Rules above the double horizontal line in \Figureref{fig:eval} do not manipulate
the graph, and are common to the incremental and reference systems.  The shaded
rules, to the left of vertical double lines, are non-incremental rules that never
create edges and do not cache results.  The rules to their right create edges, store
cached results, and recompute results that have become invalid.

\subsubsection{Common Rules}

Several of the rules at the top of \Figureref{fig:eval}
are derived from standard CBPV rules.

Rules for standard language features (pairs, sums, functions, \textbf{fix},
and \textbf{let}) straightforwardly adapt the standard rules,
ignoring $p$ and $\omega$ and ``threading through'' input and output
graphs. For example, \txtsf{Eval-app} evaluates a function $e_1$ to get a terminal
computation $\lambda  x  \ottsym{.}  e_{{\mathrm{2}}}$ and substitutes the argument $v$ for $x$, threading
through the graph: evaluating $e_1$ produces $G_2$, which is given as input
to the second premise, resulting in output graph $G_3$.
Rule \txtsf{Eval-case}, applying \textbf{case} to a sum $ \ottkw{inj} _{ i }~ v $,
substitutes $v$ for $x_i$ in the appropriate case arm.

The last three shared rules are not standard: they deal with names and namespaces.

\begin{itemize}
\item \txtsf{Eval-fork} splits a name $k$ into children $k{\cdot}1$ and $k{\cdot}2$.
  Once forked, the name $k$ should not be used to allocate a new reference
  or thunk, nor should $k$ be forked again.

\item Running in namespace $\omega$, \txtsf{Eval-namespace} makes a new namespace
  $ \omega . k $ and substitutes it for $x$ in the body $e$.

\item Running in namespace $\omega$, \txtsf{Eval-nest} runs $e_1$ in a different namespace
  $\mu$ and then returns to $\omega$ to run $e_2$, with $x$ replaced by the result of
  running $e_1$.
\end{itemize}

\subsubsection{Non-Incremental Rules}

These rules cover allocation and use of references and thunks.  Like the incremental
rules (discussed below), they use names and namespaces; however, they do not
cache the results of thunks.  We discuss the non-incremental rules first, because they
are simpler and provide a kind of skeleton for the incremental rules.

\begin{itemize}
\item \txtsf{Eval-refPlain} checks that the pointer described by $q =  k  @  \omega $
   is fresh ($ \Grn{ q } \notin   \txtsf{dom}( G_{{\mathrm{1}}} )  $),
   adds a node $q$ with contents $v$ to the graph ($ G_{{\mathrm{1}}} \{ q {\mapsto} v \}   \ottsym{=}  G_{{\mathrm{2}}})$,
   and returns a reference $ \ottkw{ref} \,\Grn{ q } $.

\item \txtsf{Eval-thunkPlain} is similar to \txtsf{Eval-refPlain}, but creates a node
  with a suspended computation $e$ instead of a value.

\item \txtsf{Eval-getPlain} returns the contents of the pointer $q$.

\item \txtsf{Eval-forcePlain} extracts the computation stored in a thunk
  ($ G_{{\mathrm{1}}} ( q )   \ottsym{=}  e$) and evaluates it under the namespace of $q$;
  that is, if $q  \ottsym{=}   k  @  \mu $, it evaluates it under $\mu$.
  (The rule uses~$ \txtsf{namespace}(   k  @  \mu   )  = \mu$).
\end{itemize}

\begin{figure}[t]
  \runonfontsz{8pt}%
  \begin{tabular}[t]{c}
    $\begin{array}[t]{lll}
       \txtsf{exp}( G ,  p ) 
      =
          e
          ~~~~
          \text{if~}  G ( p )   \ottsym{=}  e%
          \text{~or~}%
           G ( p ) = ( e , \mathrm{t} ) 
      \\[1ex]
       G \{ p {\mapsto} v \} 
      = G'
          \\
          \quad
          \text{where, if $ \Grn{ p } \notin   \txtsf{dom}( G )  $, then $G'  \ottsym{=}  \ottsym{(}  G  \ottsym{,}   p {:} v   \ottsym{)}$}
          \\
          \quad
          \text{otherwise, if $G  \ottsym{=}  \ottsym{(}  G_{{\mathrm{1}}}  \ottsym{,}   p {:} v'   \ottsym{,}  G_{{\mathrm{2}}}  \ottsym{)}$ then $G'  \ottsym{=}  \ottsym{(}  G_{{\mathrm{1}}}  \ottsym{,}   p {:} v   \ottsym{,}  G_{{\mathrm{2}}}  \ottsym{)}$}
      \\[1ex]
       G \{ p {\mapsto} e \} 
      = G'
          \\
          \quad
          \text{where, if $ \Grn{ p } \notin   \txtsf{dom}( G )  $, then $G'  \ottsym{=}  \ottsym{(}  G  \ottsym{,}   p {:} e   \ottsym{)}$}
          \\
          \quad
          \text{otherwise, 
            \tabularenvl{if $G  \ottsym{=}  \ottsym{(}  G_{{\mathrm{1}}}  \ottsym{,}   p {:} e'   \ottsym{,}  G_{{\mathrm{2}}}  \ottsym{)}$
              or $G  \ottsym{=}  \ottsym{(}  G_{{\mathrm{1}}}  \ottsym{,}   p {:}( e' , \mathrm{t}' )   \ottsym{,}  G_{{\mathrm{2}}}  \ottsym{)}$
              \\
              then $G'  \ottsym{=}  \ottsym{(}  G_{{\mathrm{1}}}  \ottsym{,}   p {:} e   \ottsym{,}  G_{{\mathrm{2}}}  \ottsym{)}$}
           }
      \\[1ex]
       G \{ p {\mapsto}( e , \mathrm{t} )\} 
      = G'
          \\
          \quad
          \text{where, if $ \Grn{ p } \notin   \txtsf{dom}( G )  $, then $G'  \ottsym{=}  \ottsym{(}  G  \ottsym{,}   p {:}( e , \mathrm{t} )   \ottsym{)}$}
          \\
          \quad
          \text{otherwise, 
            \tabularenvl{if $G  \ottsym{=}  \ottsym{(}  G_{{\mathrm{1}}}  \ottsym{,}   p {:} e'   \ottsym{,}  G_{{\mathrm{2}}}  \ottsym{)}$
              or $G  \ottsym{=}  \ottsym{(}  G_{{\mathrm{1}}}  \ottsym{,}   p {:}( e' , \mathrm{t}' )   \ottsym{,}  G_{{\mathrm{2}}}  \ottsym{)}$
              \\
              then $G'  \ottsym{=}  \ottsym{(}  G_{{\mathrm{1}}}  \ottsym{,}   p {:}( e , \mathrm{t} )   \ottsym{,}  G_{{\mathrm{2}}}  \ottsym{)}$}
           }
      \vspace{1ex}
      ~\\
      \txtsf{all-clean-out}(G, p) ~\equiv~
          \forall (p, a, b, q) \in G.~ (b = \txtsf{clean})
      \\[1ex]
      \big(\txtsf{del-edges-out}(G_1,p) = G_2\big) \equiv
      \\
      \quad
        \big(\txtsf{nodes}(G_1) = \txtsf{nodes}(G_2)\big)
      \\
      \quad
        \text{and~}
        \forall q \ne p.~ \big((q,a,b,q') \in G_1\big) \Rightarrow \big((q,a,b,q') \in G_2\big)
      \\
      \quad
        \text{and~}
        \nexists a,b,q.~ (p,a,b,q) \in G_2
      \\[2mm]

      \big(\txtsf{dirty-paths-in}(G_1,p) = G_2\big) \equiv
      \\
      \quad
        \big(\txtsf{nodes}(G_1) = \txtsf{nodes}(G_2)\big)
      \\
      \quad
        \text{and~}
        \forall q_1, q_2.~
            \big((q_1,a,b',q_2) \in G_2\big) \Rightarrow \big((q_1,a,b,q_2) \in G_1\big)
      \\
      \quad
        \text{and~}
          \forall q_1, q_2.~
            \big((q_1,a,b,q_2) \in G_1\big) \Rightarrow
      \\
        \quad\quad
             \text{there exists~}(q_1,a,b',q_2) \in G_2~\text{such that}
      \\
        \quad\quad\quad
             \text{if}~\txtsf{path}(q_2,p,G_1)\text{~then~}
      (b' = \txtsf{dirty})
      \text{~else~}
      (b' = b)
      \\[2mm]
      \txtsf{path}(p,q,G) ~\equiv
      \arrayenvl{
                 ~~~~~
                  \big((p,a,b,q) \in G\big)
              \\[0.5ex]
                  \text{or~}\;
                  \big(
                       \exists p'.~ ((p,a,b,p') \in G)
                       \text{~and~}
                       \txtsf{path}(p',q,G)
                  \big)
      }
      \vspace{1ex}
      \\
       \txtsf{consistent-action} ( G ,  a ,  q  ) {:}
      \\
      \quad
           \txtsf{consistent-action} ( G ,  \ottkw{obs} \, v ,  q  ) \text{~when~}G(q) = v
      \\
      \quad
           \txtsf{consistent-action} ( G ,  \ottkw{obs} \, \mathrm{t} ,  q  ) \text{~when~}G(q) = (e,t)
      \\
      \quad
           \txtsf{consistent-action} ( G ,  \ottkw{alloc} \, v ,  q  ) \text{~when~}G(q) = v
      \\
      \quad
           \txtsf{consistent-action} ( G ,  \ottkw{alloc} \, e ,  q  ) \text{~when~}G(q) = e
    \end{array}$
  \end{tabular}
  \vspace{-1ex}
  \caption{Graph predicates and transformations}
  \FLabel{fig:predicates}
\end{figure}

\subsubsection{Incremental Rules}

Each non-incremental rule
corresponds to one or more incremental rules: the incremental semantics
is influenced by the graph edges, which are not present in the non-incremental
system.  For example, \txtsf{Eval-refPlain} is replaced by \txtsf{Eval-refDirty}
and \txtsf{Eval-refClean}.

These rules use some predicates and operations, such as \txtsf{dirty-paths-in},
that we explain informally as we describe the rules; they are fully defined
in \Figureref{fig:predicates}.

Incremental computation arises by making $G_{{\mathrm{1}}}$
a modification of a previously produced graph $G_{{\mathrm{2}}}$, and then
re-running $e$.  A legal modification involves replacing references
$ p {:} v $ with $ p {:} v' $ and dirtying all edges along
paths to $p$ in the DCG:
$ \txtsf{dirty-paths-in} ( G_{{\mathrm{1}}} , p ) $ is the same as $G_{{\mathrm{1}}}$ but with
edges on paths to $p$ marked $ \txtsf{dirty} $.

\paragraph{Creating Thunks.}
The DCG is constructed during evaluation.  The main rule for
creating a thunk is \txtsf{Eval-thunkDirty},
which
converts computation $e$ into a thunk by generating a
pointer $q$ from the provided name $k$, which couples the name
with the current namespace $\omega$. The output graph is updated to map
$q$ to $e$. If $q$ happens to be in the graph
already, all paths to it will be dirtied.  Finally, the rule adds edge
$\ottsym{(}  p  \ottsym{,}  \ottkw{alloc} \, e  \ottsym{,}   \txtsf{clean}   \ottsym{,}  q  \ottsym{)}$ to the output graph, indicating that
the currently evaluating thunk $p$ depends on $q$ and is
currently clean.

\paragraph{Forcing Thunks.}
Forcing a thunk that has not been previously computed, an operation that involves
one rule in the non-incremental system (\txtsf{Eval-forcePlain}), involves at least two
rules in the incremental system: \txtsf{Eval-computeDep} and \txtsf{Eval-forceClean}.

Rule \txtsf{Eval-forceClean} performs memoization:
Given $q$ pointing to $(e, t)$, where $t$ is the cached result of the thunk $e$,
if $q$'s outgoing edges are clean then $t$ is consistent and can be reused.
Thus \txtsf{Eval-forceClean} returns $t$ immediately without reevaluating $e$,
but adds an edge denoting that $p$ has observed the result of $q$ to be $t$.

Rule \txtsf{Eval-computeDep} applies when $e = \ottkw{force} \, \ottsym{(}   \ottkw{thk} \,\Grn{ p_{{\mathrm{0}}} }   \ottsym{)}$.
This rule serves two purposes: it forces thunks for the first time, and it
selectively recomputes until a cached result can be reused.

Its first premise $ \txtsf{exp}( G_{{\mathrm{1}}} ,  q )   \ottsym{=}  e'$
nondeterministically chooses some thunk $q$ whose suspended
expression is $e'$ (whether or not $q$ also has a cached result).
Its second premise $ \txtsf{del-edges-out} (  G_{{\mathrm{1}}} \{ q {\mapsto} e' \}  , q )   \ottsym{=}  G'_{{\mathrm{1}}}$
updates $G_{{\mathrm{1}}}$ so that $q$ points to $e'$ (removing $q$'s cached
result, if any), and deletes outward edges of $q$.
We need to delete the outward edges before evaluating $e'$ because
they represent what a \emph{previous} evaluation of $e'$ depended on.
The third premise recomputes $q$'s expression $e'$,
with $q$ as the current thunk and $q$'s namespace
component as the current namespace. %
In the fourth premise, the
recomputed result $\mathrm{t}'$ is cached, resulting in graph $G_2'$.

The final premise evaluates $\ottkw{force} \, \ottsym{(}   \ottkw{thk} \,\Grn{ p_{{\mathrm{0}}} }   \ottsym{)}$,
the same expression as the conclusion, but under a graph
$G_2'$ containing the result of evaluating $e'$.
In deriving this premise we may again apply \txtsf{Eval-computeDep}
to ``fix up'' other nodes of the graph, but will eventually end up with
the thunk $q$
chosen in the first premise being $p_{{\mathrm{0}}}$ itself.  In this case,
the last premise of \txtsf{Eval-computeDep} will be derived by
\txtsf{Eval-forceClean} (with $q$ instantiated to $p_{{\mathrm{0}}}$).

We skipped the fifth premise $ \txtsf{all-clean-out}( G'_{{\mathrm{2}}} , q ) $,
which demands that all outgoing edges from $q$ in the updated graph are
clean.
This consistency check ensures that the program has
not used the same name for two different thunks or references, e.g.,
by calling $ \textbf{thunk}(\Grn{  \Grn{ \ottkw{nm} \, k }  }, e_{{\mathrm{1}}} ) $ and later $ \textbf{thunk}(\Grn{  \Grn{ \ottkw{nm} \, k }  }, e_{{\mathrm{2}}} ) $ in
the same namespace $\omega$.  If this happens, the graph will first map
$ k  @  \omega $ to $e_{{\mathrm{1}}}$ but will later map it to $e_{{\mathrm{2}}}$.
Without this check, a computation $q$ that depends on both
$e_{{\mathrm{1}}}$ and $e_{{\mathrm{2}}}$ could be incorrect, because (re-)computing one of
them might use cached values that were due to the other. Fortunately,
this potential inconsistency is detected by \txtsf{all-clean-out}:
When a recomputation of
$q$ results in $ k  @  \omega $ being mapped to a different value, all
\emph{existing} paths into $ k  @  \omega $ are dirtied (by the last
premise of \txtsf{Eval-thunkDirty} above).  Since $q$
is one of the dependents, it will detect that fact and can
signal an error.

This fifth premise formalizes the double-use checking
algorithm first described in \secref{correctnaming}.  In
particular, each use of \txtsf{Eval-computeDep} corresponds to the
implementation pushing (and later popping) a node from its force
stack.  By inspecting the outgoing edges upon each pop, it
effectively verifies that each node popped from the stack is clean.

  Note that in the case $q = p_0$---where the ``dependency'' being
  computed is $p_0$ itself---the last premise could be derived
  by \txtsf{Eval-forceClean}, which looks up the result just computed by
  $ G'_{{\mathrm{1}}}   \vdash ^{ q }_{  \txtsf{namespace}(  q  )  }  e'   \Downarrow   G_{{\mathrm{2}}}  ;  \mathrm{t}' $.

\paragraph{Replacing Dirty Edges with Clean Edges.}
\txtsf{Eval-scrubEdge} replaces a dirty edge $\ottsym{(}  q_{{\mathrm{1}}}  \ottsym{,}  a  \ottsym{,}   \txtsf{dirty}   \ottsym{,}  q_{{\mathrm{2}}}  \ottsym{)}$
  with a clean edge $\ottsym{(}  q_{{\mathrm{1}}}  \ottsym{,}  a  \ottsym{,}   \txtsf{clean}   \ottsym{,}  q_{{\mathrm{2}}}  \ottsym{)}$.
  First, it checks that all edges out from $q_2$ are clean; this means
  that the contents of $q_2$ are up-to-date.  Next, it checks that the action $a$
  that represents $q_1$'s dependency on $q_2$ is consistent with
  the contents of $q_2$.  For example, if $q_2$ points to a thunk with
  a cached result $(e_2, t_2)$ and $a = \ottkw{obs} \, \mathrm{t}$, then the ``consistent-action''
  premise checks that the currently cached result $t_2$ matches the
  result $t$ that was previously used by $q_1$.

\paragraph{Creating Reference Nodes.}
 Like \txtsf{Eval-refPlain}, \txtsf{Eval-refDirty} creates a node $q$ with value $v$:
  $ G_{{\mathrm{1}}} \{ q {\mapsto} v \}   \ottsym{=}  G_{{\mathrm{2}}}$.  Unlike \txtsf{Eval-refPlain}, \txtsf{Eval-refDirty} does not check
  that $q$ is not in the graph: if we are recomputing, $q$ may already exist.
  So $ G_{{\mathrm{1}}} \{ q {\mapsto} v \}   \ottsym{=}  G_{{\mathrm{2}}}$ either creates $q$ pointing to $v$, or updates
  $q$ by replacing its value with $v$.  It then marks the edges along all paths
  into $q$ as dirty $ \txtsf{dirty-paths-in} ( G_{{\mathrm{2}}} , q )   \ottsym{=}  G_{{\mathrm{3}}}$; these are the paths from
  nodes that depend on $q$.

\paragraph{Re-creating Clean References.}
\txtsf{Eval-refClean} can be applied only during recomputation, and only
  when $G(q) = v$.  That is, we are evaluating
  $ \textbf{ref}(\Grn{  \Grn{ \ottkw{nm} \, k }  }, v ) $ and allocating the same value as the previous run.
  Since the values are the same, we need not mark any dependency edges
  as dirty, but we do add an edge to remember that $p$ depends on $q$.

\paragraph{Creating Thunks.}
\txtsf{Eval-thunkDirty} corresponds exactly to \txtsf{Eval-refDirty}, but for thunks
  rather than values.

\paragraph{Re-creating Thunks.}
\txtsf{Eval-thunkClean} corresponds to \txtsf{Eval-refClean} and does not
  change the contents of $q$.  Note that the condition $\ottsym{(}   \txtsf{exp}( G ,  q )   \ottsym{)}  \ottsym{=}  e$
  applies whether or not $q$ includes a cached result.
  If a cached result is present, that is, $G(q) = (e,t)$, it remains in the output graph.

\paragraph{Reading References.}
\txtsf{Eval-getClean} is the same as \txtsf{Eval-getPlain}, except that it adds an
  edge representing the dependency created by reading the contents of $q$.

\paragraph{The Rules vs.\ the Implementation.}
Our rules are not intended to be an ``instruction manual'' for building an implementation;
rather, they are intended to model our implementation.
To keep the rules simple, we underspecify two aspects of the implementation.

First, when recomputing an allocation in the case when the allocated value
is equal to the previously allocated value ($G(q) = v$), either \txtsf{Eval-refDirty}
or \txtsf{Eval-refClean} applies.  However, in this situation the implementation always
follows the behavior of \txtsf{Eval-refClean}, since that is the choice that avoids unnecessarily
marking edges as dirty and causing more recomputation.
An analogous choice exists for \txtsf{Eval-thunkClean} versus \txtsf{Eval-thunkDirty}.

Second, similarly to the original \adapton system, 
the timing of dirtying and cleaning is left open:
\txtsf{Eval-scrubEdge} can be applied to dirty edges with no particular connection to
the $p_0$ mentioned in the subject expression $\ottkw{force} \, \ottsym{(}   \ottkw{thk} \,\Grn{ p_{{\mathrm{0}}} }   \ottsym{)}$,
and the timing of recomputation via \txtsf{Eval-computeDep} is left
open as well.
Our implementations of dirtying and re-evaluation fix these open choices,
and they are each analogous to the algorithms found in the original
\adapton work.
For details, we refer the interested reader to Algorithm~1
in \cite{Adapton2014}.

\subsection{From-Scratch Consistency}
We show that an incremental computation modeled by our
evaluation rules has a corresponding non-incremental computation:
given an incremental evaluation of $e$ that produced $t$,
a corresponding non-incremental evaluation %
also produces $t$.  Moreover, the values and
expressions in the incremental output graph match those in
the graph produced by the non-incremental evaluation.

Eliding some details and generalizations, the from-scratch consistency
result is:

\begin{thm*}
  If incremental $\D_i$ derives $ G_{{\mathrm{1}}}   \vdash ^{ p }_{ \omega }  e   \Downarrow   G_{{\mathrm{2}}}  ;  \mathrm{t} $,
  then a non-incremental
  $\D_{ni}$ derives $  \lfloor  G_{{\mathrm{1}}}  \rfloor_{ P_{{\mathrm{1}}} }    \vdash ^{ p }_{ \omega }  e   \Downarrow    \lfloor  G_{{\mathrm{2}}}  \rfloor_{ P_{{\mathrm{2}}} }   ;  \mathrm{t} $
  where
  $  \lfloor  G_{{\mathrm{1}}}  \rfloor_{ P_{{\mathrm{1}}} }   \subseteq   \lfloor  G_{{\mathrm{2}}}  \rfloor_{ P_{{\mathrm{2}}} }  $
  and
  $P_{{\mathrm{2}}}  \ottsym{=}  P_{{\mathrm{1}}} \, \cup \,  \txtsf{dom}( W ) $.
\end{thm*}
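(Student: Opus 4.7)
The plan is to proceed by induction on the structure of the incremental derivation $\D_i$. For the rules shared between the two systems (those above the double horizontal line in \figref{eval}), the inductive hypothesis on each sub-derivation yields a non-incremental sub-derivation, and reassembling them with the same rule works because these rules neither inspect nor modify edges or cached thunk results---the very portions of the graph that $\lfloor \cdot \rfloor$ erases. For the incremental-specific allocation rules (\txtsf{Eval-refDirty}, \txtsf{Eval-refClean}, \txtsf{Eval-thunkDirty}, \txtsf{Eval-thunkClean}) and the read rule (\txtsf{Eval-getClean}), the effect on the projected store matches that of the corresponding ``Plain'' rule: the difference is entirely edge bookkeeping, together with the possibility (in the clean variants) of ``reallocating'' an existing $q$ with an equal value or thunk, which leaves the projected store unchanged.

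The difficulty concentrates in the three rules that justify reuse rather than compute directly: \txtsf{Eval-forceClean}, \txtsf{Eval-scrubEdge}, and \txtsf{Eval-computeDep}. I would strengthen the theorem to a joint invariant stating that whenever a thunk $q$ in $G$ has cached result $(e_q, \mathrm{t}_q)$ and all its outgoing edges are clean and action-consistent, the reference system derives $\lfloor G \rfloor_P  \vdash^q_{\txtsf{namespace}(q)} e_q \Downarrow \lfloor G \rfloor_P; \mathrm{t}_q$. With this invariant, \txtsf{Eval-forceClean} immediately supplies the non-incremental sub-derivation needed to apply \txtsf{Eval-forcePlain}. \txtsf{Eval-scrubEdge} preserves the invariant because it first checks \txtsf{all-clean-out} at the edge's target and \txtsf{consistent-action}, so no cached claim becomes invalidated by promoting the edge. \txtsf{Eval-computeDep} is handled by applying the IH to its recomputation premise at $q$ and then chaining with the IH on the final premise forcing $p_0$; critically, the \txtsf{all-clean-out}$(G'_2, q)$ premise is exactly the double-use check from \secref{correctnaming}, and it is what keeps the invariant intact across the update $G\{q \mapsto (e', \mathrm{t}')\}$---had $q$'s name been reused ambiguously, some in-stack dependent would carry a dirty edge and the rule would be inapplicable.

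The main obstacle will be stating and maintaining the clean-graph invariant across the nondeterministic applications of \txtsf{Eval-computeDep}, which may re-evaluate arbitrary thunks in any order before the targeted $p_0$ is cleaned. I expect to need a well-founded measure (e.g., the number of reachable dirty edges from $p_0$, strictly decreased by each recursive \txtsf{Eval-computeDep} invocation) to justify that the nested incremental derivations eventually reach a state where \txtsf{Eval-forceClean} applies. Supporting lemmas will show that $\lfloor \cdot \rfloor$ commutes with the graph updates $G\{p \mapsto \cdot\}$, $\txtsf{del-edges-out}$, and $\txtsf{dirty-paths-in}$ (since dirtying and edge deletion touch only bookkeeping outside the projection), that the store monotonically grows ($\lfloor G_1 \rfloor_{P_1} \subseteq \lfloor G_2 \rfloor_{P_2}$) because neither system deallocates, and that $P_2 = P_1 \cup \txtsf{dom}(W)$ follows directly from tracking the pointers that each allocation rule adds to the write-set $W$ during incremental evaluation.
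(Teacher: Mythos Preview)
Your overall strategy—induction on the incremental derivation, with a strengthened invariant about cached thunk results to discharge \txtsf{Eval-forceClean}—is the same as the paper's. But three concrete points in your proposal need correction.

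\medskip

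\textbf{The invariant is misstated.} You require that for each clean cached thunk $q$ with $(e_q,\mathrm{t}_q)$, the reference system derives $\lfloor G\rfloor_P \vdash^q e_q \Downarrow \lfloor G\rfloor_P;\mathrm{t}_q$ with \emph{identical} input and output stores. This cannot hold when $e_q$ allocates: \txtsf{Eval-refPlain} and \txtsf{Eval-thunkPlain} both require the allocated pointer to be fresh in the input, so if $P$ already contains $e_q$'s writes the premise fails, and if $P$ excludes them the output is strictly larger than $\lfloor G\rfloor_P$. The paper's invariant (its Definition of ``from-scratch consistency of a graph'') instead records, for each cached $q$, a \emph{past} non-incremental derivation $\lfloor G_q\rfloor_{P_q}\vdash e_q \Downarrow \lfloor G_q'\rfloor_{P_q\cup\txtsf{dom}(W_q)};\mathrm{t}_q$ together with the inclusion $\lfloor G_q'\rfloor_{P_q\cup\txtsf{dom}(W_q)}\subseteq \lfloor H\rfloor_{\txtsf{dom}(H)}$ into the current graph $H$. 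At \txtsf{Eval-forceClean} this stored derivation is lifted to the current, larger store by a separate \emph{weakening lemma} for the non-incremental system—an ingredient you do not mention but which is essential; without it you have no way to move the cached derivation forward across the intervening graph growth.

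\medskip

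\textbf{The well-founded measure is unnecessary.} You are performing induction on a \emph{given} finite derivation $\D_i$. Each application of \txtsf{Eval-computeDep} has its recomputation premise and its continuation premise as strict subderivations of $\D_i$, so the IH applies directly. There is no separate termination argument to make; the concern about counting dirty edges is misplaced.

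\medskip

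\textbf{The role of the IH on $\D_1$ in \txtsf{Eval-computeDep} is not ``chaining.''} The non-incremental derivation in the conclusion of this case comes \emph{entirely} from the IH applied to the last premise $\D_2$ (which evaluates $\ottkw{force}(\ottkw{thk}\,p_0)$ again under the updated graph $G_2'$). The IH on $\D_1$ is used only to re-establish the graph-level invariant for $G_2' = G_2\{q\mapsto(e',\mathrm{t}')\}$, so that $\D_2$'s IH may be invoked. The subcomputation $e'$ is \emph{not} inserted into the reference derivation at this point; it is deferred until $q$ is actually forced inside $\D_2$, where the stored invariant (plus weakening) supplies the needed \txtsf{Eval-forcePlain} premise. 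To connect the input graphs, the paper also uses a ``respect for write-set'' lemma to show $\lfloor H_1\rfloor_{P_1}=\lfloor G_2'\rfloor_{P_1}$, since $\D_1$'s writes are disjoint from $P_1$.
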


Here, $W$ is the set of pointers that $\D_i$ may allocate.
The restriction function $ \lfloor  G_{\ottmv{i}}  \rfloor_{ P_{\ottmv{i}} } $ drops
all edges from $G_{\ottmv{i}}$ and keeps only nodes in the set $P_{\ottmv{i}}$.
It also removes any cached results $t$.
The set $P_{\ottmv{i}}$ corresponds to the nodes in $G_{\ottmv{i}}$ that are present at
this point in the non-incremental derivation, which may differ
from the incremental derivation since \txtsf{Eval-computeDep}
need not compute dependencies in left-to-right order.

The full statement, along with definitions of $W$, the restriction
function, and lemmas, is in the extended version~\citep{arxiv-version} as
\iftr
\Theoremref{thm:fsc}.
\else
Theorem B.13.
\fi

\section{Implementation}
\label{sec:implementation}

We implemented \NominalAdapton as an OCaml library. In this section,
we describe its programming interface, data structures, and
algorithms.  Additional details about memory management appear in
Appendix~%
\iftr
\ref{sec:spacemanagement}.
\else
A.
\fi
The code for \NominalAdapton is freely available:
\begin{center}
\url{https://github.com/plum-umd/adapton.ocaml}
\end{center}

\subsection{Programming Interface}

\begin{figure}[t]
\begin{OCaml}
type name
val new : unit -> name
val fork : name -> name * name

type ('arg,'res) mfn
type ('arg,'res) mbody = ('arg,'res) mfn -> 'arg -> 'res
val mk_mfn : name -> ('arg,'res) mbody -> ('arg,'res) mfn
val call : ('arg,'res) mfn -> ('arg -> 'res)

type 'res athunk
val thunk : ('arg,'res) mfn -> name -> 'arg -> 'res athunk
val force : 'res athunk -> 'res

type 'a aref
val aref : name -> 'a -> 'a aref
val get : 'a aref -> 'a
val set : 'a aref -> 'a -> unit
\end{OCaml}
\nocaptionrule
\caption{Basic \NominalAdapton API}
\label{fig:adapton-api}
\end{figure}

\figref{adapton-api} shows the basic \NominalAdapton API.  Two
of the data types, \ml{name} and \ml{aref}, correspond exactly to
names and references in \secreftwo{overview}{corecalc}. The other data
types, \ml{mfn} and \ml{athunk}, work a little differently, due to
limitations of OCaml: In OCaml, we cannot type a general-purpose memo
table (containing thunks with non-uniform types), nor can we examine a
thunk's ``arguments'' (that is, the values of the variables in a
closure's environment).\footnote{Recall from the start of the
  previous section that memoized calls are implemented as thunks in
  \NominalAdapton.} 

To overcome these limitations, our implementation creates a tight
coupling between namespaces and memoized functions. The function
\ml{mk_mfn k f} takes a name \ml{k} and a function \ml{f} and returns
a \emph{memoized function} \ml{mfn}. The function \ml{f} must have
type \ml{('arg,'res) mbody}, i.e., it takes an \ml{mfn} and an
\ml{'arg} as arguments, and produces a \ml{'res}. (The \ml{mfn} is for
recursive calls; see the example below.)

Later on, we call \ml{thunk m k arg} to create a thunk of type
\ml{athunk} from the memoized function \ml{m}, with thunk name \ml{k}
(relative to \ml{m}'s namespace) and argument \ml{arg}. The code for
the thunk will be whatever function \ml{m} was created from. In other
words, in our implementation, only function calls can be memoized (not
arbitrary expressions), and each set of thunks that share the same
function body also share the same namespace.

Using this API, we can rewrite the \ml{map} code from
Section~\ref{sec:ovw-nominal} as follows:
\vspace*{1ex}

\begin{OCaml}
  let map f map_f_name =
    let mfn = mk_mfn map_f_name (fun mfn list ->
      match list with
      | Nil -> Nil
      | Cons(hd, n, tl_ref) ->
        let n1, n2 = fork n in
        let tl = get tl_ref in
        Cons(f hd, aref(n1, force(thunk mfn n2 tl))) )
    in fun list -> call mfn list
\end{OCaml}

The code above differs from the earlier version in that the programmer
uses \ml{mk_mfn} with the name \ml{map_f_name} to create a memo table
in a fresh namespace. Moreover, memoization happens directly on the
recursive call, by introducing a thunk (and immediately forcing
it).

\subsection{Implementing Reuse}

Much of the implementation of \NominalAdapton remains unchanged from
classic \Adapton.
Specifically, both systems use DCGs to represent dependency
information among nodes representing thunks and refs, and both systems
traverse their DCGs to dirty dependencies and to
later reuse (and repair) partially inconsistent graph components.
These steps were described in \secreftwo{overview}{corecalc}, and
were detailed further by \citet{Adapton2014}.

The key differences between \NominalAdapton and Classic \Adapton have
to do with memo tables and thunks.

\paragraph{Memo Tables and Thunks.}

\NominalAdapton memo tables are implemented as maps from names to DCG nodes, which
contain the thunks they represent.
When creating memo tables with~\ml{mk_mfn}, the programmer supplies
a name and an~\ml{mbody}. Using the name, the library checks for an
existing table (i.e., a namespace).  If none exists, it creates an empty table, registers
it globally, and returns it as an \ml{mfn}.  If a table
exists, then the library checks that the given~\ml{mbody} is
(physically) equal to the \ml{mbody} component of the existing~\ml{mfn}; it issues a
run-time error if not.%

When the program invokes~\ml{thunk}, it provides an~\ml{mfn},
name, and argument.  The library checks the \ml{mfn}'s memo table
for an existing node with the provided name.
If none exists, it registers a fresh node with the given name in the
memo table and adds an allocation edge to it from the current node
(which is set whenever a thunk is forced).

If a node with the same name already exists, the library checks
whether the argument is equal to the current one. If equal, then the
thunk previously associated with the name is the same as the new
thunk, so the library reuses the
node, returning it as an~\ml{athunk} and adding an allocation edge to
the DCG.
If not equal, then the name has been allocated for a different thunk
either in a prior run, or in \emph{this run}.
The latter case is an error that we detect and signal.
To distinguish these two cases, we use the check described in \secref{correctnaming}.
Assuming no error, the library needs to reset the state associated
with the name: It clears any prior cached result, dirties any incoming
edges (transitively), mutates the argument stored in the node to be
the new one, and adds an allocation edge.
Later, when and if this thunk is forced, the system will run it.
Further, because of the dirtying traversal, any
nodes that (transitively or directly) forced this changed node are
also candidates for reevaluation.

\paragraph{Names.}
A name in \NominalAdapton is implemented as a kind of
list, as follows:
\vspace*{.1in}
\begin{OCaml}
type name = Bullet | One of int * name | Two of int * name
\end{OCaml}
Ignoring the \ml{int} part, this is a direct implementation of \calc's
notion of names. The \ml{int} part is a hash of the next element in
the list (but not beyond it), to speed up disequality
checks---if two \ml{One} or \ml{Two} elements do not share the same
hash field they cannot be equal; if they do, we must compare their
tails (because of hash collisions). Thus, at worst,
establishing equality is linear in the length of the name,
but we can short circuit a full
traversal in many cases.
We note that in our applications, the size of names is either a
constant, or it is proportional to the \emph{depth} (not total size)
of the DCG, which is usually sublinear (e.g., logarithmic) in the
current input's total size.

\section{Experimental Results}
\label{sec:experiments}

This section evaluates \NominalAdapton's performance against
\Adapton and from-scratch recomputation.\footnote{\cite{Adapton2014}
  report that for interactive, lazy usage patterns, \Adapton
substantially outperforms another state-of-the-art incremental
technique, \emph{self-adjusting computation} (SAC), which sometimes
can incur significant slowdowns. We do not compare directly against
SAC here.}
We find that \NominalAdapton is nearly always faster than \Adapton,
which is sometimes \emph{orders of magnitude slower} than from-scratch
computation. \NominalAdapton always enjoys speedups, and sometimes
very dramatic ones (up to $10900\times$).

\subsection{Experimental Setup}
\Label{benchmarks}

Our experiments measure the time taken to recompute the output of a
program after a change to the input, for a variety of different sorts
of changes. We compare \NominalAdapton against classic \Adapton and
from-scratch computation on the changed input; the latter avoids all
IC-related overhead and therefore represents the best from-scratch
time possible.

We evaluate two kinds of subject programs. The first set is drawn from
the IC literature on SAC and Adapton~\citep{Hammer09:ceal,Hammer11:stack-machines,Adapton2014}.
These consist of standard list processing programs: (eager and lazy) \ml{filter},
(eager and lazy) \ml{map}, \ml{reduce(min)}, \ml{reduce(sum)}, \ml{reverse}, \ml{median}, and a
list-based \ml{mergesort} algorithm.  Each program operates over randomly
generated lists. These aim to represent key primitives that are likely
to arise in standard functional programs, and use the patterns
discussed in Section~\ref{sec:fpn}. We also consider an
implementation of \ml{quickhull}~\citep{quickhull}, a
divide-and-conquer method for computing the convex hull of a set of
points in a plane.  Convex hull has a number of applications including
pattern recognition, abstract interpretation, computational geometry,
and statistics.

We also evaluate an incremental IMP interpreter, as discussed in Section~\ref{sec:imp},
measuring its performance on a variety of different IMP
programs. \ml{fact} iteratively computes the factorial of an
integer. \ml{intlog;fact} evaluates the sequence of computing an
integer logarithm followed by factorial. \ml{array max} allocates,
initializes, and destructively computes the maximum value in an
array. \ml{matrix mult} allocates, initializes, and multiplies two
square matrices (implemented as arrays of arrays of integers). These
IMP programs exhibit imperative behavior not otherwise
incrementalizable, except as programs evaluated by a purely
functional, big-step interpreter implemented in an incremental
meta-language.

\begin{table}
\begin{subtable}{\columnwidth}
\footnotesize
\begin{centering}
  \begin{tabular}{|c|c|l||r||r|r|}
  \hline
  \multicolumn{6}{|c|}{Batch-mode comparison (\emph{``demand all''})}
  \\
  \hline
  \textbf{Program} &
  \textbf{$n$} &
  \textbf{Edit} &
  \textsf{FS} (ms) &
  \textsf{A} ($\times$) &
  \textsf{NA} ($\times$)
  \\
  \hline
  eager filter     & 1e4 & insert & 21 & 0.178 & 1.29 \\
             &     & delete & 21 & 0.257 & 1.39\\
             &     & replace & 21 & 0.108 & 1.27\\ \hline

  eager map  & 1e4 & insert & 21.6 & 0.0803 & 1.02 \\
             &     & delete & 21.6 & 0.0920 & 1.01\\
             &     & replace & 21.6 & 0.0841 & 1.09\\ \hline

  min        & 1e5 & insert & 424 & 2790 & 2980 \\
             &     & delete & 424 & 4450 & 4720\\
             &     & replace & 424 & 1850 & 2310\\ \hline

  sum        & 1e5 & insert & 421 & 785 & 833 \\
             &     & delete & 421 & 1140 & 1230\\
             &     & replace & 421 & 727 & 733\\ \hline

  reverse    & 1e5 & insert & 197 & 0.0404 & 1.23 \\
             &     & delete & 197 & 0.764 & 1.19\\
             &     & replace & 197 & 0.0404 & 1.23\\ \hline

  median     & 1e4 & insert & 3010 & 0.747 & 127 \\
             &     & delete & 3010 & 192 & 115\\
             &     & replace & 3010 & 0.755 & 148\\ \hline

  mergesort  & 1e4 & insert & 267 & 0.212 & 12.0 \\
             &     & delete & 267 & 11.0 & 10.1\\
             &     & replace & 267 & 0.205 & 10.5\\ \hline

  quickhull  & 1e4 & insert & 853 & 0.0256* & 3.78\\
             &     & delete & 853 & 0.0270* & 4.11\\
             &     & replace & 853 & 0.0378* & 3.86\\ \hline

   \hline
\end{tabular}
  \caption{Speedups of batch-mode experiments}
\label{tab:summary-demand-all}
\end{centering}
\end{subtable}

\vskip 1em

\begin{subtable}{\columnwidth}
\footnotesize
\begin{centering}
  \begin{tabular}{|c|c|l||r||r|r|}
  \hline
  \multicolumn{6}{|c|}{Demand-driven comparison (\emph{``demand one''})}
  \\
  \hline
  \textbf{Program} &
  \textbf{$n$} &
  \textbf{Edit} &
  \textsf{FS} (ms) &
  \textsf{A} ($\times$) &
  \textsf{NA} ($\times$)
  \\
  \hline
  lazy filter & 1e5 & insert & 0.016 & 3.79 & 3.55\\
              &     & delete & 0.016 & 18.1 & 16.3\\
              &     & replace & 0.016 & 3.55 & 3.20\\ \hline

  lazy map   & 1e5 & insert & 0.016 & 4.08 & 3.79\\
             &     & delete & 0.016 & 18.1 & 20.4\\
             &     & replace & 0.016 & 3.71 & 3.62\\ \hline

  reverse    & 1e5 & insert & 188 & 0.067 & 2130 \\
             &     & delete & 188 & 50.8 & 4540\\
             &     & replace & 188 & 0.068 & 2360\\ \hline

  mergesort  & 1e4 & insert & 63.4 & 96.3 & 369 \\
             &     & delete & 63.4 & 111 & 752\\
             &     & replace & 63.4 & 86.2 & 336\\ \hline

  quickhull  & 1e4 & insert & 509 & 0.0628* & 5.30\\
             &     & delete & 509 & 0.0571* & 5.52\\
             &     & replace & 509 & 0.0856* & 5.23\\

  \hline
\end{tabular}
\caption{Speedups of demand-driven experiments}
\label{tab:summary-demand-one}
\end{centering}
\end{subtable}
\caption{List benchmarks}
\label{tab:summary}
\end{table}

All programs were compiled using OCaml 4.01.0 and run on an 8-core,
2.26 GHz Intel Mac Pro with 16 GB of RAM running Mac OS X 10.6.8.

\subsection{List-Based Experiments}

Table~\ref{tab:summary} contains the results of our list
experiments.  For each program (leftmost column), we consider a
randomly generated input of size $n$ and three kinds of edits to it:
\emph{insert}, \emph{delete}, and \emph{replace}. For the first, we
insert an element in the list; for the second, we delete the inserted
element; for the last, we delete an element and then
re-insert an element with a new value. Rather than consider only one
edit position, we consider
ten positions in the input list, spaced evenly (1/10 through the list,
2/10 through the list, etc.), and perform the edit at those positions,
computing the average time across all ten edits. We report the
median of seven trials of this experiment.

The table reports the time to perform recomputation from scratch, in
milliseconds, in column \textsf{FS}, and then the speed-up (or
slow-down) factor compared to the from-scratch time for both \Adapton,
in column \textsf{A}, and \NominalAdapton, in column
\textsf{NA}. Table~\ref{tab:summary-demand-all} considers
the case when \emph{all} of the program's output is demanded, whereas
Table~\ref{tab:summary-demand-one} considers the case when only one
element of the output is demanded, thus measuring the benefits of both
nominal and classic \Adapton in a lazy setting. Note that in the lazy
setting, \textsf{FS} sometimes also avoids complete
recomputation, since thunks that are created but never forced are not executed.

\paragraph*{Results: Demand All.}

Table~\ref{tab:summary-demand-all} focuses on benchmarks where
\emph{all} of the output is demanded, or when there is only a single
output value (sum and minimum).  In these cases, several patterns
emerge in the results.

First, for eager map (\secref{overview}) and eager filter, \NominalAdapton
gets modest speedup and breaks even, respectively, while
\Adapton gets slowdowns of one to two orders of magnitude.
As \secref{overview} explains, \Adapton recomputes and reallocates a
linear number of output elements for each $O(1)$ input change
(insertion, deletion or replacement). By contrast, \NominalAdapton
need not rebuild the prefix of the output lists.%

Next, the benchmarks minimum and sum use the probabilistically balanced
trees from \secref{fpn} to do an incremental fold where, in
expectation, only a logarithmic number of intermediate computations
are affected by a small change.  Due to this construction, both
\Adapton and \NominalAdapton get large speedups over from-scratch
computation (up to 4720$\times$).  \NominalAdapton tends to get
slightly more speedup, since its use of names leads to less tree
rebuilding.  This is similar to, but not as asymptotically deep as,
the eager map example ($O(\log n)$ here versus $O(n)$ above).

The next four benchmarks (reverse, mergesort, median, quickhull) show
marked contrasts between the times for \NominalAdapton and \Adapton:
In all cases, \NominalAdapton gets a speedup (from about 4$\times$ to
148$\times$), whereas \Adapton nearly always gets a slowdown.  Two
exceptions are the deletion changes that revert a prior insertion. In
these cases, \Adapton reuses the original cache information that it
duplicates (at great expense) after the insertion.
\Adapton gets no speedup for quickhull, our most complex benchmark in
this table.  By contrast, \NominalAdapton performs updates
\emph{orders of magnitude} faster than \Adapton and gets a speedup
over from-scratch; the stars~($\ast$) indicate that we ran quickhull
at one tenth of the listed input size for \Adapton, because otherwise
it used too much memory due to having large memo tables but little reuse
from them.

\paragraph*{Results: Demand One.}

Table~\ref{tab:summary-demand-one} focuses on benchmarks where
\emph{one} (of many possible) outputs are demanded.  In these cases,
two patterns emerge.
First, on simple lazy list benchmarks map and filter, \Adapton and
\NominalAdapton perform roughly the same, with \Adapton getting
slightly higher speedups than \NominalAdapton.  These cases are good fits
for \Adapton's model, and names only add overhead.

Second, on more involved list benchmarks (reverse, mergesort and
quickhull), \NominalAdapton delivers greater speedups (from 5$\times$
to 4540$\times$) than \Adapton, which often delivers slowdowns.  Two
exceptions are mergesort, where \Adapton delivers speedups, but is
still up to 6.7$\times$ slower than \NominalAdapton, and the deletion
changes, which---as in the table above---are fast because of spurious
duplication in the insertion change.

In summary, \NominalAdapton consistently delivers speedups for small
changes, while \Adapton does so to a lesser extent, and much less
reliably.

\subsection{Interpreter Experiments}

\begin{table}
\footnotesize
\begin{centering}
  \begin{tabular}{|c|c|l||r||r|r|}
  \hline
  \multicolumn{6}{|c|}{Batch-mode comparison (\emph{``demand all''})}
  \\
  \hline
  \textbf{Program} &
  \textbf{$n$} &
  \textbf{Edit} &
  \textsf{FS} (ms) &
  \textsf{A} ($\times$) &
  \textsf{NA} ($\times$)
  \\
  \hline

  fact &  5e3  & repl    & 945 & 0.520 & 10900 \\
       &       & swap1   & 947 & 2410 & 4740 \\
       &       & swap2   & 955 & 4740 & 6590 \\
       &       & ext     & 847 & 0.464 & 0.926 \\
   \hline
  intlog;fact & $2^{30}$, 5e3 & swap & 849 & 0.413 & 3.18 \\

  \hline
  array     & $2^{10}$ & repl1 & 191 & 0.323 & 6.52 \\
  max       &         & repl2 & 191 & 0.310 & 7.62 \\

  \hline
  matrix    & $20x20$ & swap1 & 4500 & 0.617 & 1.31 \\
  mult      &         & swap2 & 4500 & 0.756 & 1.17 \\
            & $25x25$ & ext  & 6100 & 1.50    & 1.55 \\

  \hline

\end{tabular}
  \caption{Speedups of IMP interpreter experiments}
\label{tab:summary-imp-interp}
\end{centering}
\end{table}

We tested the incremental behavior of the IMP interpreter with three
basic forms of edits to the input programs: replacing values ({\it
replace}), swapping subexpressions ({\it swap}), and increasing the
size of the input ({\it ext}). These experiments all take the
following form: evaluate an expression, mutate the expression, and then
reevaluate.

\begin{itemize}
\item %
For \ml{fact}, {\it repl} mutates the value of an unused variable;
{\it swap1} reverses the order of two assignments at the start of
the program; {\it swap2} reverses the order of two assignments at the
end; and {\it ext} increases the size of the input.

\item %
For \ml{intlog;fact}, {\it swap} swaps the two
subprograms.

\item %
For \ml{array max}, {\it repl1} replaces a value at the start of
the array, and {\it repl2} moves a value from the start to the end
of the array.

\item %
For \ml{matrix mult}, {\it swap1} reverses the order of the initial
assignments of the outer arrays of the input matrices; {\it swap2}
reverses the order of the while loops that initialize the inner arrays
of the input matrices; and {\it ext} extends the dimensions of
the input arrays.
\end{itemize}

\paragraph*{Results.}

Table~\ref{tab:summary-imp-interp} summarizes the results, presented
the same way as the list benchmarks. We can see that
\NominalAdapton provides a speedup over from-scratch computation in
all but one case, and can provide dramatic speedups. In
addition, \NominalAdapton consistently outperforms classic \adapton{},
in some cases providing a speedup where \Adapton incurs a (sizeable)
slowdown. 

The \ml{fact} program's {\it repl} experiment shows significant
performance improvement due to names. Classic
\adapton{} dirties each intermediary environment
and is forced to recompute. With the naming strategy outlined in
Section~\ref{sec:interpreter}, the environment is identified without
regard to the particular values inserted.  Future computations that
depend on the environment, but not the changed value in particular,
are reused. The \ml{fact} {\it swap} experiments show significant
speedup for both classic \adapton{} and
\NominalAdapton, because the trie map
representation remains unchanged regardless of order of the
assignments.

The remaining results fall into two categories. The edits made
to \ml{intlog;fact}, \ml{array max}, and \ml{matrix mult}'s {\it
 swap}s show speedups between 1.17$\times$ and 7.62$\times$ with \NominalAdapton, while
classic \adapton{} exhibits a slowdown, due to spending much of its
time creating and evaluating new nodes in the DCG\@. \NominalAdapton,
on the other hand, spends its time walking the already-present nodes
and reusing many (from 25\% to as much as 99\%) of them, with the
added benefit of far better memory performance.

The last category includes the {\it ext} tests for \ml{fact}
(increasing the input value) and \ml{matrix mult} (extending the
dimensions of the input matrices). Such changes have pervasive effects
on the rest of the computation and are a challenge to incremental
reuse. Extension for matrix multiplication shows a modest speedup over
the from-scratch running time for both nominal and classic \adapton{}.
\NominalAdapton is able to reuse a third of the nodes created during
the original run, while classic \adapton{} is not able to reuse
any. Increasing the value of the input to factorial causes similar
behavior, though the single, short loop prevents the amount of reuse
from overcoming the from-scratch time.

\section{Related Work}
\Label{sec:related}

Here we survey past approaches to incremental computation, organizing
our discussion
into three categories: static approaches, dynamic approaches, and
specialized approaches.

\paragraph{Static Approaches to IC.}
These approaches transform programs to derive a second program
that can process ``deltas''; the derived program takes as input the last
(full) output and the representation of an input change, and
produces (the representation of) the next output change.
This program derivation is performed a priori, before any dynamic
changes are issued.  As such, static approaches have the advantage of
not incurring dynamic space or time overhead, but also carry
disadvantages that stem from not being dynamic in nature: They cannot
handle programs with general recursion, and cannot take advantage of
cached intermediate results, since by design, there are
none~\citep{LiuTeitelbaum95, Cai2014}.
Other static approaches transform programs into ones that cache and
reuse past results, given a predefined class of input
changes~\citep{LiuStTe98}.  Future work should explore an empirical
comparison between these approaches and comtemporary dynamic
approaches, described below.

\paragraph{Dynamic Approaches to IC.}
In contrast to static approaches, dynamic approaches attempt to trade
space for time savings.
A variety of dynamic approaches to IC have been proposed.
Most early approaches fall into one of two camps: they
either perform function caching of pure
programs~\citep{Bellman57,McCarthy63,Michie68,PughThesis}, or they
support input mutation and employ some form of dynamic dependency
graphs.
However, the programming model advanced by earlier work on dependence
graphs lacked features like general recursion and dynamic allocation,
instead restricting programs to those expressible as \emph{attribute
  grammars} (a language of declarative constraints over tree
structures)~\citep{DemersReTe81,Reps82a,Reps82b,vogt1991}.

Some recent general-purpose approaches to dynamic IC (SAC and
\Adapton) support general-purpose input structures and general
recursion; internally, they use a notion of memoization to
find and reuse portions of existing dependency graphs.
As described in \cite{Adapton2014}, SAC and \Adapton differ greatly in the
programming model they support (SAC is eager/batch-oriented whereas
\Adapton is demand-driven) and in how they represent dependency
graphs.  Consequently, they have different performance
characteristics, with \Adapton excelling at demand-driven and interactive
settings, and SAC doing better in non-interactive, batch-oriented
settings~\citep{Adapton2014}.

The presence of dynamic memory
allocation in SAC poses a reuse problem due to ``fresh'' object
identities, and thus benefits from a mechanism to deterministically
match up identities from prior runs.
Various past work on SAC addresses this problem in some
form~\citep{AcarBlHa04,AcarBlBlHaTa06,HammerAc08,AcarLW09}
describing how to use ``hints'' or ``keys.''
The reuse problem in \NominalAdapton is more general in nature than in
SAC, and thus requires a very different solution. For example,
\NominalAdapton's DCG and more general memo tables do not impose SAC's
total ordering of events, admitting more opportunities for reuse, but
complicating the issue of assuring that names are not used more than once within a
run. The use of thunks, which also need names, adds
a further layer of complication.
This paper addresses name reuse in this (more general) IC setting.
Further, we address other naming issues, such as how to generate
new names from existing ones (via \ml{fork}) and how to determinize
memo table creation (via named namespaces).

Ongoing research in programming languages and systems continues to
generalize memoization.
\citet{Bhatotia2015} extend memoization to parallel C and C{+}{+}
programs written against a traditional UNIX threading API.
\citet{Bhatotia2011} extend memoization to distributed, cloud-based
settings (MapReduce-style computations in particular).
\citet{Chen2014} reduce the (often large) time and space overhead,
which is pervasive in both SAC and in \Adapton.
In particular, they propose coarsening the granularity of dependence
tracking, and report massive reductions (orders of magnitude)
in space as a result.  We believe that their approach
(``probabilistic chunking'') should be immediately applicable to our setting,
as well as to classic \adapton{}.
Indeed, early results for mergesort indicate up to an
order-of-magnitude reduction in overhead.

\paragraph{Specialized Approaches to IC.}

Some recent approaches to IC are not general-purpose, but exploit
domain-specific structure to handle input changes efficiently.
\textsf{DITTO} incrementally checks invariants in Java programs, but
is limited to invariant checking~\citep{DITTO2007}.
\textsf{i3QL} incrementally repairs database views (queries) when the
underlying data changes due to insertions and removals of table
rows~\citep{i3ql2014}.

Finally, reactive programming (especially functional reactive programming or FRP)
shares some
elements with incremental computation: both paradigms offer
programming models for systems that strive to efficiently react to
``outside changes''; internally, they use graph representations to
model dependencies in a program that change over time
\citep{Cooper06embeddingdynamic,DBLP:conf/icfp/KrishnaswamiB11,Czaplicki2013AFR}.
However, the chief aim of FRP is to provide a declarative means of
specifying programs whose values are time-dependent (stored in
signals), whereas the chief aim of IC is to provide time
savings for small input changes (stored in special references).
The different scope and programming model of FRP makes it
hard to imagine using it to write an efficient incremental sorting algorithm,
though it may be possible. On the other hand, IC would seem to be an
appropriate mechanism for implementing an FRP engine, though the exact
nature of this connection remains unclear.

\section{Conclusion}

This paper has presented \emph{nominal matching}, a new strategy that
general-purpose incremental computation can use to match a proposed
computation against a prior, memoized one. With nominal matching,
programmers may explicitly associate a \emph{name} with a memoized
computation, and matching is done by name equality. Nominal matching
overcomes the conservativity of \emph{structural matching}, the most
commonly employed approach, which compares computations based on their
structure and thus may fail to reuse prior results when it should
(i.e., those that are not structurally identical but require little
work to bring up-to-date). We have implemented nominal matching as
part of \NominalAdapton, an extension to the \Adapton general-purpose
system for incremental computation, and endowed it with
\emph{namespaces} for more flexible management of names in practical
programs. We have formalized \NominalAdapton's (and \Adapton's)
algorithms and proved them correct. We have implemented a variety of
data structures and benchmark programs in \NominalAdapton. Performance
experiments show that compared to \Adapton (which employs structural
matching) \NominalAdapton enjoys uniformly better performance,
sometimes achieving many orders-of-magnitude speedups over
from-scratch computation when \Adapton would suffer significant
slowdowns.

\section*{Acknowledgments}

We thank James Parker, Khoo Yit Phang, Robert Harper, as well as the
anonymous reviewers of the program committees (including the AEC) for
their questions, insights and feedback.
We also thank James Parker for his contributions to an earlier
prototype of the proposed design.
This research was supported in part by NSF CCF-1116740, NSF CCF-1319666, DARPA
FA8750-12-2-0106, and a gift from Mozilla Research.

\balance
\bibliographystyle{abbrvnat}
\bibliography{main,paper,hammer-thesis1,hammer-thesis2}

\iftr
\clearpage

\onecolumn
\appendix

\section*{Supplement to
   ``\mytitle''}

This supplementary material contains additional details about how
\NominalAdapton manages space, in Appendix~\ref{sec:spaceman}.
Our full from-scratch consistency result, along with the definitions and
lemmas it uses, appears in Appendix~\ref{apx:metatheory}.

\section{Space management}
\label{sec:spaceman}
\label{sec:spacemanagement}

In a long-running program, memo tables could grow without
bound. \NominalAdapton helps reduce table sizes, as we have already
seen, but we still need a mechanism to clean out the tables
when space becomes limited.
A natural idea is to implement a memo table's mapping from name to DCG
node with a \emph{weak reference}, so that if the table is the only
reference to the node, the node can be garbage collected when the
system is short on space.
This is not quite enough, though, because to implement dirtying, DCG
edges are bidirectional. To avoid space leaks, it is critical to also
make these backedges weak.

Unfortunately, using weak references for both memo tables and back
edges (as implemented in the original \Adapton) is generally
unsound. \Adapton supports an interactive pattern called
\emph{swapping}, wherein DCG components can be swapped in and out of
the active DCG.  Pathologically, during the time that a
sub-computation is swapped out, the garbage collector could remove
\emph{some} of this DCG structure, but not all of it. In particular,
it could null out some of its weak back edges, because these nodes are
only reachable by weak references. But if this swapped out
sub-computation is later swapped back into the DCG, these (weak) back
edges will be gone, and we will potentially fail to dirty nodes that
ought to be dirty, as future changes occur.

To fix the GC problem, we still use weak references for back edges,
but use strong references for memo table entries, so that from the
GC's point of view, all DCG nodes are always reachable. To implement
safe space reclamation, we also implement reference counting of DCG
nodes, where the counts reflect the number of strong edges reaching a
node. When DCG edges are deleted, the reference counts of target nodes
are decremented. Nodes that reach zero are not immediately collected;
this allows thunks to be ``resurrected'' by the swapping
pattern. Instead, we provide a \cod{flush} operation for memo tables
that deletes the strong mapping edge for all nodes with a count of
zero, which means they are no longer reachable by the main
program. Deletion is transitive: removing the node decrements the
counts of nodes it points to, which may cause them to be deleted.

An interesting question is how to decide when to invoke \cod{flush};
this is the system's \emph{eviction policy}. One obvious choice is to
flush when the system starts to run short of memory (based on a signal
from the GC), which matches the intended effect of the unsound weak
reference-based approach. But placing the eviction policy under the
program's control opens other possibilities, e.g., the programmer
could invoke \ml{flush} when it is semantically clear that thunks
cannot be restored.
We leave to future work a further exploration of sensible
eviction policies.

\section{Metatheory of \calc}
\Label{apx:metatheory}

\subsection{Overview}

Our main formal result in this paper, \emph{from-scratch consistency},
states that given an evaluation derivation
corresponding to an incremental computation, we can construct a derivation
corresponding to a non-incremental computation that returns the same result
and a corresponding graph.  That is, the incremental computation is consistent
with a computation in the simpler non-incremental system.

To properly state the consistency result, we need to define what it means for a graph to
be well-formed (\Sectionref{sec:graph-wf}),
relate the incremental computation's DCG
to the non-incremental computation's store (\Sectionref{sec:restriction}),
describe the sets of nodes read and written by an evaluation derivation
(\Sectionref{sec:rw-sets}),
and prove a store weakening lemma~(\Sectionref{sec:weakening}).
The consistency result itself is stated and proved in \Sectionref{sec:fsc}.

\subsection{Graph Well-Formedness}
\Label{sec:graph-wf}

\begin{figure*}
\small
\ottdefnsGrwf
\vspace{-3ex}
\caption{Graph well-formedness rules}
\FLabel{fig:graph-wf}
\end{figure*}

The judgment $ G   \vdash   H \; {\txtsf{wf} } $ is read ``$H$ is a well-formed subset of $G$''.
It implies that $H$ is a linearization of a subset of $G$: within $H$,
information sources appear to the left of information sinks,
dependency edges point to the left, and information flows to the right.
Consequently, $H$ is a dag.

For brevity, we write $ G \; {\txtsf{wf} } $ for $ G   \vdash   G \; {\txtsf{wf} } $, threading
the entire graph $G$ through the rules deriving $ G   \vdash   H \; {\txtsf{wf} } $.
The well-formedness rules in \Figureref{fig:graph-wf} decompose the right-hand
graph, and work as follows:

\begin{itemize}
\item
  For values (Grwf-val) and thunks with no cached result (Grwf-thunk),
  the rules only check the correspondence between $G$ (the entire graph)
  and $H$ (the subgraph).

\item For thunks with a cached result, Grwf-thunkCache
  examines the outgoing edges.  If they are all clean, then it checks
  that evaluating $e$ \emph{again}
  would not change the graph at all: $\D \derives  G   \vdash ^{ p }_{ \omega }  e   \Downarrow   G  ;  \mathrm{t} $.
  If one or more edges are dirty, it checks that all incoming edges are dirty.

\item
  Grwf-dirtyEdge checks that
  all edges flowing into a dirty edge are dirty:
  given edges from $p$ to $q$ and from $q$ to $r$, where
  the edge from $q$ to $r$ is dirty (meaning that $q$ depends on $r$ and $r$ needs
  to be recomputed), the edge from $p$ to $q$ should
  also be dirty.  Otherwise, we would think we could reuse the result in $p$,
  even though $p$ (transitively) depends on $r$.

  Conversely, if an edge from $p$ to $q$ is clean (Grwf-cleanEdge),
  then all edges out from $q$ must
  be clean (otherwise we would contradict the ``transitive dirtiness'' just described).
  Moreover, the action $a$ stored in the edge from $p$ to $q$ must be consistent
  with the contents of $q$.
\end{itemize}

\subsection{From Graphs to Stores: the Restriction Function}
\Label{sec:restriction}

To relate the graph associated with an incremental evaluation to
the store associated with a non-incremental evaluation, we define
a function~$ \lfloor  G  \rfloor_{ P } $ that restricts $G$ to a set of pointers
$P$, drops cached thunk results (*), and erases all edges (**):
\begin{defn}[Restriction]
\Label{def:restrict}
\[
  \begin{array}{cr@{~~}c@{~~}lll}
    & \lfloor   \varepsilon   \rfloor_{ P }  &=&  \varepsilon 
    \\[1ex]
    & \lfloor  G  \ottsym{,}   p {:} v   \rfloor_{ P }  &=&  \lfloor  G  \rfloor_{ P }   \ottsym{,}   p {:} v  & \text{if $p \in P$}
    \\
    & \lfloor  G  \ottsym{,}   p {:} v   \rfloor_{ P }  &=&  \lfloor  G  \rfloor_{ P }  & \text{if $p \notin P$}
    \\[1ex]
    & \lfloor  G  \ottsym{,}   p {:} e   \rfloor_{ P }  &=&  \lfloor  G  \rfloor_{ P }   \ottsym{,}   p {:} e  & \text{if $p \in P$}
    \\
    & \lfloor  G  \ottsym{,}   p {:} e   \rfloor_{ P }  &=&  \lfloor  G  \rfloor_{ P }  & \text{if $p \notin P$}
    \\
    \text{(*)}& \lfloor  G  \ottsym{,}   p {:}( e , \mathrm{t} )   \rfloor_{ P }  &=&  \lfloor  G  \rfloor_{ P }   \ottsym{,}   p {:} e  & \text{if $p \in P$}
    \\
    & \lfloor  G  \ottsym{,}   p {:}( e , \mathrm{t} )   \rfloor_{ P }  &=&  \lfloor  G  \rfloor_{ P }  & \text{if $p \notin P$}
    \\[1ex]
    \text{(**)}& \lfloor  G  \ottsym{,}  \ottsym{(}  p  \ottsym{,}  a  \ottsym{,}  b  \ottsym{,}  q  \ottsym{)}  \rfloor_{ P }  &=&  \lfloor  G  \rfloor_{ P } 
  \end{array}
\]
\end{defn}

\subsection{Read and Write Sets}
\Label{sec:rw-sets}

\paragraph{Join and Merge Operations.}
To specify the read and write sets, we
use a \emph{separating join} $ H_{{\mathrm{1}}}  \joinsym  H_{{\mathrm{2}}} $ on graphs:
$ H_{{\mathrm{1}}}  \joinsym  H_{{\mathrm{2}}}   \ottsym{=}  \ottsym{(}  H_{{\mathrm{1}}}  \ottsym{,}  H_{{\mathrm{2}}}  \ottsym{)}$ if $  \txtsf{dom}( H_{{\mathrm{1}}} )   \mathrel{\bot}   \txtsf{dom}( H_{{\mathrm{2}}} )  $,
and undefined otherwise.

We also define a \emph{merge} $ H_{{\mathrm{1}}}  \mergesym  H_{{\mathrm{2}}} $ that \emph{is} defined for
subgraphs with overlapping domains, provided $H_1$ and $H_2$ are consistent
with each other.  That is, if $ p  \in   \txtsf{dom}( H_{{\mathrm{1}}} )  $ and $ p  \in   \txtsf{dom}( H_{{\mathrm{2}}} )  $, then
$H_1(p) = H_2(p)$.

\begin{figure*}[thbp]
  \centering

    \newcommand{\SEP}{\\[1ex]}
    \begin{array}[t]{rl@{~~}lll}
       \D  \txtsf{~by~}    \txtsf{Eval-term}   (  )   \txtsf{~reads~}   \varepsilon   \txtsf{~writes~}   \varepsilon  
    \SEP
       \D  \txtsf{~by~}    \txtsf{Eval-app}   ( \D_{{\mathrm{1}}}  \ottsym{,}  \D_{{\mathrm{2}}} )   \txtsf{~reads~}   R_{{\mathrm{1}}}  \mergesym  \ottsym{(}  R_{{\mathrm{2}}}  \ottsym{-}  W_{{\mathrm{1}}}  \ottsym{)}   \txtsf{~writes~}   W_{{\mathrm{1}}}  \joinsym  W_{{\mathrm{2}}}  
      &\text{if}&
       \D_{{\mathrm{1}}}  \txtsf{~reads~}  R_{{\mathrm{1}}}  \txtsf{~writes~}  W_{{\mathrm{1}}} 
      \\ &\text{and}&
       \D_{{\mathrm{2}}}  \txtsf{~reads~}  R_{{\mathrm{2}}}  \txtsf{~writes~}  W_{{\mathrm{2}}} 
    \SEP
       \D  \txtsf{~by~}    \txtsf{Eval-bind}   ( \D_{{\mathrm{1}}}  \ottsym{,}  \D_{{\mathrm{2}}} )   \txtsf{~reads~}   R_{{\mathrm{1}}}  \mergesym  \ottsym{(}  R_{{\mathrm{2}}}  \ottsym{-}  W_{{\mathrm{1}}}  \ottsym{)}   \txtsf{~writes~}   W_{{\mathrm{1}}}  \joinsym  W_{{\mathrm{2}}}  
      &\text{if}&
       \D_{{\mathrm{1}}}  \txtsf{~reads~}  R_{{\mathrm{1}}}  \txtsf{~writes~}  W_{{\mathrm{1}}} 
      \\ &\text{and}&
       \D_{{\mathrm{2}}}  \txtsf{~reads~}  R_{{\mathrm{2}}}  \txtsf{~writes~}  W_{{\mathrm{2}}} 
    \SEP
       \D  \txtsf{~by~}    \txtsf{Eval-nest}   ( \D_{{\mathrm{1}}}  \ottsym{,}  \D_{{\mathrm{2}}} )   \txtsf{~reads~}   R_{{\mathrm{1}}}  \mergesym  \ottsym{(}  R_{{\mathrm{2}}}  \ottsym{-}  W_{{\mathrm{1}}}  \ottsym{)}   \txtsf{~writes~}   W_{{\mathrm{1}}}  \joinsym  W_{{\mathrm{2}}}  
      &\text{if}&
       \D_{{\mathrm{1}}}  \txtsf{~reads~}  R_{{\mathrm{1}}}  \txtsf{~writes~}  W_{{\mathrm{1}}} 
      \\ &\text{and}&
       \D_{{\mathrm{2}}}  \txtsf{~reads~}  R_{{\mathrm{2}}}  \txtsf{~writes~}  W_{{\mathrm{2}}} 
    \SEP
       \D  \txtsf{~by~}    \txtsf{Eval-computeDep}   ( \D_{{\mathrm{1}}}  \ottsym{,}  \D_{{\mathrm{2}}} )   \txtsf{~reads~}   R_{{\mathrm{1}}}  \mergesym  R_{{\mathrm{2}}}   \txtsf{~writes~}  W_{{\mathrm{2}}} 
          &\text{if}&
           \D_{{\mathrm{1}}}  \txtsf{~reads~}  R_{{\mathrm{1}}}  \txtsf{~writes~}  W_{{\mathrm{1}}} 
          \\ &\text{and}&
           \D_{{\mathrm{2}}}  \txtsf{~reads~}  R_{{\mathrm{2}}}  \txtsf{~writes~}  W_{{\mathrm{2}}} 
          \\ &\text{and}&
            \txtsf{dom}( W_{{\mathrm{1}}} )   \subseteq   \txtsf{dom}( W_{{\mathrm{2}}} )  
    \end{array}

    \medskip

    \begin{array}[t]{rl@{~~}lll}
       \D  \txtsf{~by~}    \txtsf{Eval-fix}   ( \D_{{\mathrm{0}}} )   \txtsf{~reads~}  R  \txtsf{~writes~}  W 
      &\text{if}&  \D_{{\mathrm{0}}}  \txtsf{~reads~}  R  \txtsf{~writes~}  W 
    \\
       \D  \txtsf{~by~}    \txtsf{Eval-case}   ( \D_{{\mathrm{0}}} )   \txtsf{~reads~}  R  \txtsf{~writes~}  W 
      &\text{if}&  \D_{{\mathrm{0}}}  \txtsf{~reads~}  R  \txtsf{~writes~}  W 
    \\
       \D  \txtsf{~by~}    \txtsf{Eval-split}   ( \D_{{\mathrm{0}}} )   \txtsf{~reads~}  R  \txtsf{~writes~}  W 
      &\text{if}&  \D_{{\mathrm{0}}}  \txtsf{~reads~}  R  \txtsf{~writes~}  W 
    \\
       \D  \txtsf{~by~}    \txtsf{Eval-namespace}   ( \D_{{\mathrm{0}}} )   \txtsf{~reads~}  R  \txtsf{~writes~}  W 
      &\text{if}&  \D_{{\mathrm{0}}}  \txtsf{~reads~}  R  \txtsf{~writes~}  W 
    \SEP
       \D  \txtsf{~by~}    \txtsf{Eval-fork}   (  )   \txtsf{~reads~}   \varepsilon   \txtsf{~writes~}   \varepsilon  
    \SEP
       \D  \txtsf{~by~}    \txtsf{Eval-refDirty}   (  )   \txtsf{~reads~}   \varepsilon   \txtsf{~writes~}   q {:} v  
      &\text{where}&
      e  \ottsym{=}   \textbf{ref}(\Grn{  \Grn{ \ottkw{nm} \, k }  }, v )  \AND q  \ottsym{=}   k  @  \omega 
    \SEP
       \D  \txtsf{~by~}    \txtsf{Eval-refClean}   (  )   \txtsf{~reads~}   \varepsilon   \txtsf{~writes~}   q {:} v  
      &\text{where}&
      e  \ottsym{=}   \textbf{ref}(\Grn{  \Grn{ \ottkw{nm} \, k }  }, v )  \AND q  \ottsym{=}   k  @  \omega 
    \SEP
       \D  \txtsf{~by~}    \txtsf{Eval-thunkDirty}   (  )   \txtsf{~reads~}   \varepsilon   \txtsf{~writes~}   q {:} e_{{\mathrm{0}}}  
      &\text{where}&
      e  \ottsym{=}   \textbf{thunk}(\Grn{  \Grn{ \ottkw{nm} \, k }  }, e_{{\mathrm{0}}} ) 
    \SEP
       \D  \txtsf{~by~}    \txtsf{Eval-thunkClean}   (  )   \txtsf{~reads~}   \varepsilon   \txtsf{~writes~}   q {:} G ( q )  
      &\text{where}&
      e  \ottsym{=}   \textbf{thunk}(\Grn{  \Grn{ \ottkw{nm} \, k }  }, e_{{\mathrm{0}}} ) 
    \SEP
       \D  \txtsf{~by~}    \txtsf{Eval-getClean}   (  )   \txtsf{~reads~}   q {:} v   \txtsf{~writes~}   \varepsilon  
      &\text{where}&
      e  \ottsym{=}  \ottkw{get} \, \ottsym{(}   \ottkw{ref} \,\Grn{ q }   \ottsym{)} \AND q  \ottsym{=}   k  @  \omega  \AND  G ( q )   \ottsym{=}  v
    \SEP
       \D  \txtsf{~by~}    \txtsf{Eval-forceClean}   (  )   \txtsf{~reads~}  R'  \ottsym{,}   q {:}( e , \mathrm{t} )   \txtsf{~writes~}  W' 
      &\text{where}&
      e  \ottsym{=}  \ottkw{force} \, \ottsym{(}   \ottkw{thk} \,\Grn{ q }   \ottsym{)}
      \\ &\text{and}&
       \D'  \txtsf{~reads~}  R'  \txtsf{~writes~}  W' 
      \\ &\text{where}&
      \text{$\D'$ is the derivation that computed $t$ (see text)}
    \SEP
       \D  \txtsf{~by~}    \txtsf{Eval-scrubEdge}   ( \D_{{\mathrm{0}}} )   \txtsf{~reads~}  R  \txtsf{~writes~}  W 
      &\text{if}&  \D_{{\mathrm{0}}}  \txtsf{~reads~}  R  \txtsf{~writes~}  W 
    \end{array}

  \caption{Read- and write-sets of a derivation}
  \FLabel{fig:readswrites}
\end{figure*}

\begin{defn}[Reads/writes]
\Label{def:rw}
  The effect of an evaluation derived by $\D$,
  written $ \D  \txtsf{~reads~}  R  \txtsf{~writes~}  W $,
  is defined in \Figureref{fig:readswrites}.
\end{defn}

This is a function over derivations.
We write ``$ \D  \txtsf{~by~}   \mathcal{R}  ( \vec{\D} )   \txtsf{~reads~}  R  \txtsf{~writes~}  W $''
to mean that rule $\mathcal{R}$ concludes $\D$ and has
subderivations $\vec{\D}$.  For example, $ \D  \txtsf{~by~}    \txtsf{Eval-fix}   ( \D_{{\mathrm{0}}} )   \txtsf{~reads~}  R  \txtsf{~writes~}  W $
provided that $ \D_{{\mathrm{0}}}  \txtsf{~reads~}  R  \txtsf{~writes~}  W $ where $\D_0$ derives the only premise of
Eval-fix.

In the Eval-forceClean case, we refer back to the derivation that (most recently) computed
the thunk being forced (that is, the first subderivation of Eval-computeDep).  A completely formal
definition would take as input a mapping from pointers $q$ to sets $R'$ and $W'$,
return this mapping as output, and modify the mapping in the Eval-computeDep case.

\paragraph{Agreement.}  We want to express a result
 (\Lemmaref{lem:respect-write}) that evaluation only affects pointers in the write set
$W$, leaving the contents of other pointers alone, so we define what it means
to leave pointers alone:

\begin{defn}[Agreement on a pointer]
  Graphs $G_{{\mathrm{1}}}$ and $G_{{\mathrm{2}}}$ \emph{agree on}
  $p$ iff either $ \txtsf{exp}( G_{{\mathrm{1}}} ,  p )   \ottsym{=}   \txtsf{exp}( G_{{\mathrm{2}}} ,  p ) $,
  or $ p {:} v  \in G_1$ and $ p {:} v  \in G_2$.
\end{defn}

\begin{defn}[Agreement on a set of pointers]
  Graphs $G_{{\mathrm{1}}}$ and $G_{{\mathrm{2}}}$ \emph{agree on}
  a set $P$ of pointers iff $G_{{\mathrm{1}}}$ and $G_{{\mathrm{2}}}$
  agree on each $p \in P$.
\end{defn}

\begin{lem}[Respect for write-set]
\Label{lem:respect-write}
~\\
   If $ \D  \derives   G   \vdash ^{ p }_{ \omega }  e   \Downarrow   G'  ;  \mathrm{t}  $ where $\D$ is incremental
   and $ \D  \txtsf{~reads~}  R  \txtsf{~writes~}  W $
   \\
   then $G'$ agrees with $G$ on $ \txtsf{dom}( G )   \ottsym{-}   \txtsf{dom}( W ) $.
\end{lem}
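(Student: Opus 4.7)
The plan is to proceed by structural induction on the evaluation derivation $\D$, performing case analysis on the last rule. For each rule, I would verify that every modification to the graph's node contents (as opposed to its edges) is captured in $\txtsf{dom}(W)$, so that pointers outside $\txtsf{dom}(W)$ retain their original contents.

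First I would dispatch the easy cases. The rules \txtsf{Eval-term} and \txtsf{Eval-fork} do not modify the graph at all and write $\varepsilon$, so agreement holds vacuously on $\txtsf{dom}(G)$. The rules \txtsf{Eval-getClean}, \txtsf{Eval-forceClean}, and \txtsf{Eval-scrubEdge} only add or modify edges (or perform memoized lookups), and do not alter any $ p{:}v $ or $ p{:}e $ or $ p{:}(e,t) $ entry; since agreement only inspects $\txtsf{exp}(\_,\_)$ and value entries, these cases follow directly from the definition of agreement and the definition of the write set for each rule. For the allocation cases \txtsf{Eval-refDirty}, \txtsf{Eval-refClean}, \txtsf{Eval-thunkDirty}, and \txtsf{Eval-thunkClean}, the only node-entry change is the update $ G_1\{q\mapsto\_\} $, and the write set records exactly $ q{:}v $ or $ q{:}e_0 $; for any $p\ne q$, the graph predicate definitions of update and $\txtsf{dirty-paths-in}$ (which only alter edge statuses) preserve node contents.

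Next I would handle the compositional rules \txtsf{Eval-app}, \txtsf{Eval-bind}, \txtsf{Eval-nest} (and the single-premise rules \txtsf{Eval-fix}, \txtsf{Eval-case}, \txtsf{Eval-split}, \txtsf{Eval-namespace}). For a two-premise rule with input $G_1$, intermediate $G_2$, output $G_3$, write sets $W_1, W_2$, I apply the IH twice. To stitch the conclusions together I need an auxiliary monotonicity fact, namely $\txtsf{dom}(G_1)\subseteq\txtsf{dom}(G_2)\subseteq\txtsf{dom}(G_3)$, which follows by an easy side induction on each rule (no rule ever removes a node). Then for any $p \in \txtsf{dom}(G_1) - \txtsf{dom}(W_1 \joinsym W_2)$, the first IH gives agreement of $G_2$ with $G_1$ on $p$, and since $p\in\txtsf{dom}(G_2)$ and $p\notin\txtsf{dom}(W_2)$ the second IH gives agreement of $G_3$ with $G_2$ on $p$; transitivity of agreement completes the step.

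The main obstacle will be the \txtsf{Eval-computeDep} case, because its write set is defined as $W_2$ alone rather than $W_1 \joinsym W_2$. The first premise can modify a pointer $q$ (clearing its cached result and possibly overwriting it via the recursive evaluation of $e'$), but these modifications must still be justified against the lemma's conclusion. This is exactly what the side condition $\txtsf{dom}(W_1)\subseteq\txtsf{dom}(W_2)$ in the reads/writes definition buys: any pointer touched by the first subderivation is already recorded in $W_2$. Concretely, for $p\in\txtsf{dom}(G_1)-\txtsf{dom}(W_2)$, the side condition gives $p\notin\txtsf{dom}(W_1)$, so the first IH yields agreement between $G_1'$ (after $\txtsf{del-edges-out}$, which affects no node contents, and the cached-result update on $q$, covered by $W_1$) and $G_1$ on $p$; the second IH then extends this through the recursive force, yielding $G_2\mbox{ agrees with }G_1$ on $p$ as required.
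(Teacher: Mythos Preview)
Your approach is correct and matches the paper's, which simply says ``by a straightforward induction on $\D$'' and singles out \textsf{Eval-forceClean} (as you do) for the observation that only an edge is added. Two small imprecisions worth tightening: \textsf{Eval-scrubEdge} has a recursive evaluation premise, so it belongs with the single-premise pass-through rules (\textsf{Eval-fix}, etc.), not with the pure edge-manipulation cases; and in the \textsf{Eval-computeDep} case, the updates $H_1\{q{\mapsto}e'\}$ and $G_2\{q{\mapsto}(e',t')\}$ on $q$ need not be ``covered by $W_1$'' (indeed $q$ need not lie in $W_1$ at all)---rather, both updates preserve $\txtsf{exp}(\cdot,q)=e'$, so agreement on $q$ holds directly from the definition regardless of the write sets.
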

\begin{proof}
  By a straightforward induction on $\D$, referring to \Definitionref{def:rw}.

  In the Eval-forceClean case, use the fact that $G'$ differs
  from $G$ only in the addition of an edge, which does not affect agreement.
\end{proof}

\subsection{Satisfactory derivations}
\Label{sec:satisfactory-derivations}

In our main result~(\Theoremref{thm:fsc}), we will assume that all input
and output graphs appearing within a derivation are well-formed,
and that the read- and write-sets are defined:

\begin{defn}[Locally satisfactory]
\Label{def:local-sat}
~\\
     A derivation
     $\D \derives  G_{{\mathrm{1}}}   \vdash ^{ p }_{ \omega }  e   \Downarrow   G_{{\mathrm{2}}}  ;  \mathrm{t} $
     is \emph{locally satisfactory}
     if and only if

     \begin{enumerate}[(1)]
    \item  %
           $ G_{{\mathrm{1}}} \; {\txtsf{wf} } $ and $ G_{{\mathrm{2}}} \; {\txtsf{wf} } $

     \item  %
           $ \D  \txtsf{~reads~}  R  \txtsf{~writes~}  W $ is defined

     \end{enumerate}
\end{defn}

\begin{defn}[Globally satisfactory]
\Label{def:global-sat}
~\\
    An evaluation derivation
    $\D \derives  H_{{\mathrm{1}}}   \vdash ^{ p }_{ \omega }  e   \Downarrow   H_{{\mathrm{2}}}  ;  \mathrm{t} $
    is \emph{globally satisfactory}, written~$\D \satisfactory$,
    if and only if
    $\D$ is locally satisfactory
    and all its subderivations are locally satisfactory.
\end{defn}

\subsection{Weakening}
\Label{sec:weakening}

The main result needs to construct a non-incremental derivation
in a different order from the given incremental derivation.
In particular, the first evaluation $\D_1$ done in Eval-computeDep
will be done ``later'' in the non-incremental derivation.
Since it
is done later, the graph may have new material $G'$,
and we need a weakening lemma to move from a non-incremental
evaluation of $\D_1$ (obtained through the induction hypothesis)
to a non-incremental evaluation over the larger graph.

\begin{lem}[Weakening (non-incremental)]
\Label{lem:weakening-ni}
~\\
   If $ \D  \derives   G_{{\mathrm{1}}}   \vdash ^{ p }_{ \omega }  e   \Downarrow   G_{{\mathrm{2}}}  ;  \mathrm{t}  $
   and $\D$ is non-incremental
   \\
   and $G'$ is disjoint from $G_{{\mathrm{1}}}$ and $G_{{\mathrm{2}}}$
   \\
   then $ \D'  \derives   G_{{\mathrm{1}}}  \ottsym{,}  G'   \vdash ^{ p }_{ \omega }  e   \Downarrow   G_{{\mathrm{2}}}  \ottsym{,}  G'  ;  \mathrm{t}  $
   where $\D'$ is non-incremental.
\end{lem}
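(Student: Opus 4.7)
The plan is to proceed by induction on the structure of the non-incremental derivation $\mathcal{D}$. Most rules are ``transparent'' in the sense that they neither inspect the graph beyond lookups nor constrain it beyond freshness checks for new allocations, so weakening goes through by re-applying the same rule with the weakened graphs and invoking the induction hypothesis on each subderivation; threading the extra $G'$ unchanged through the conclusion yields the desired $\mathcal{D}'$, which remains non-incremental because every rule applied is the same non-incremental rule as in $\mathcal{D}$.

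First I would dispatch the leaf rules: \txtsf{Eval-term} and \txtsf{Eval-fork} do not touch the graph at all, so the output graph equals the input graph and weakening is immediate. For the compound rules shared with the incremental system (\txtsf{Eval-app}, \txtsf{Eval-bind}, \txtsf{Eval-nest}, \txtsf{Eval-fix}, \txtsf{Eval-case}, \txtsf{Eval-split}, \txtsf{Eval-namespace}), I would apply the induction hypothesis to each premise in turn. A small auxiliary observation is needed here: in the non-incremental system every rule either leaves the graph unchanged or extends it with a freshly allocated node, so any intermediate graph $G_i$ arising between the premises satisfies $\txtsf{dom}(G_1) \subseteq \txtsf{dom}(G_i) \subseteq \txtsf{dom}(G_2)$. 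Consequently, disjointness of $G'$ with $G_1$ and $G_2$ propagates to disjointness of $G'$ with every intermediate graph, which is what the induction hypothesis requires at each step.

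The main obstacle is the two allocation rules, \txtsf{Eval-refPlain} and \txtsf{Eval-thunkPlain}, which carry the freshness premise $q \notin \txtsf{dom}(G_1)$. To re-apply the same rule under the weakened input graph $(G_1, G')$, I must establish $q \notin \txtsf{dom}(G_1, G')$. This follows because the rule's conclusion adds $q$ to the graph, so $q \in \txtsf{dom}(G_2)$; since $G'$ is disjoint from $G_2$ by assumption, $q \notin \txtsf{dom}(G')$, and combined with the original freshness premise this gives freshness in $(G_1, G')$. The remaining lookup rules \txtsf{Eval-getPlain} and \txtsf{Eval-forcePlain} are immediate, since $G_1(q) = v$ (resp.\ $G_1(q) = e$) still succeeds in $(G_1, G')$: disjointness rules out conflicting bindings for $q$ in $G'$.

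I expect no difficulty beyond tracking the disjointness invariant through the induction; the result is essentially a standard store-weakening lemma, lightly complicated by the need to observe monotonicity of the graph in the non-incremental system in order to propagate disjointness to intermediate graphs produced by multi-premise rules.
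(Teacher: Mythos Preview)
Your proposal is correct and follows essentially the same approach as the paper's proof: induction on $\mathcal{D}$, with the allocation cases handled by observing that $q \in \txtsf{dom}(G_2)$ together with disjointness of $G'$ from $G_2$ gives $q \notin \txtsf{dom}(G')$, and the lookup cases handled by noting that the relevant binding survives in the extended graph. Your explicit observation about domain monotonicity ensuring disjointness of $G'$ from intermediate graphs in the multi-premise rules is a detail the paper leaves implicit (it simply says those rules ``do not manipulate the graph''), but it is the right thing to check and does not constitute a different route.
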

\begin{proof}
  By induction on $\D$.

  \begin{itemize}
      \ProofCasesRules{Eval-term, Eval-app, Eval-fix, Eval-bind,
                       Eval-case, Eval-split, Eval-fork, Eval-namespace, Eval-nest}

          These rules do not manipulate the graph, so
          just use the i.h.\ on each subderivation, then apply the same rule.

      \ProofCaseRule{Eval-refPlain}
           Since $ G_{{\mathrm{1}}} \{ q {\mapsto} v \}   \ottsym{=}  G_{{\mathrm{2}}}$,
           we have $ q  \in   \txtsf{dom}( G_{{\mathrm{2}}} )  $.
           \\ It is given that $  \txtsf{dom}( G' )   \mathrel{\bot}   \txtsf{dom}( G_{{\mathrm{2}}} )  $.
           Therefore $q \notin  \txtsf{dom}( G_{{\mathrm{1}}}  \ottsym{,}  G' ) $.
           \\ By definition, $ \ottsym{(}  G_{{\mathrm{1}}}  \ottsym{,}  G'  \ottsym{)} \{ q {\mapsto} v \}   \ottsym{=}  \ottsym{(}  G_{{\mathrm{2}}}  \ottsym{,}  G'  \ottsym{)}$.
           \\ Apply Eval-refPlain.

      \ProofCaseRule{Eval-thunkPlain}
           Similar to the Eval-refPlain case.

      \ProofCaseRule{Eval-forcePlain}
         We have $ \txtsf{exp}( G_{{\mathrm{1}}} ,  q )   \ottsym{=}  e_{{\mathrm{0}}}$.  Therefore $ \txtsf{exp}( G_{{\mathrm{1}}}  \ottsym{,}  G' ,  q )   \ottsym{=}  e_{{\mathrm{0}}}$.
         Use the i.h.\ and apply Eval-forcePlain.

      \ProofCaseRule{Eval-getPlain}
         Similar to the Eval-forcePlain case.
  \qedhere
  \end{itemize}
\end{proof}

\subsection{Main result: From-scratch consistency}
\Label{sec:fsc}

At the highest level, the main result (\Theoremref{thm:fsc}) says:

\begin{quote}
\textbf{First approximation} \\
  If $ H_{{\mathrm{1}}}   \vdash ^{ p }_{ \omega }  e   \Downarrow   H_{{\mathrm{2}}}  ;  \mathrm{t} $ by an incremental derivation,
  \\
  then $ H'_{{\mathrm{1}}}   \vdash ^{ p }_{ \omega }  e   \Downarrow   H'_{{\mathrm{2}}}  ;  \mathrm{t} $,
  where $H'_{{\mathrm{1}}}$ is a non-incremental version of $H_{{\mathrm{1}}}$
  and $H'_{{\mathrm{2}}}$ is a non-incremental version of $H_{{\mathrm{2}}}$.
\end{quote}
Using the restriction function from \Sectionref{sec:restriction},
we can refine this statement:

\begin{quote}
\textbf{Second approximation} \\
  If  $ H_{{\mathrm{1}}}   \vdash ^{ p }_{ \omega }  e   \Downarrow   H_{{\mathrm{2}}}  ;  \mathrm{t} $ by an incremental derivation
  and $ P_{{\mathrm{1}}}  \subseteq   \txtsf{dom}( H_{{\mathrm{1}}} )  $,
  \\
  then $  \lfloor  H_{{\mathrm{1}}}  \rfloor_{ P_{{\mathrm{1}}} }    \vdash ^{ p }_{ \omega }  e   \Downarrow    \lfloor  H_{{\mathrm{2}}}  \rfloor_{ P_{{\mathrm{2}}} }   ;  \mathrm{t} $
  by a non-incremental derivation,
  for some $P_{{\mathrm{2}}}$ such that  $ P_{{\mathrm{1}}}  \subseteq  P_{{\mathrm{2}}} $.
\end{quote}
Here, the pointer set $P_1$ gives the scope of the non-incremental
input graph $ \lfloor  H_{{\mathrm{1}}}  \rfloor_{ P_{{\mathrm{1}}} } $, and we construct $P_2$
describing the non-incremental output graph
$ \lfloor  H_{{\mathrm{2}}}  \rfloor_{ P_{{\mathrm{2}}} } $.

We further refine this statement by involving the derivation's read- and
write-sets: the read set $R$ must be contained in $P_1$, the write set $W$
must be disjoint from $P_1$ (written $  \txtsf{dom}( W )   \mathrel{\bot}  P_{{\mathrm{1}}} $), and $P_2$ must be
exactly $P_1$ plus $W$.
In the non-incremental semantics, the store should grow monotonically, so
we will also show $  \lfloor  H_{{\mathrm{1}}}  \rfloor_{ P_{{\mathrm{1}}} }   \subseteq   \lfloor  H_{{\mathrm{2}}}  \rfloor_{ P_{{\mathrm{2}}} }  $:

\begin{quote}
\textbf{Third approximation} \\
  If  $ H_{{\mathrm{1}}}   \vdash ^{ p }_{ \omega }  e   \Downarrow   H_{{\mathrm{2}}}  ;  \mathrm{t} $ by an incremental derivation $\D$
  with $ \D  \txtsf{~reads~}  R  \txtsf{~writes~}  W $,
  \\
  and $ P_{{\mathrm{1}}}  \subseteq   \txtsf{dom}( H_{{\mathrm{1}}} )  $
  such that $  \txtsf{dom}( R )   \subseteq  P_{{\mathrm{1}}} $
  and $  \txtsf{dom}( W )   \mathrel{\bot}  P_{{\mathrm{1}}} $
  \\
  then $  \lfloor  H_{{\mathrm{1}}}  \rfloor_{ P_{{\mathrm{1}}} }    \vdash ^{ p }_{ \omega }  e   \Downarrow    \lfloor  H_{{\mathrm{2}}}  \rfloor_{ P_{{\mathrm{2}}} }   ;  \mathrm{t} $
  \\
  where $  \lfloor  H_{{\mathrm{1}}}  \rfloor_{ P_{{\mathrm{1}}} }   \subseteq   \lfloor  H_{{\mathrm{2}}}  \rfloor_{ P_{{\mathrm{2}}} }  $
  and $P_{{\mathrm{2}}}  \ottsym{=}  P_{{\mathrm{1}}} \, \cup \,  \txtsf{dom}( W ) $.
\end{quote}
Even this refinement is not quite enough, because the incremental
system can perform computations in a different order than the non-incremental
system.  Specifically, the Eval-computeDep rule carries out a subcomputation
first, then continues with a larger computation that depends on the subcomputation.
The subcomputation does not fit into the non-incremental derivation at that point;
non-incrementally, the subcomputation is performed when it is demanded by
the larger computation.  Thus, we can't just apply the induction hypothesis on the
subcomputation.

However, the subcomputation is ``saved'' in its (incremental) output graph.
So we incorporate an invariant that all thunks with cached results in the graph
``are consistent'', that is, they satisfy a property similar to the overall consistency
result.  When this is the case, we say that the graph is from-scratch consistent.
The main result, then, will assume that the input graph is from-scratch consistent,
and show that the output graph remains from-scratch consistent.
Since the graph can grow in the interval between the subcomputation of Eval-computeDep
and the point where the subcomputation is demanded, we require that the output
graph $G_{{\mathrm{2}}}$ of the saved derivation does not contradict the larger, newer graph
$H$.  This is part (4) in the next definition.  (We number the parts from (i) to (ii)
and then from (1) so that they mostly match similar parts in the main result.)

\begin{defn}[From-scratch consistency of a derivation]
\Label{def:fsc}
~\\
  A derivation $\D_i \derives  G_{{\mathrm{1}}}   \vdash ^{ p }_{ \omega }  e   \Downarrow   G_{{\mathrm{2}}}  ;  \mathrm{t} $
  is
  \\
  \emph{from-scratch consistent for}
  $ P_q  \subseteq   \txtsf{dom}( G_{{\mathrm{1}}} )  $
  \emph{up to}
  $H$  %
  if and only if
  \begin{enumerate}[(i)]
  \item %
       $\D_i \satisfactory$ where $ \D_{\ottmv{i}}  \txtsf{~reads~}  R  \txtsf{~writes~}  W $
  \item %
     $  \txtsf{dom}( R )   \subseteq  P_q $
     and
     $  \txtsf{dom}( W )   \mathrel{\bot}  P_q $

      \item[(1)] %
           there exists a non-incremental derivation
           $\D_{ni}
            \derives
              \lfloor  G_{{\mathrm{1}}}  \rfloor_{ P_q }    \vdash ^{ p }_{ \omega }  e   \Downarrow    \lfloor  G_{{\mathrm{2}}}  \rfloor_{ P_{{\mathrm{2}}} }   ;  \mathrm{t} $

      \item[(2)] %
            $  \lfloor  G_{{\mathrm{1}}}  \rfloor_{ P_q }   \subseteq   \lfloor  G_{{\mathrm{2}}}  \rfloor_{ P_{{\mathrm{2}}} }  $

      \item[(3)] %
           $P_{{\mathrm{2}}}  \ottsym{=}  P_q \, \cup \,  \txtsf{dom}( W ) $

      \item[(4)] %
            $  \lfloor  G_{{\mathrm{2}}}  \rfloor_{ P_{{\mathrm{2}}} }   \subseteq   \lfloor  H  \rfloor_{  \txtsf{dom}( H )  }  $
  \end{enumerate}
\end{defn}

\begin{defn}[From-scratch consistency of graphs]
\Label{def:fsc-graph}
~\\
A graph $H$ is
\emph{from-scratch consistent}
\\
if, for all $q \in  \txtsf{dom}( H ) $ such that $ H ( q ) = ( e , \mathrm{t} ) $,
     \\
     there exists
     $\D_q \derives  G_{{\mathrm{1}}}   \vdash ^{ p }_{ \omega }  e   \Downarrow   G_{{\mathrm{2}}}  ;  \mathrm{t} $
     \\
     and
     $ P_q  \subseteq   \txtsf{dom}( G_{{\mathrm{1}}} )  $
     \\
     such that $\D_q$ is from-scratch consistent (\Definitionref{def:fsc})
     for $P_q$ up to $H$.
\end{defn}

The proof of the main result must maintain that the graph is from-scratch
consistent as the graph becomes larger, for which \Lemmaref{lem:fsc-ext-graph}
is useful.

\begin{lem}[Consistent extension]
\Label{lem:fsc-ext}
~\\
  If  $\D_q$ is from-scratch consistent for $P_q$ up to $H$
  \\
  and $H \subseteq H'$
  \\
  then $\D_q$ is from-scratch consistent for $P_q$ up to $H'$.
\end{lem}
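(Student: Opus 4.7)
The plan is to simply inspect Definition~\ref{def:fsc} (from-scratch consistency of a derivation) and note that $H$ only appears in one place, clause~(4): $  \lfloor  G_{{\mathrm{2}}}  \rfloor_{ P_{{\mathrm{2}}} }   \subseteq   \lfloor  H  \rfloor_{  \txtsf{dom}( H )  }  $. The other clauses (i), (ii), (1), (2), (3) are all statements about the derivation $\D_q$, its read/write sets, the restricted input and output graphs $\lfloor G_1\rfloor_{P_q}$ and $\lfloor G_2\rfloor_{P_2}$, and the pointer set $P_2$; none of them mention $H$. So they transfer verbatim from the ``up to $H$'' instance to the ``up to $H'$'' instance.

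Thus the only real obligation is to re-establish clause~(4) with $H'$ in place of $H$. I would prove this by showing the monotonicity fact
\[
H \subseteq H' \implies  \lfloor  H  \rfloor_{  \txtsf{dom}( H )  }  \subseteq  \lfloor  H'  \rfloor_{  \txtsf{dom}( H' )  } ,
\]
after which clause~(4) follows by transitivity of $\subseteq$ from the assumed $  \lfloor  G_{{\mathrm{2}}}  \rfloor_{ P_{{\mathrm{2}}} }   \subseteq   \lfloor  H  \rfloor_{  \txtsf{dom}( H )  }  $. The monotonicity fact is immediate from Definition~\ref{def:restrict}: restriction to the full domain keeps every node entry (erasing only cached results on thunks and deleting edges, both of which are done uniformly), and $H \subseteq H'$ gives $\txtsf{dom}(H) \subseteq \txtsf{dom}(H')$, so every element of $\lfloor H\rfloor_{\txtsf{dom}(H)}$ appears in $\lfloor H'\rfloor_{\txtsf{dom}(H')}$.

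Since clauses (i)--(ii) and (1)--(3) are preserved on the nose and clause~(4) is re-established as above, $\D_q$ is from-scratch consistent for $P_q$ up to $H'$, completing the proof. There is no real obstacle here: the lemma is essentially a weakening/monotonicity statement whose purpose is to let the main theorem grow the ``ambient'' graph $H$ (as Eval-computeDep and subsequent rules add nodes and edges) while preserving from-scratch consistency of previously saved subderivations; the work was done in phrasing Definition~\ref{def:fsc} so that $H$ occurs only positively on the right-hand side of clause~(4).
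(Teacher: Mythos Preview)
Your proposal is correct and matches the paper's own proof essentially line for line: observe that only clause~(4) of Definition~\ref{def:fsc} mentions $H$, carry the remaining clauses over unchanged, and use the monotonicity $\lfloor H\rfloor_{\txtsf{dom}(H)} \subseteq \lfloor H'\rfloor_{\txtsf{dom}(H')}$ (immediate from $H \subseteq H'$ and Definition~\ref{def:restrict}) together with transitivity of $\subseteq$ to re-establish~(4). Your added justification of the monotonicity step and your remark on the lemma's role in the main theorem are accurate elaborations of what the paper leaves implicit.
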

\begin{proof}
  Only part (4) of \Definitionref{def:fsc} involves the ``up to'' part
  of from-scratch consistency,
  so we already have (i)--(iii) and (1)--(3).

  We have (4) $  \lfloor  G_{{\mathrm{2}}}  \rfloor_{ P_{{\mathrm{2}}} }   \subseteq   \lfloor  H  \rfloor_{  \txtsf{dom}( H )  }  $.
  Using our assumptions, $  \lfloor  H  \rfloor_{  \txtsf{dom}( H )  }   \subseteq   \lfloor  H'  \rfloor_{  \txtsf{dom}( H' )  }  $.

  Therefore (4) $  \lfloor  G_{{\mathrm{2}}}  \rfloor_{ P_{{\mathrm{2}}} }   \subseteq   \lfloor  H'  \rfloor_{  \txtsf{dom}( H' )  }  $.
\end{proof}

\begin{lem}[Consistent graph extension]
\Label{lem:fsc-ext-graph}
~\\
   If $H$ is from-scratch consistent
   \\
   and $H \subseteq H'$
   \\
   and, for all $q \in  \txtsf{dom}( H' )   \ottsym{-}   \txtsf{dom}( H ) $ such that $H'(q) = (e,t)$,
   \\
   $~~$  there exists
     $\D_q \derives  G_{{\mathrm{1}}}   \vdash ^{ p }_{ \omega }  e   \Downarrow   G_{{\mathrm{2}}}  ;  \mathrm{t} $
     and
     $ P_q  \subseteq   \txtsf{dom}( G_{{\mathrm{1}}} )  $
     \\
     $~~$
     such that $\D_q$ is from-scratch consistent (\Definitionref{def:fsc})
     \\
     $~~$
     for $P_q$ up to $H'$,
   \\
   then
   $H'$ is from-scratch consistent.
\end{lem}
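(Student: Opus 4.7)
The plan is to prove this directly from \Definitionref{def:fsc-graph}, splitting on whether each cached-thunk pointer $q \in \txtsf{dom}(H')$ lies in the old domain $\txtsf{dom}(H)$ or in the new region $\txtsf{dom}(H') - \txtsf{dom}(H)$. Fix an arbitrary $q \in \txtsf{dom}(H')$ with $H'(q) = (e,t)$; we must produce a derivation $\D_q$ and pointer set $P_q$ witnessing from-scratch consistency up to $H'$.

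For $q \in \txtsf{dom}(H') - \txtsf{dom}(H)$, the third hypothesis of the lemma supplies exactly the required $\D_q$ and $P_q$ already consistent up to $H'$, so there is nothing more to do in this case. For $q \in \txtsf{dom}(H)$, first observe that since $H \subseteq H'$ and $H'(q) = (e,t)$, the entry for $q$ in $H$ must agree, i.e.\ $H(q) = (e,t)$ (this is the one step that relies on the meaning of $\subseteq$ being compatible with cached-result entries; I would make this explicit in the proof). Then from-scratch consistency of $H$ (our first hypothesis, applied at $q$) yields a derivation $\D_q \derives G_1 \vdash^p_\omega e \Downarrow G_2;\, t$ together with $P_q \subseteq \txtsf{dom}(G_1)$ such that $\D_q$ is from-scratch consistent for $P_q$ up to $H$. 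Since $H \subseteq H'$, we apply \Lemmaref{lem:fsc-ext} to upgrade this witness to be from-scratch consistent for $P_q$ up to $H'$, which is what we needed.

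Having produced an appropriate $(\D_q, P_q)$ for every $q \in \txtsf{dom}(H')$ with a cached result, \Definitionref{def:fsc-graph} is satisfied for $H'$, completing the proof.

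The only potentially delicate point is verifying that entries preserved by $H \subseteq H'$ really do match \emph{including their cached results}, so that the $\D_q$ obtained from $H$'s consistency produces the value $H'$ expects at $q$. Given the graph syntax from \Figureref{fig:graph} and the standard reading of $\subseteq$ on such finite maps/multisets, this is immediate, but it is the one substantive step; the rest of the argument is a clean case split followed by a direct appeal to \Lemmaref{lem:fsc-ext}.
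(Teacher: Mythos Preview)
Your proposal is correct and follows essentially the same approach as the paper's proof, which simply reads: ``Use \Lemmaref{lem:fsc-ext} on each `old' pointer with a cached result in $H$, then apply the definitions for each `new' pointer with a cached result in $\txtsf{dom}(H') - \txtsf{dom}(H)$.'' Your version is more explicit---in particular, your observation that $H \subseteq H'$ and $H'(q) = (e,t)$ together force $H(q) = (e,t)$ is a detail the paper leaves implicit---but the structure and key lemma are identical.
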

\begin{proof}
  Use \Lemmaref{lem:fsc-ext} on each ``old'' pointer with a cached result in $H$,
  then apply the definitions for each ``new'' pointer with a cached result in
  $ \txtsf{dom}( H' )   \ottsym{-}   \txtsf{dom}( H ) $.
\end{proof}

At last, we can state and prove the main result, which corresponds to
the ``third approximation'' above, plus the invariant that the graph is
from-scratch consistent (parts (iii) and (4)).

We present most of the proof in a line-by-line style, with the judgment or
proposition being derived in the left column, and its justification in the right column.
In each case, we need to show four different things (1)--(4), some of which are obtained
midway through the case, so we highlight these with ``(1) \Hand'', and so on.

\begin{thm}[From-scratch consistency]
\Label{thm:fsc}
~\\
  Given an incremental $\D_i \derives  H_{{\mathrm{1}}}   \vdash ^{ p }_{ \omega }  e   \Downarrow   H_{{\mathrm{2}}}  ;  \mathrm{t} $ where
  \begin{enumerate}[(i)]
  \item %
       $\D_i \satisfactory$ where $ \D  \txtsf{~reads~}  R  \txtsf{~writes~}  W $
  \item %
     a set of pointers $ P_{{\mathrm{1}}}  \subseteq   \txtsf{dom}( H_{{\mathrm{1}}} )  $ is such that $  \txtsf{dom}( R )   \subseteq  P_{{\mathrm{1}}} $ and $  \txtsf{dom}( W )   \mathrel{\bot}  P_{{\mathrm{1}}} $
  \item %
     $H_{{\mathrm{1}}}$ is from-scratch consistent (\Definitionref{def:fsc-graph})
  \end{enumerate}
  then
  \begin{enumerate}[(1)]
  \item %
    there exists a non-incremental
      $\D_{ni} \derives   \lfloor  H_{{\mathrm{1}}}  \rfloor_{ P_{{\mathrm{1}}} }    \vdash ^{ p }_{ \omega }  e   \Downarrow    \lfloor  H_{{\mathrm{2}}}  \rfloor_{ P_{{\mathrm{2}}} }   ;  \mathrm{t} $
  \item %
      $  \lfloor  H_{{\mathrm{1}}}  \rfloor_{ P_{{\mathrm{1}}} }   \subseteq   \lfloor  H_{{\mathrm{2}}}  \rfloor_{ P_{{\mathrm{2}}} }  $
  \item %
       $P_{{\mathrm{2}}}  \ottsym{=}  P_{{\mathrm{1}}} \, \cup \,  \txtsf{dom}( W ) $
  \item %
      $H_{{\mathrm{2}}}$ is from-scratch consistent (\Definitionref{def:fsc-graph}).
  \end{enumerate}
\end{thm}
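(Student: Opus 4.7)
The plan is to proceed by induction on the incremental derivation $\D_i$, carefully threading the pointer-set bookkeeping (conditions (ii)/(iii) and (2)/(3)/(4)) through each inductive case. At each node of the derivation, the induction hypothesis delivers a non-incremental derivation over a restricted graph, and I need to (a) stitch these non-incremental subderivations together into one that matches the rule applied incrementally, and (b) show that the write-set and cached-result bookkeeping in the \emph{output} graph $H_{{\mathrm{2}}}$ preserves \Definitionref{def:fsc-graph}, typically via \Lemmaref{lem:fsc-ext-graph}. The read/write-set composition given in \Figureref{fig:readswrites} is exactly what is needed to split the input pointer set $P_{{\mathrm{1}}}$ between subderivations: for a rule like \txtsf{Eval-bind}, I would apply the IH to $\D_{{\mathrm{1}}}$ with $P_{{\mathrm{1}}}$ (obtaining $P_{{\mathrm{1}}} \cup \txtsf{dom}(W_{{\mathrm{1}}})$), then to $\D_{{\mathrm{2}}}$ with that set, yielding the claimed $P_{{\mathrm{2}}} = P_{{\mathrm{1}}} \cup \txtsf{dom}(W)$ by the join/merge structure of the reads/writes definition.

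The routine cases fall into three groups. The rules above the double line (\txtsf{Eval-term} through \txtsf{Eval-nest}) do not touch the graph, so I would apply IHs to the subderivations and re-apply the same rule. For the allocation rules \txtsf{Eval-refDirty} and \txtsf{Eval-refClean}, the key observation is that $q = k @ \omega$ lies in $W$ and hence, by hypothesis (ii), is disjoint from $P_{{\mathrm{1}}}$; consequently $q \notin \txtsf{dom}(\lfloor H_{{\mathrm{1}}} \rfloor_{P_{{\mathrm{1}}}})$, so \txtsf{Eval-refPlain}'s freshness precondition is satisfied and the restricted update $\lfloor H_{{\mathrm{1}}} \rfloor_{P_{{\mathrm{1}}}}\{q \mapsto v\}$ yields $\lfloor H_{{\mathrm{2}}} \rfloor_{P_{{\mathrm{2}}}}$ with $P_{{\mathrm{2}}} = P_{{\mathrm{1}}} \cup \{q\}$. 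The thunk-allocation rules are symmetric. For \txtsf{Eval-getClean}, $q$ is in the read set so $q \in P_{{\mathrm{1}}}$, and $\lfloor H_{{\mathrm{1}}} \rfloor_{P_{{\mathrm{1}}}}(q) = v$, so \txtsf{Eval-getPlain} applies immediately. For \txtsf{Eval-scrubEdge}, the subject expression and result are unchanged and only edge statuses move, so the IH on the sole subderivation suffices directly.

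The interesting cases are \txtsf{Eval-forceClean} and \txtsf{Eval-computeDep}. For \txtsf{Eval-forceClean}, the incremental rule reuses a cached pair $(e, \mathrm{t})$ at $q$; I would invoke the from-scratch consistency of $H_{{\mathrm{1}}}$ (hypothesis (iii)) to extract a saved derivation $\D_q$ that is from-scratch consistent up to $H_{{\mathrm{1}}}$, yielding a non-incremental derivation of $\mathrm{t}$ from some $\lfloor G^q_{{\mathrm{1}}} \rfloor_{P_q}$ to $\lfloor G^q_{{\mathrm{2}}} \rfloor_{P^q_{{\mathrm{2}}}}$. By condition~(4) of \Definitionref{def:fsc}, $\lfloor G^q_{{\mathrm{2}}} \rfloor_{P^q_{{\mathrm{2}}}} \subseteq \lfloor H_{{\mathrm{1}}} \rfloor_{\txtsf{dom}(H_{{\mathrm{1}}})}$, so \Lemmaref{lem:weakening-ni} lifts this derivation into the context of $\lfloor H_{{\mathrm{1}}} \rfloor_{P_{{\mathrm{1}}}}$ (after adjusting the residual pointers), supplying the single premise of \txtsf{Eval-forcePlain}. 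The restriction's rule (*) converts the incremental entry $q{:}(e,\mathrm{t})$ into the plain $q{:}e$ required by \txtsf{Eval-forcePlain}. For \txtsf{Eval-computeDep}, the non-incremental system does not perform speculative recomputation, so I would \emph{discard} the first four premises from the non-incremental side and use only the IH on the final premise (the force of $p_{{\mathrm{0}}}$ in $G''_{{\mathrm{2}}}$). The work of the first four premises is instead used to restore from-scratch consistency of the intermediate graph: the IH on $\D_{{\mathrm{1}}}$ produces a non-incremental derivation that exactly witnesses \Definitionref{def:fsc} for the freshly-cached $(e', \mathrm{t}')$ at $q$, and the \txtsf{all-clean-out} check ensures no earlier cached derivation is silently invalidated by ambiguous naming. \Lemmaref{lem:fsc-ext-graph} then promotes from-scratch consistency of $G_{{\mathrm{1}}}$ to that of $G''_{{\mathrm{2}}}$, at which point the IH on the final premise applies.

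The main obstacle, and the reason for the elaborate auxiliary definitions, is \txtsf{Eval-computeDep}: reconciling the incremental system's freedom to recompute dependencies out of demand order with the strictly demand-driven non-incremental semantics. The technical tension shows up as a need to (i) stash the non-incremental derivation for $q$ in the graph invariant rather than in the surrounding derivation tree, (ii) weaken that stored derivation when it is finally redeemed by \txtsf{Eval-forceClean}, and (iii) prove that dirtying performed by nominal overwrite during the recomputation (the \txtsf{dirty-paths-in} in \txtsf{Eval-thunkDirty}/\txtsf{Eval-refDirty}) never removes any cached derivation that is still needed — this is precisely what the fifth, \txtsf{all-clean-out}, premise enforces and is what \Definitionref{def:fsc} is designed to track via its ``up to $H$'' parameter.
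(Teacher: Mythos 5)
Your proposal follows essentially the same route as the paper's proof: the same induction on $\D_i$, the same pointer-set bookkeeping split via the read/write-set structure, the same use of the stored derivation from graph consistency plus \lemref{weakening-ni} for \txtsf{Eval-forceClean}, and the same treatment of \txtsf{Eval-computeDep} (IH on $\D_1$ only to re-establish graph consistency via \lemref{fsc-ext-graph}, IH on the final premise for the actual non-incremental derivation). The one detail you leave implicit, which the paper makes explicit via \lemref{respect-write}, is that $ \lfloor  G'_{{\mathrm{2}}}  \rfloor_{ P_{{\mathrm{1}}} }   \ottsym{=}   \lfloor  H_{{\mathrm{1}}}  \rfloor_{ P_{{\mathrm{1}}} } $ (since $\D_1$ writes only outside $P_1$), which is needed to turn the derivation obtained from the IH on the last premise into one rooted at $ \lfloor  H_{{\mathrm{1}}}  \rfloor_{ P_{{\mathrm{1}}} } $.
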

\begin{proof}
  By induction on $\D_i \derives  H_{{\mathrm{1}}}   \vdash ^{ p }_{ \omega }  e   \Downarrow   H_{{\mathrm{2}}}  ;  \mathrm{t} $.

  \begin{itemize}
    \ProofCaseRule{Eval-term}
       By \Definitionref{def:rw}, $W  \ottsym{=}   \varepsilon $,
       so let $P_2 = P_1$.

       \begin{enumerate}[(1)]
       \item Apply Eval-term.
       \item We have $H_{{\mathrm{1}}}  \ottsym{=}  H_{{\mathrm{2}}}$ and $P_{{\mathrm{1}}}  \ottsym{=}  P_{{\mathrm{2}}}$
         so $ \lfloor  H_{{\mathrm{1}}}  \rfloor_{ P_{{\mathrm{1}}} }   \ottsym{=}   \lfloor  H_{{\mathrm{2}}}  \rfloor_{ P_{{\mathrm{2}}} } $.
       \item It follows from $ \txtsf{dom}( W )   \ottsym{=}   \emptyset $ and $P_2 = P_1$
         that $P_{{\mathrm{2}}}  \ottsym{=}  P_{{\mathrm{1}}} \, \cup \,  \txtsf{dom}( W ) $.
       \item It is given (iii) that $H_1$ is from-scratch consistent.
         \\
         We have $H_{{\mathrm{1}}}  \ottsym{=}  H_{{\mathrm{2}}}$, so $H_2$ is from-scratch consistent.
       \end{enumerate}

     \smallskip

     \ProofCaseRule{Eval-fork}
        Similar to the Eval-term case.

    \DerivationProofCase{Eval-getClean}{
       H_{{\mathrm{1}}} ( q )   \ottsym{=}  v
    }{
       H_{{\mathrm{1}}}   \vdash ^{ p }_{ \omega }  \ottkw{get} \, \ottsym{(}   \ottkw{ref} \,\Grn{ q }   \ottsym{)}   \Downarrow   H_{{\mathrm{1}}}  \ottsym{,}  \ottsym{(}  p  \ottsym{,}  \ottkw{obs} \, v  \ottsym{,}   \txtsf{clean}   \ottsym{,}  q  \ottsym{)}  ;  \ottkw{ret} \, v 
    }

    \begin{llproof}
      \eqPfParenR{ H_2 }{ H_{{\mathrm{1}}}  \ottsym{,}  \ottsym{(}  p  \ottsym{,}  \ottkw{obs} \, v  \ottsym{,}   \txtsf{clean}   \ottsym{,}  q  \ottsym{)} }{ Given }
      \eqPf{ e }{ \ottkw{get} \, \ottsym{(}   \ottkw{ref} \,\Grn{ q }   \ottsym{)} }{ \ditto }
      \Pf{}{}{  H_{{\mathrm{1}}} ( q )   \ottsym{=}  v }{ Premise }
      \eqPf{R}{ q {:} v }  {By \Definitionref{def:rw}}
      \subseteqPf{ \txtsf{dom}( R ) }{P_{{\mathrm{1}}}}  {Given (iii)}
      \subseteqPf{\{q\}}{P_{{\mathrm{1}}}}  {$R =  q {:} v $}
      \Pf{}{}{   \lfloor  H_{{\mathrm{1}}}  \rfloor_{ P_{{\mathrm{1}}} }  ( q )   \ottsym{=}  v }  {By \Definitionref{def:restrict}}
      \Pf{}{}{   \lfloor  H_{{\mathrm{1}}}  \rfloor_{ P_{{\mathrm{1}}} }    \vdash ^{ p }_{ \omega }  e   \Downarrow    \lfloor  H_{{\mathrm{1}}}  \rfloor_{ P_{{\mathrm{1}}} }   ;  \ottkw{ret} \, v }  {\byrule{Eval-getPlain}}
      \LetPf{P_{{\mathrm{2}}}}{P_{{\mathrm{1}}}} {}
      \eqPf{ \lfloor  H_{{\mathrm{1}}}  \rfloor_{ P_{{\mathrm{1}}} } }{ \lfloor  H_{{\mathrm{2}}}  \rfloor_{ P_{{\mathrm{1}}} } }  {By def.\ of restriction}
    (1)  \Hand
      \Pf{}{}{   \lfloor  H_{{\mathrm{1}}}  \rfloor_{ P_{{\mathrm{1}}} }    \vdash ^{ p }_{ \omega }  e   \Downarrow    \lfloor  H_{{\mathrm{2}}}  \rfloor_{ P_{{\mathrm{1}}} }   ;  \ottkw{ret} \, v  }{By above equality}
      \proofsep
      \eqPf{W}{ \varepsilon }  {By \Definitionref{def:rw}}
    (2) \Hand
      \subseteqPf{ \lfloor  H_{{\mathrm{1}}}  \rfloor_{ P_{{\mathrm{1}}} } }{ \lfloor  H_{{\mathrm{2}}}  \rfloor_{ P_{{\mathrm{1}}} } }  {If $=$ then $\subseteq$}
    (3)  \Hand
      \eqPf{P_{{\mathrm{2}}}}{P_{{\mathrm{1}}} \, \cup \,  \txtsf{dom}(  \varepsilon  ) }   {$P_{{\mathrm{2}}}  \ottsym{=}  P_{{\mathrm{1}}}$}
      \Pf{}{}{\text{$H_1$ from-scratch consistent}} {Given (iii)}
    (4)  \Hand
      \Pf{}{}{\text{$H_2$ from-scratch consistent}}
             {$H_2$ differs from $H_1$ only in its edges}
   \end{llproof}

    \DerivationProofCase{Eval-thunkDirty}
        {
              q  \ottsym{=}   k  @  \omega 
              \\
               H_{{\mathrm{1}}} \{ q {\mapsto} e_{{\mathrm{0}}} \}   \ottsym{=}  G_{{\mathrm{2}}}
              \\
               \txtsf{dirty-paths-in} ( G_{{\mathrm{2}}} , q )   \ottsym{=}  G_{{\mathrm{3}}}
        }
        {
           H_{{\mathrm{1}}}   \vdash ^{ p }_{ \omega }   \textbf{thunk}(\Grn{  \Grn{ \ottkw{nm} \, k }  }, e_{{\mathrm{0}}} )    \Downarrow   G_{{\mathrm{3}}}  \ottsym{,}  \ottsym{(}  p  \ottsym{,}  \ottkw{alloc} \, e_{{\mathrm{0}}}  \ottsym{,}   \txtsf{clean}   \ottsym{,}  q  \ottsym{)}  ;  \ottkw{ret} \, \ottsym{(}   \ottkw{thk} \,\Grn{ q }   \ottsym{)} 
        }

    \begin{llproof}
      \eqPfParenR{ H_2 }{ G_{{\mathrm{3}}}  \ottsym{,}  \ottsym{(}  p  \ottsym{,}  \ottkw{alloc} \, e_{{\mathrm{0}}}  \ottsym{,}   \txtsf{clean}   \ottsym{,}  q  \ottsym{)} }{ Given}
      \eqPf{ e }{  \textbf{thunk}(\Grn{  \Grn{ \ottkw{nm} \, k }  }, e_{{\mathrm{0}}} )  }{ Given}
      \eqPf{q}{  k  @  \omega }{ Given }
      \eqPf{R}{ \varepsilon } {By \Definitionref{def:rw}}
      \eqPf{W}{ q {:} e_{{\mathrm{0}}} } {\ditto}
      \disjointPf{ \txtsf{dom}( W ) }{P_{{\mathrm{1}}}}  {Given}
      \notinPf{q}{P_{{\mathrm{1}}}}  {}
      \notinPf{q}{ \txtsf{dom}(  \lfloor  H_{{\mathrm{1}}}  \rfloor_{ P_{{\mathrm{1}}} }  ) }  {From \Definitionref{def:restrict}}
      \decolumnizePf
      \Pf{}{}{  \lfloor  H_{{\mathrm{1}}}  \rfloor_{ P_{{\mathrm{1}}} }    \vdash ^{ p }_{ \omega }   \textbf{thunk}(\Grn{  \Grn{ \ottkw{nm} \, k }  }, e_{{\mathrm{0}}} )    \Downarrow     \lfloor  H_{{\mathrm{1}}}  \rfloor_{ P_{{\mathrm{1}}} }  \{ q {\mapsto} e_{{\mathrm{0}}} \}   ;  \ottkw{ret} \, \ottsym{(}   \ottkw{thk} \,\Grn{ q }   \ottsym{)} }{\byrule{Eval-thunkPlain}}
      \decolumnizePf
      (3) \Hand
      \LetPf{P_2}{P_1 \union \{q\}}  {}
      \eqPf{  \lfloor  H_{{\mathrm{1}}}  \rfloor_{ P_{{\mathrm{1}}} }  \{ q {\mapsto} e_{{\mathrm{0}}} \} }{ \lfloor   H_{{\mathrm{1}}} \{ q {\mapsto} e_{{\mathrm{0}}} \}   \rfloor_{ P_{{\mathrm{1}}} } }   {$q \notin P_1$}
      \continueeqPf{ \lfloor   \txtsf{dirty-paths-in} (  H_{{\mathrm{1}}} \{ q {\mapsto} e_{{\mathrm{0}}} \}  , q )   \rfloor_{ P_{{\mathrm{1}}} } }   {$q \notin P_1$}
      \continueeqPf{ \lfloor   \txtsf{dirty-paths-in} (  H_{{\mathrm{1}}} \{ q {\mapsto} e_{{\mathrm{0}}} \}  , q )   \ottsym{,}  \ottsym{(}  p  \ottsym{,}  \ottkw{alloc} \, e_{{\mathrm{0}}}  \ottsym{,}   \txtsf{clean}   \ottsym{,}  q  \ottsym{)}  \rfloor_{ P_{{\mathrm{1}}} } }   {From \Definitionref{def:restrict}}
      \decolumnizePf
      (1) \Hand
      \Pf{}{}{  \lfloor  H_{{\mathrm{1}}}  \rfloor_{ P_{{\mathrm{1}}} }    \vdash ^{ p }_{ \omega }   \textbf{thunk}(\Grn{  \Grn{ \ottkw{nm} \, k }  }, e_{{\mathrm{0}}} )    \Downarrow     \lfloor  H_{{\mathrm{2}}}  \rfloor_{ P_{{\mathrm{1}}} }  \{ q {\mapsto} e_{{\mathrm{0}}} \}   ;  \ottkw{ret} \, \ottsym{(}   \ottkw{thk} \,\Grn{ q }   \ottsym{)} }{By above equalities}
      \decolumnizePf
      \subseteqPf{ \lfloor  H_{{\mathrm{1}}}  \rfloor_{ P_{{\mathrm{1}}} } }{ \lfloor   H_{{\mathrm{1}}} \{ q {\mapsto} e_{{\mathrm{0}}} \}   \rfloor_{ P_{{\mathrm{2}}} } }   {Immediate}
      \subseteqPf{ \lfloor  H_{{\mathrm{1}}}  \rfloor_{ P_{{\mathrm{1}}} } }{ \lfloor  G_{{\mathrm{2}}}  \rfloor_{ P_{{\mathrm{2}}} } }   {By \Definitionref{def:restrict}}
      \subseteqPf{ \lfloor  H_{{\mathrm{1}}}  \rfloor_{ P_{{\mathrm{1}}} } }{ \lfloor  G_{{\mathrm{3}}}  \rfloor_{ P_{{\mathrm{2}}} } }   {Restriction ignores edges}
      (2) \Hand
      \subseteqPf{ \lfloor  H_{{\mathrm{1}}}  \rfloor_{ P_{{\mathrm{1}}} } }{ \lfloor  H_{{\mathrm{2}}}  \rfloor_{ P_{{\mathrm{2}}} } }   {Restriction ignores edges}
      \decolumnizePf
      \Pf{}{}{\text{$H_1$ from-scratch consistent}}
               {Given (iii)}
      (4) \Hand
      \Pf{}{}{\text{$H_2$ from-scratch consistent}}
             {By \Lemmaref{lem:fsc-ext-graph}}
    \end{llproof}

    \DerivationProofCase{Eval-thunkClean}
        {
              q  \ottsym{=}   k  @  \omega 
              \\
               \txtsf{exp}( H_{{\mathrm{1}}} ,  q )   \ottsym{=}  e_{{\mathrm{0}}}
        }
        {
           H_{{\mathrm{1}}}   \vdash ^{ p }_{ \omega }   \textbf{thunk}(\Grn{  \Grn{ \ottkw{nm} \, k }  }, e_{{\mathrm{0}}} )    \Downarrow   H_{{\mathrm{1}}}  \ottsym{,}  \ottsym{(}  p  \ottsym{,}  \ottkw{alloc} \, e_{{\mathrm{0}}}  \ottsym{,}   \txtsf{clean}   \ottsym{,}  q  \ottsym{)}  ;  \ottkw{ret} \, \ottsym{(}   \ottkw{thk} \,\Grn{ q }   \ottsym{)} 
        }

    \smallskip

    \begin{llproof}
      \eqPfParenR{ H_2 }{ H_{{\mathrm{1}}}  \ottsym{,}  \ottsym{(}  p  \ottsym{,}  \ottkw{alloc} \, e_{{\mathrm{0}}}  \ottsym{,}   \txtsf{clean}   \ottsym{,}  q  \ottsym{)} }{ Given}
      \eqPf{ e }{  \textbf{thunk}(\Grn{  \Grn{ \ottkw{nm} \, k }  }, e_{{\mathrm{0}}} )  }{ Given}
      \eqPf{q}{ k  @  \omega }{ Given }
      \eqPf{W}{ q {:} e_{{\mathrm{0}}} } {By \Definitionref{def:rw}}
      \disjointPf{ \txtsf{dom}( W ) }{P_{{\mathrm{1}}}}  {Given}
      \notinPf{q}{P_{{\mathrm{1}}}}  {}
      \notinPf{q}{ \txtsf{dom}(  \lfloor  H_{{\mathrm{1}}}  \rfloor_{ P_{{\mathrm{1}}} }  ) }  {From \Definitionref{def:restrict}}
      \decolumnizePf
      \Pf{}{}{  \lfloor  H_{{\mathrm{1}}}  \rfloor_{ P_{{\mathrm{1}}} }    \vdash ^{ p }_{ \omega }   \textbf{thunk}(\Grn{  \Grn{ \ottkw{nm} \, k }  }, e_{{\mathrm{0}}} )    \Downarrow     \lfloor  H_{{\mathrm{1}}}  \rfloor_{ P_{{\mathrm{1}}} }  \{ q {\mapsto} e_{{\mathrm{0}}} \}   ;  \ottkw{ret} \, \ottsym{(}   \ottkw{thk} \,\Grn{ q }   \ottsym{)} }{\byrule{Eval-thunkPlain}}
      \decolumnizePf
      \eqPf{  \lfloor  H_{{\mathrm{1}}}  \rfloor_{ P_{{\mathrm{1}}} }  \{ q {\mapsto} e_{{\mathrm{0}}} \} }
           { \lfloor  H_{{\mathrm{1}}}  \ottsym{,}  \ottsym{(}  p  \ottsym{,}  \ottkw{alloc} \, e_{{\mathrm{0}}}  \ottsym{,}   \txtsf{clean}   \ottsym{,}  q  \ottsym{)}  \rfloor_{ P_{{\mathrm{1}}} } }
           {From \Definitionref{def:restrict} and $ \txtsf{exp}( H_{{\mathrm{1}}} ,  q )   \ottsym{=}  e_{{\mathrm{0}}}$}
      \decolumnizePf
      \Hand
(1)      \Pf{}{}{  \lfloor  H_{{\mathrm{1}}}  \rfloor_{ P_{{\mathrm{1}}} }    \vdash ^{ p }_{ \omega }   \textbf{thunk}(\Grn{  \Grn{ \ottkw{nm} \, k }  }, e_{{\mathrm{0}}} )    \Downarrow     \lfloor  H_{{\mathrm{2}}}  \rfloor_{ P_{{\mathrm{1}}} }  \{ q {\mapsto} e_{{\mathrm{0}}} \}   ;  \ottkw{ret} \, \ottsym{(}   \ottkw{thk} \,\Grn{ q }   \ottsym{)} }{By above equalities}
\end{llproof}

    \smallskip

    Since $H_{{\mathrm{2}}}  \ottsym{=}  H_{{\mathrm{1}}}  \ottsym{,}  \ottsym{(}  p  \ottsym{,}  \ottkw{alloc} \, e_{{\mathrm{0}}}  \ottsym{,}   \txtsf{clean}   \ottsym{,}  q  \ottsym{)}$,
    parts (2)--(4) are straightforward.

    \medskip

    \ProofCaseRule{Eval-refDirty}
       Similar to the Eval-thunkDirty case.

    \smallskip

    \ProofCaseRule{Eval-refClean}
       Similar to the Eval-thunkClean case.

    \smallskip

    \DerivationProofCase{Eval-bind}
        {
           H_{{\mathrm{1}}}   \vdash ^{ p }_{ \omega }  e_{{\mathrm{1}}}   \Downarrow   H'  ;  \ottkw{ret} \, v 
          \\
           H'   \vdash ^{ p }_{ \omega }  \ottsym{[}  v  \ottsym{/}  x  \ottsym{]}  e_{{\mathrm{2}}}   \Downarrow   H_{{\mathrm{2}}}  ;  \mathrm{t} 
        }
        {
           H_{{\mathrm{1}}}   \vdash ^{ p }_{ \omega }   \textbf{let}\, x \,{\leftarrow}\, e_{{\mathrm{1}}} \, \ottkw{in} \, e_{{\mathrm{2}}}    \Downarrow   H_{{\mathrm{2}}}  ;  \mathrm{t} 
        }

        \begin{llproof}
            \Pf{}{\D_1 \derives}{ H_{{\mathrm{1}}}   \vdash ^{ p }_{ \omega }  e_{{\mathrm{1}}}   \Downarrow   H'  ;  \ottkw{ret} \, v }   {Subderivation}
        (i)~~\Pf{}{}{\D_1 \satisfactory}  {By \Definitionref{def:global-sat}}
             \eqPfParenR{R} { R_{{\mathrm{1}}}  \mergesym  \ottsym{(}  R_{{\mathrm{2}}}  \ottsym{-}  W_{{\mathrm{1}}}  \ottsym{)} }    {By \Definitionref{def:rw}}
             \eqPf{W} { W_{{\mathrm{1}}}  \joinsym  W_{{\mathrm{2}}} }    {\ditto}
             \decolumnizePf
             \subseteqPf{ \txtsf{dom}(  R_{{\mathrm{1}}}  \mergesym  \ottsym{(}  R_{{\mathrm{2}}}  \ottsym{-}  W_{{\mathrm{1}}}  \ottsym{)}  ) } {P_{{\mathrm{1}}}}    {Given}
             \disjointPf{ \txtsf{dom}(  W_{{\mathrm{1}}}  \joinsym  W_{{\mathrm{2}}}  ) } {P_{{\mathrm{1}}}}    {Given}
             \Pf{}{} { \D_{{\mathrm{1}}}  \txtsf{~reads~}  R_{{\mathrm{1}}}  \txtsf{~writes~}  W_{{\mathrm{1}}} }    {\ditto}
             \Pf{}{} { \D_{{\mathrm{2}}}  \txtsf{~reads~}  R_{{\mathrm{2}}}  \txtsf{~writes~}  W_{{\mathrm{2}}} }    {\ditto}
             \decolumnizePf
        (ii)~~\subseteqPf{ \txtsf{dom}( R_{{\mathrm{1}}} ) }{P_{{\mathrm{1}}}}   {$  \txtsf{dom}(  R_{{\mathrm{1}}}  \mergesym  \ottsym{(}  R_{{\mathrm{2}}}  \ottsym{-}  W_{{\mathrm{1}}}  \ottsym{)}  )   \subseteq  P_{{\mathrm{1}}} $}
        (ii)~~\disjointPf{ \txtsf{dom}( W_{{\mathrm{2}}} ) }{P_{{\mathrm{1}}}}  {$  \txtsf{dom}(  W_{{\mathrm{1}}}  \joinsym  W_{{\mathrm{2}}}  )   \mathrel{\bot}  P_{{\mathrm{1}}} $}
          \Pf{}{}{  \lfloor  H_{{\mathrm{1}}}  \rfloor_{ P_{{\mathrm{1}}} }    \vdash ^{ p }_{ \omega }  e_{{\mathrm{1}}}   \Downarrow    \lfloor  H'  \rfloor_{ P' }   ;  \ottkw{ret} \, v }
                 {By i.h.}
          \subseteqPf{ \lfloor  H_{{\mathrm{1}}}  \rfloor_{ P_{{\mathrm{1}}} } }{ \lfloor  H'  \rfloor_{ P' } }  {\ditto}
          \eqPf{P'}{P_{{\mathrm{1}}} \, \cup \,  \txtsf{dom}( W_{{\mathrm{1}}} ) }   {\ditto}
          \Pf{}{}{\text{$H'$ from-scratch consistent}}
                  {\ditto}
          \proofsep
          \Pf{}{\D_2 \derives}{ H'   \vdash ^{ p }_{ \omega }  \ottsym{[}  v  \ottsym{/}  x  \ottsym{]}  e_{{\mathrm{2}}}   \Downarrow   H_{{\mathrm{2}}}  ;  \mathrm{t} }   {Subderivation}
        (i) \Pf{}{}{\D_2 \satisfactory}  {By \Definitionref{def:global-sat}}
             \decolumnizePf
            \subseteqPf{ \txtsf{dom}(  R_{{\mathrm{1}}}  \mergesym  \ottsym{(}  R_{{\mathrm{2}}}  \ottsym{-}  W_{{\mathrm{1}}}  \ottsym{)}  ) }{P_{{\mathrm{1}}}}   {Given}
            \subseteqPf{ \txtsf{dom}( R_{{\mathrm{2}}}  \ottsym{-}  W_{{\mathrm{1}}} ) }{P_{{\mathrm{1}}}}   {}
            \subseteqPf{ \txtsf{dom}( R_{{\mathrm{2}}} ) }{P_{{\mathrm{1}}} \, \cup \,  \txtsf{dom}( W_{{\mathrm{1}}} ) }   {}
        (ii)  \subseteqPf{ \txtsf{dom}( R_{{\mathrm{2}}} ) }{P'}   {$P'  \ottsym{=}  P_{{\mathrm{1}}} \, \cup \,  \txtsf{dom}( W_{{\mathrm{1}}} ) $}
            \decolumnizePf
            \disjointPf{ \txtsf{dom}(  W_{{\mathrm{1}}}  \joinsym  W_{{\mathrm{2}}}  ) }{P_{{\mathrm{1}}}}   {Given}
            \disjointPf{ \txtsf{dom}( W_{{\mathrm{1}}} ) }{ \txtsf{dom}( W_{{\mathrm{2}}} ) }   {From def.\ of $\joinsym$}
            \disjointPf{ \txtsf{dom}( W_{{\mathrm{2}}} ) }{P_{{\mathrm{1}}} \, \cup \,  \txtsf{dom}( W_{{\mathrm{1}}} ) }   {}
        (ii) \disjointPf{ \txtsf{dom}( W_{{\mathrm{2}}} ) }{P'}   {$P'  \ottsym{=}  P_{{\mathrm{1}}} \, \cup \,  \txtsf{dom}( W_{{\mathrm{1}}} ) $}
            \decolumnizePf
            \Pf{}{}{  \lfloor  H'  \rfloor_{ P' }    \vdash ^{ p }_{ \omega }  \ottsym{[}  v  \ottsym{/}  x  \ottsym{]}  e_{{\mathrm{2}}}   \Downarrow    \lfloor  H_{{\mathrm{2}}}  \rfloor_{ P_{{\mathrm{2}}} }   ;  \mathrm{t} }
                   {By i.h.}
            \subseteqPf{ \lfloor  H'  \rfloor_{ P' } }{ \lfloor  H_{{\mathrm{2}}}  \rfloor_{ P_{{\mathrm{2}}} } }  {\ditto}
            \eqPf{P_{{\mathrm{2}}}}{P' \, \cup \,  \txtsf{dom}( W_{{\mathrm{2}}} ) }
                   {\ditto}
        (4) \Hand
            \Pf{}{}{\text{$H_2$ from-scratch consistent}}
                   {\ditto}
        (2) \Hand
            \subseteqPf{ \lfloor  H_{{\mathrm{1}}}  \rfloor_{ P_{{\mathrm{1}}} } }{ \lfloor  H_{{\mathrm{2}}}  \rfloor_{ P_{{\mathrm{2}}} } }  {By transitivity of $\subseteq$}
           \eqPf{P_{{\mathrm{2}}}}{P_{{\mathrm{1}}} \, \cup \,  \txtsf{dom}( W_{{\mathrm{1}}} )  \, \cup \,  \txtsf{dom}( W_{{\mathrm{2}}} ) }
                 {By $P'  \ottsym{=}  P_{{\mathrm{1}}} \, \cup \,  \txtsf{dom}( W_{{\mathrm{1}}} ) $}
        (3) \Hand
           \eqPf{P_{{\mathrm{2}}}}{P_{{\mathrm{1}}} \, \cup \,  \txtsf{dom}( W ) }
                 {By $W  \ottsym{=}   W_{{\mathrm{1}}}  \joinsym  W_{{\mathrm{2}}} $}
          \proofsep
          \Pf{}{}{  \lfloor  H_{{\mathrm{1}}}  \rfloor_{ P_{{\mathrm{1}}} }    \vdash ^{ p }_{ \omega }  e_{{\mathrm{1}}}   \Downarrow    \lfloor  H'  \rfloor_{ P' }   ;  \ottkw{ret} \, v }
                 {Above}
          \Pf{}{}{  \lfloor  H'  \rfloor_{ P' }    \vdash ^{ p }_{ \omega }  \ottsym{[}  v  \ottsym{/}  x  \ottsym{]}  e_{{\mathrm{2}}}   \Downarrow    \lfloor  H_{{\mathrm{2}}}  \rfloor_{ P_{{\mathrm{2}}} }   ;  \mathrm{t} }
                 {Above}
          \proofsep
      (1) \Hand
          \Pf{}{}{  \lfloor  H_{{\mathrm{1}}}  \rfloor_{ P_{{\mathrm{1}}} }    \vdash ^{ p }_{ \omega }   \textbf{let}\, x \,{\leftarrow}\, e_{{\mathrm{1}}} \, \ottkw{in} \, e_{{\mathrm{2}}}    \Downarrow    \lfloor  H_{{\mathrm{2}}}  \rfloor_{ P_{{\mathrm{2}}} }   ;  \mathrm{t} }
                 {\byrule{Eval-bind}}
        \end{llproof}

        \medskip

        \ProofCaseRule{Eval-app}
           Similar to the Eval-bind case.

        \ProofCaseRule{Eval-nest}
             Similar to the Eval-bind case.

        \ProofCaseRule{Eval-fix}
             The input and output graphs of the subderivation match those of the conclusion,
             as do the read and write sets according to \Definitionref{def:rw}.
             Thus, we can just use the i.h.\ and apply Eval-fix.

        \ProofCaseRule{Eval-case}
             Similar to the Eval-fix case.

        \ProofCaseRule{Eval-split}
             Similar to the Eval-case case.

        \ProofCaseRule{Eval-namespace}
             Similar to the Eval-fix case.

    \DerivationProofCase{Eval-computeDep}
        {
          \arrayenvbl{
               \txtsf{exp}( H_{{\mathrm{1}}} ,  q )   \ottsym{=}  e'  \\
               \txtsf{del-edges-out} (  H_{{\mathrm{1}}} \{ q {\mapsto} e' \}  , q )   \ottsym{=}  G'_{{\mathrm{1}}}  \\
               \D_{{\mathrm{1}}}  \derives   G'_{{\mathrm{1}}}   \vdash ^{ q }_{  \txtsf{namespace}(  q  )  }  e'   \Downarrow   G_{{\mathrm{2}}}  ;  \mathrm{t}'    \\
          }
          \\
          \arrayenvbl{
             G_{{\mathrm{2}}} \{ q {\mapsto}( e' , \mathrm{t}' )\}   \ottsym{=}  G'_{{\mathrm{2}}} \\
             \txtsf{all-clean-out}( G'_{{\mathrm{2}}} , q )  \\
             \D_{{\mathrm{2}}}  \derives   G'_{{\mathrm{2}}}   \vdash ^{ p }_{ \omega }  \ottkw{force} \, \ottsym{(}   \ottkw{thk} \,\Grn{ p_{{\mathrm{0}}} }   \ottsym{)}   \Downarrow   H_{{\mathrm{2}}}  ;  \mathrm{t}  
          }
        }
        {
           H_{{\mathrm{1}}}   \vdash ^{ p }_{ \omega }  \ottkw{force} \, \ottsym{(}   \ottkw{thk} \,\Grn{ p_{{\mathrm{0}}} }   \ottsym{)}   \Downarrow   H_{{\mathrm{2}}}  ;  \mathrm{t} 
        }

    \begin{llproof}
      \eqPf{e} {\ottkw{force} \, \ottsym{(}   \ottkw{thk} \,\Grn{ p_{{\mathrm{0}}} }   \ottsym{)} } {Given}
      \eqPfParenR{R} { R_{{\mathrm{1}}}  \mergesym  R_{{\mathrm{2}}} }   {By \Definitionref{def:rw}}
      \eqPf{W} {W_{{\mathrm{2}}}}   {\ditto}
      \Pf{}{} { \D_{{\mathrm{1}}}  \txtsf{~reads~}  R_{{\mathrm{1}}}  \txtsf{~writes~}  W_{{\mathrm{1}}} }    {\ditto}
      \Pf{}{} { \D_{{\mathrm{2}}}  \txtsf{~reads~}  R_{{\mathrm{2}}}  \txtsf{~writes~}  W }    {\ditto}
      \subseteqPf{ \txtsf{dom}( W_{{\mathrm{1}}} ) }{ \txtsf{dom}( W_{{\mathrm{2}}} ) }    {\ditto}
    \end{llproof}

    \smallskip

    We don't immediately need to apply the i.h.\ to $\D_1$, because that computation will be done later
    in the reference derivation.
    But we do need to apply the i.h.\ to
    $\D_2 \derives  G'_{{\mathrm{2}}}   \vdash ^{ p }_{ \omega }  e   \Downarrow   H_{{\mathrm{2}}}  ;  \mathrm{t} $.
    So we need to show part (iii) of the statement, which says that each \emph{cached} computation
    in the input graph is consistent with respect to an earlier version of the graph.

    Since we're adding such a computation $e'$ in $G'_{{\mathrm{2}}}$, which is the input
    graph of $\D_2$, we have to show that the computation of $e'$ (by $\D_1$) is consistent,
    which means applying the i.h.\ to $\D_1$.

    \smallskip

    \begin{llproof}
      (iii) \subseteqPf{ \txtsf{dom}( R_{{\mathrm{1}}} ) }{P_{{\mathrm{1}}}}   {$  \txtsf{dom}( R )   \subseteq  P_{{\mathrm{1}}} $}
      (iii) \disjointPf{ \txtsf{dom}( W_{{\mathrm{1}}} ) }{P_{{\mathrm{1}}}}   {$  \txtsf{dom}( W )   \mathrel{\bot}  P_{{\mathrm{1}}} $}
      \Pf{}{}{\text{$H_1$ from-scratch consistent}}
             {Given (iii)}
      \Pf{}{}{\text{$G_1'$ from-scratch consistent}}
             {$G_1'$ differs from $H_1$ only in its edges}
             \trailingjust{(note that $q$ points to $e'$ but has no cached result)}
      \Pf{}{\D_1 \derives}{ G'_{{\mathrm{1}}}   \vdash ^{ q }_{  \txtsf{namespace}(  q  )  }  e'   \Downarrow   G_{{\mathrm{2}}}  ;  \mathrm{t}' }  {Subderivation}
      (1) \Pf{}{}{  \lfloor  G'_{{\mathrm{1}}}  \rfloor_{ P_{{\mathrm{1}}} }    \vdash ^{ q }_{  \txtsf{namespace}(  q  )  }  e'   \Downarrow    \lfloor  G_{{\mathrm{2}}}  \rfloor_{ P'_{{\mathrm{1}}} }   ;  \mathrm{t}' }  {By i.h.}
      (2) \subseteqPf{ \lfloor  G'_{{\mathrm{1}}}  \rfloor_{ P_{{\mathrm{1}}} } }{ \lfloor  G_{{\mathrm{2}}}  \rfloor_{ P'_{{\mathrm{1}}} } } {\ditto}
      (3) \eqPf{P'_{{\mathrm{1}}}}{P_{{\mathrm{1}}} \, \cup \,  \txtsf{dom}( W_{{\mathrm{1}}} ) }   {\ditto}
      (4) \Pf{}{}{\text{$G_2$ from-scratch consistent}}
             {\ditto}
      \subseteqPf{ \lfloor  G_{{\mathrm{2}}}  \rfloor_{ P'_{{\mathrm{1}}} } }{ \lfloor  G_{{\mathrm{2}}}  \rfloor_{ P'_{{\mathrm{1}}} } }  {If $=$ then $\subseteq$}
      \proofsep
      \Pf{}{}{\text{$G_2'$ from-scratch consistent}}
             {By \Lemmaref{lem:fsc-ext-graph}}
    \end{llproof}

    \smallskip

    Having ``stowed away'' the consistency of $e'$, we can move on to $\D_2$.

    \smallskip

    \begin{llproof}
      \proofsep
           \Pf{}{\D_2 \derives}{  G'_{{\mathrm{2}}}   \vdash ^{ p }_{ \omega }  e   \Downarrow   H_{{\mathrm{2}}}  ;  \mathrm{t}  }{ Subderivation }
      (i)~~~\Pf{}{}{ \D_2 \satisfactory }{ $\D_2$ is a subderivation of $\D_{i}$ }
           \subseteqPf{ \txtsf{dom}( R ) }{P_{{\mathrm{1}}} \AND   \txtsf{dom}( W )   \mathrel{\bot}  P_{{\mathrm{1}}} }   {Given}
      (ii)~~~\subseteqPf{ \txtsf{dom}( R_{{\mathrm{2}}} ) }{P_{{\mathrm{1}}} \AND   \txtsf{dom}( W_{{\mathrm{2}}} )   \mathrel{\bot}  P_{{\mathrm{1}}} }   {Using above equalities}
      (iii)~~~\Pf{}{}{\text{$H_1$ from-scratch consistent}}
             {\ditto}
      (1)~~~\Pf{}{}{   \lfloor  G'_{{\mathrm{2}}}  \rfloor_{ P_{{\mathrm{1}}} }    \vdash ^{ p }_{ \omega }  e   \Downarrow    \lfloor  H_{{\mathrm{2}}}  \rfloor_{ P_{{\mathrm{2}}} }   ;  \mathrm{t}  }{ By i.h.}
      (2)~~~\subseteqPf{ \lfloor  G'_{{\mathrm{2}}}  \rfloor_{ P_{{\mathrm{1}}} } }{ \lfloor  H_{{\mathrm{2}}}  \rfloor_{ P_{{\mathrm{2}}} } }   {\ditto}
      \Hand
      (3)~~~\eqPf{P_{{\mathrm{2}}}}{P_{{\mathrm{1}}} \, \cup \,  \txtsf{dom}( W ) }   {\ditto}
      \Hand
      (4)~~~\Pf{}{}{\text{$H_2$ from-scratch consistent}}
             {\ditto}
      \proofsep
      \eqPf{ \lfloor  H_{{\mathrm{1}}}  \rfloor_{ P_{{\mathrm{1}}} } }{ \lfloor   H_{{\mathrm{1}}} \{ q {\mapsto} e' \}   \rfloor_{ P_{{\mathrm{1}}} } }  {Follows from $ \txtsf{exp}( H_{{\mathrm{1}}} ,  q )   \ottsym{=}  e'$}
      \continueeqPf{ \lfloor   \txtsf{del-edges-out} (  H_{{\mathrm{1}}} \{ q {\mapsto} e' \}  , q )   \rfloor_{ P_{{\mathrm{1}}} } }  {\Definitionref{def:restrict} ignores edges}
      \continueeqPf{ \lfloor  G'_{{\mathrm{1}}}  \rfloor_{ P_{{\mathrm{1}}} } }  {By above equality}
    \end{llproof}

    \smallskip

    We need to show $ \lfloor  G'_{{\mathrm{1}}}  \rfloor_{ P_{{\mathrm{1}}} }   \ottsym{=}   \lfloor  G_{{\mathrm{2}}}  \rfloor_{ P_{{\mathrm{1}}} } $.  That is,
    evaluating $e'$---which will be done \emph{inside} the reference derivation's version
    of $\D_2$---doesn't change anything in $P_{{\mathrm{1}}}$.

    Fortunately, we know that $  \txtsf{dom}( W )   \mathrel{\bot}  P_{{\mathrm{1}}} $ and $  \txtsf{dom}( W_{{\mathrm{1}}} )   \subseteq   \txtsf{dom}( W )  $.
    Therefore $  \txtsf{dom}( W_{{\mathrm{1}}} )   \mathrel{\bot}  P_{{\mathrm{1}}} $.

    By \Lemmaref{lem:respect-write}, $G_{{\mathrm{2}}}$ agrees with $G'_{{\mathrm{1}}}$ on
    $ \txtsf{dom}( G'_{{\mathrm{1}}} )   \ottsym{-}   \txtsf{dom}( W_{{\mathrm{1}}} ) $.  Since $W_1$ is disjoint from $P_1$,
    we have that $G_{{\mathrm{2}}}$ agrees with $G'_{{\mathrm{1}}}$ on $P_1$.

    Therefore $ \lfloor  G'_{{\mathrm{1}}}  \rfloor_{ P_{{\mathrm{1}}} }   \ottsym{=}   \lfloor  G_{{\mathrm{2}}}  \rfloor_{ P_{{\mathrm{1}}} } $.

    Now we'll show that $ \lfloor  G_{{\mathrm{2}}}  \rfloor_{ P_{{\mathrm{1}}} }   \ottsym{=}   \lfloor  G'_{{\mathrm{2}}}  \rfloor_{ P_{{\mathrm{1}}} } $, that is,
    $ \lfloor  G_{{\mathrm{2}}}  \rfloor_{ P_{{\mathrm{1}}} }   \ottsym{=}   \lfloor   G_{{\mathrm{2}}} \{ q {\mapsto}( e' , \mathrm{t}' )\}   \rfloor_{ P_{{\mathrm{1}}} } $.

    \begin{itemize}
    \item     If $q \notin P_1$ then this follows easily from \Definitionref{def:restrict}.
    \item    Otherwise, $q \in P_1$.  We have $ \txtsf{exp}( G'_{{\mathrm{1}}} ,  q )   \ottsym{=}  e'$
      and therefore $ \txtsf{exp}( G_{{\mathrm{2}}} ,  q )   \ottsym{=}  e'$, so updating $G_{{\mathrm{2}}}$ with $q$ pointing
      to $e'$ doesn't change the restriction.
    \end{itemize}

    \smallskip

    \begin{llproof}
      \eqPf{ \lfloor  H_{{\mathrm{1}}}  \rfloor_{ P_{{\mathrm{1}}} } }{ \lfloor  G'_{{\mathrm{2}}}  \rfloor_{ P_{{\mathrm{1}}} } } {Shown above}
      \Hand
      (1)~~\Pf{}{}{   \lfloor  H_{{\mathrm{1}}}  \rfloor_{ P_{{\mathrm{1}}} }    \vdash ^{ p }_{ \omega }  e   \Downarrow    \lfloor  H_{{\mathrm{2}}}  \rfloor_{ P_{{\mathrm{2}}} }   ;  \mathrm{t} }  {By above equality}
      \Hand
      (2)~~\subseteqPf{ \lfloor  H_{{\mathrm{1}}}  \rfloor_{ P_{{\mathrm{1}}} } }{ \lfloor  H_{{\mathrm{2}}}  \rfloor_{ P_{{\mathrm{2}}} } }   {\ditto}
    \end{llproof}

    \smallskip

    \DerivationProofCase{Eval-forceClean}
        {
           H_{{\mathrm{1}}} ( q ) = ( e , \mathrm{t} ) 
          \\
           \txtsf{all-clean-out}( H_{{\mathrm{1}}} , q ) 
        }
        {
           H_{{\mathrm{1}}}   \vdash ^{ p }_{ \omega }  \ottkw{force} \, \ottsym{(}   \ottkw{thk} \,\Grn{ q }   \ottsym{)}   \Downarrow   H_{{\mathrm{1}}}  \ottsym{,}  \ottsym{(}  p  \ottsym{,}  \ottkw{obs} \, \mathrm{t}  \ottsym{,}   \txtsf{clean}   \ottsym{,}  q  \ottsym{)}  ;  \mathrm{t} 
        }

        \begin{llproof}
          \eqPfParenR{H_{{\mathrm{2}}}}{H_{{\mathrm{1}}}  \ottsym{,}  \ottsym{(}  p  \ottsym{,}  \ottkw{obs} \, \mathrm{t}  \ottsym{,}   \txtsf{clean}   \ottsym{,}  q  \ottsym{)}}  {Given}
          \proofsep
          \eqPf{ H_{{\mathrm{1}}} ( q ) }{(e,t)}  {Premise}
          \Pf{}{}{\text{$H_1$ from-scratch consistent over $P_1$}} {Given (iii)}
          \Pf{}{\D_q}{\satisfactory}  {\Definitionref{def:fsc}~(i)}
          \subseteqPf{ \txtsf{dom}( R_q ) }{P_q}  {\ditto~(ii)}
          \disjointPf{ \txtsf{dom}( W_q ) }{P_q}  {\ditto~(ii)}
          \eqPfParenR{R}{R_q  \ottsym{,}   q {:}( e , \mathrm{t} ) }   {By \Definitionref{def:rw} for Eval-forceClean ($\D' = \D_q$)}
          \eqPf{W}{W_q}   {\ditto}
        \end{llproof}

        \smallskip

        To show that $\D_q$ is consistent, we use
        assumption (iii) that $H_1$ is from-scratch consistent.
        We have $ q {:}( e , \mathrm{t} ) $ in $H_{{\mathrm{1}}}$.  By \Definitionref{def:fsc-graph}),
        $\D_q \derives  G_q   \vdash ^{ p_q }_{ \omega_q }  e   \Downarrow   G_q'  ;  \mathrm{t} $
        is from-scratch consistent for some $ P_q  \subseteq   \txtsf{dom}( G_q )  $
        up to $H_{{\mathrm{1}}}$.

        Now we turn to \Definitionref{def:fsc}.

        \smallskip

        \begin{llproof}
          \Pf{}{}{  \lfloor  G_q  \rfloor_{ P_q }    \vdash ^{ p_q }_{ \omega_q }  e   \Downarrow    \lfloor  G_q'  \rfloor_{ P_q \, \cup \,  \txtsf{dom}( W_q )  }   ;  \mathrm{t} }  {By \Definitionref{def:fsc} (1)}
          \subseteqPf{ \lfloor  G_q  \rfloor_{ P_q } }{ \lfloor  G_q'  \rfloor_{ P_q \, \cup \,  \txtsf{dom}( W_q )  } }  {By \Definitionref{def:fsc} (2)}
          \subseteqPf{ \lfloor  G_q'  \rfloor_{ P_q \, \cup \,  \txtsf{dom}( W_q )  } }{ \lfloor  H_{{\mathrm{1}}}  \rfloor_{ P_{{\mathrm{1}}} } }  {By \Definitionref{def:fsc} (4)}
          \proofsep
          \Pf{}{}{  \lfloor  H_{{\mathrm{1}}}  \rfloor_{ P_{{\mathrm{1}}} }    \vdash ^{ p_q }_{ \omega_q }  e   \Downarrow    \lfloor  H_{{\mathrm{1}}}  \rfloor_{ P_{{\mathrm{1}}} \, \cup \,  \txtsf{dom}( W_q )  }   ;  \mathrm{t} }  {By \Lemmaref{lem:weakening-ni}}
          \proofsep
          \LetPf{P_{{\mathrm{2}}}}{P_{{\mathrm{1}}} \, \cup \,  \txtsf{dom}( W_q ) }   {}
          \decolumnizePf
          \Pf{}{}{  \lfloor  H_{{\mathrm{1}}}  \rfloor_{ P_{{\mathrm{1}}} }    \vdash ^{ p }_{ \omega }  e   \Downarrow    \lfloor  H_{{\mathrm{1}}}  \rfloor_{ P_{{\mathrm{2}}} }   ;  \mathrm{t} }  {By above equality}
          \Pf{}{}{  \lfloor  H_{{\mathrm{1}}}  \rfloor_{ P_{{\mathrm{1}}} }    \vdash ^{ p }_{ \omega }  \ottkw{force} \, \ottsym{(}   \ottkw{thk} \,\Grn{ q }   \ottsym{)}   \Downarrow    \lfloor  H_{{\mathrm{1}}}  \rfloor_{ P_{{\mathrm{2}}} }   ;  \mathrm{t} }  {\byrule{Eval-forcePlain}}
          \Pf{}{}{  \lfloor  H_{{\mathrm{1}}}  \rfloor_{ P_{{\mathrm{1}}} }    \vdash ^{ p }_{ \omega }  \ottkw{force} \, \ottsym{(}   \ottkw{thk} \,\Grn{ q }   \ottsym{)}   \Downarrow    \lfloor  H_{{\mathrm{1}}}  \ottsym{,}  \ottsym{(}  p  \ottsym{,}  \ottkw{obs} \, \mathrm{t}  \ottsym{,}   \txtsf{clean}   \ottsym{,}  q  \ottsym{)}  \rfloor_{ P_{{\mathrm{2}}} }   ;  \mathrm{t} }  {By \Definitionref{def:restrict}}
        (1) \Hand
          \Pf{}{}{  \lfloor  H_{{\mathrm{1}}}  \rfloor_{ P_{{\mathrm{1}}} }    \vdash ^{ p }_{ \omega }  \ottkw{force} \, \ottsym{(}   \ottkw{thk} \,\Grn{ q }   \ottsym{)}   \Downarrow    \lfloor  H_{{\mathrm{2}}}  \rfloor_{ P_{{\mathrm{2}}} }   ;  \mathrm{t} }  {By above equality}
          \decolumnizePf
           \subseteqPf{ \lfloor  H_{{\mathrm{1}}}  \rfloor_{ P_{{\mathrm{1}}} } }{ \lfloor  H_{{\mathrm{1}}}  \rfloor_{ P_{{\mathrm{2}}} } }   {By a property of \Definitionref{def:restrict}}
        (2) \Hand
           \subseteqPf{ \lfloor  H_{{\mathrm{1}}}  \rfloor_{ P_{{\mathrm{1}}} } }{ \lfloor  H_{{\mathrm{2}}}  \rfloor_{ P_{{\mathrm{2}}} } }   {By a property of \Definitionref{def:restrict}}
        (3) \Hand
          \eqPf{P_{{\mathrm{2}}}}{P_{{\mathrm{1}}} \, \cup \,  \txtsf{dom}( W ) }   {$W = W_q$}
        (4) \Hand
          \Pf{}{}{\text{$H_2$ from-scratch consistent}}
             {$H_2$ differs from $H_1$ only in its edges}
        \end{llproof}

        \smallskip

    \DerivationProofCase{Eval-scrubEdge}
    {
      \arrayenvbl{
         \txtsf{all-clean-out}( \ottsym{(}  G_{{\mathrm{1}}}  \ottsym{,}  G_{{\mathrm{2}}}  \ottsym{)} , q_{{\mathrm{2}}} ) 
        \\
         \txtsf{consistent-action} ( \ottsym{(}  G_{{\mathrm{1}}}  \ottsym{,}  G_{{\mathrm{2}}}  \ottsym{)} ,  a ,  q_{{\mathrm{2}}}  ) 
        \\
         G_{{\mathrm{1}}}  \ottsym{,}  \ottsym{(}  q_{{\mathrm{1}}}  \ottsym{,}  a  \ottsym{,}   \txtsf{clean}   \ottsym{,}  q_{{\mathrm{2}}}  \ottsym{)}  \ottsym{,}  G_{{\mathrm{2}}}   \vdash ^{ p }_{ \omega }  \ottkw{force} \, \ottsym{(}   \ottkw{thk} \,\Grn{ p_{{\mathrm{0}}} }   \ottsym{)}   \Downarrow   H_{{\mathrm{2}}}  ;  \mathrm{t} 
      }
    }
    {
       G_{{\mathrm{1}}}  \ottsym{,}  \ottsym{(}  q_{{\mathrm{1}}}  \ottsym{,}  a  \ottsym{,}   \txtsf{dirty}   \ottsym{,}  q_{{\mathrm{2}}}  \ottsym{)}  \ottsym{,}  G_{{\mathrm{2}}}   \vdash ^{ p }_{ \omega }  \ottkw{force} \, \ottsym{(}   \ottkw{thk} \,\Grn{ p_{{\mathrm{0}}} }   \ottsym{)}   \Downarrow   H_{{\mathrm{2}}}  ;  \mathrm{t} 
    }

    \begin{llproof}
      \eqPf{e}{\ottkw{force} \, \ottsym{(}   \ottkw{thk} \,\Grn{ p_{{\mathrm{0}}} }   \ottsym{)}} {Given}
      \eqPfParenR{H_{{\mathrm{1}}}}{G_{{\mathrm{1}}}  \ottsym{,}  \ottsym{(}  q_{{\mathrm{1}}}  \ottsym{,}  a  \ottsym{,}   \txtsf{dirty}   \ottsym{,}  q_{{\mathrm{2}}}  \ottsym{)}  \ottsym{,}  G_{{\mathrm{2}}} }{Given}
      \LetPf{G'}{G_{{\mathrm{1}}}  \ottsym{,}  \ottsym{(}  q_{{\mathrm{1}}}  \ottsym{,}  a  \ottsym{,}   \txtsf{clean}   \ottsym{,}  q_{{\mathrm{2}}}  \ottsym{)}  \ottsym{,}  G_{{\mathrm{2}}}}{}
      \Pf{}{}{   \lfloor  G'_{{\mathrm{1}}}  \rfloor_{ P_{{\mathrm{1}}} }    \vdash ^{ p }_{ \omega }  e   \Downarrow    \lfloor  H_{{\mathrm{2}}}  \rfloor_{ P_{{\mathrm{2}}} }   ;  \mathrm{t}  }{By i.h.}
      \subseteqPf{ \lfloor  G'_{{\mathrm{1}}}  \rfloor_{ P_{{\mathrm{1}}} } }{ \lfloor  H_{{\mathrm{2}}}  \rfloor_{ P_{{\mathrm{2}}} } }  {\ditto}
      (3)
      \Hand
      \eqPf{P_{{\mathrm{2}}}}{P_{{\mathrm{1}}} \, \cup \,  \txtsf{dom}( W ) }{ \ditto }
      (4)
      \Hand
      \Pf{}{}{\text{$H_2$ from-scratch consistent}}
          {\ditto}
      \eqPf{ \lfloor  H_{{\mathrm{1}}}  \rfloor_{ P_{{\mathrm{1}}} } }{ \lfloor  G'  \rfloor_{ P_{{\mathrm{1}}} } }  {\Definitionref{def:restrict} ignores edges}
      (1)
      \Hand
      \Pf{}{}{   \lfloor  H_{{\mathrm{1}}}  \rfloor_{ P_{{\mathrm{1}}} }    \vdash ^{ p }_{ \omega }  e   \Downarrow    \lfloor  H_{{\mathrm{2}}}  \rfloor_{ P_{{\mathrm{2}}} }   ;  \mathrm{t}  }{By above equality}
      (2)
      \Hand
      \subseteqPf{ \lfloor  H_{{\mathrm{1}}}  \rfloor_{ P_{{\mathrm{1}}} } }{ \lfloor  H_{{\mathrm{2}}}  \rfloor_{ P_{{\mathrm{2}}} } }  {By above equality}
    \end{llproof}
  \qedhere
  \end{itemize}
\end{proof}

\fi
\end{document}

\section{Nominal pattern: balanced trees (NEW)}

The following constructs a balanced tree from an input list:
\begin{OCaml}
val baltree_of_list : list -> tree
let baltree_of_list =
 let rec baltree_rec h_p h_t tree_in list_in =
   match list with
   | Nil => (tree_in, list_in)
   | Cons(x, n, tl) =>
     let h_x = height_of_int x in
     if not (h_t <= h_x <= h_p) then
       (tree_in, list_in)
     else
       let n1, n2, n3, n4 = fork n in
       force(thunk(n1,
         let right, rest = baltree_rec h_x -1 Leaf (!tl) in
         let l_x, r_x = ref(n2, tree_in), ref(n3, right) in
         baltree_rec h_p h_x (Bin(n4,x,l_x,r_x)) rest
       ))
 in fun list_in =>
    let (tree_out, Nil) = baltree_rec max_int (-1) Leaf list_in
    in tree_out
\end{OCaml}

The \baltreerec function recursively builds this tree.
The parameters of \baltreerec are the height of the
current parent in the constructed tree~(\cod{h\_p}), this parent's
current (left) subtree and its height (\cod{tree\_in} and \cod{h\_t},
respectively), and the remaining input list (\cod{list\_in}).
\baltreeoflist uses \baltreerec, giving it an empty
initial tree (of height $-1$), and an unrestricted (\cod{max\_int})
parent height.
When this height is unrestricted, \baltreerec~produces a tree
with all the list elements, with none left over.

In the \cod{Nil} base case, there is no more list, and the accumulator
\cod{tree\_in} is the resulting tree.
In the recursive \cod{Cons} case, \cod{height\_of\_int} computes the
height of the element \cod{x}, and this height determines
where \cod{x} is placed into the current tree: either under the
current parent (if the inequalities hold), or above it (otherwise).
When \cod{h\_x} is between \cod{h\_t} and \cod{h\_p}, \baltreerec
creates and forces a thunk that calls itself recursively.
In these recursive steps, it (1) computes the right subtree
for \cod{x} and (2) computes the rest of the tree,
into which it places the \cod{Bin} node for \cod{x}.
The introduced thunks and reference cells are given names produced by
\cod{\kw{fork}\;n}, which generates new names from the name \cod{n}
taken from the input \cod{Cons} cell.  The derived names
have the form $\cod{n}{\cdot}1$, $\cod{n}{\cdot}2$, etc.  Deriving
names in this way makes naming deterministic across runs.

\jeff{There needs to be an example here of how this actually solves
  the problem. That is, earlier in this slash paragraph, explain what
  the problem is with naive unfolding. Then here show that the problem
  doesn't happen.}

\section{Nominal pattern: balanced trees (OLD)}
\label{sec:balanced-tree}

\subsection{Example: Tree and List Reductions}

\begin{figure}[t]
Nominal, incremental trees and lists (of integers):
\begin{OCaml}
type 'a thunk
type 'a ref
type list = Nil | Cons of name * int * (list thunk)
type tree = Leaf | Bin of name * int * (tree ref) * (tree ref)
\end{OCaml}

Computes the minimum integer in a tree:
\begin{OCaml}
tree_min : tree -> int
tree_min tree = match tree with
  | Leaf         => max_int
  | Bin(n,x,l,r) =>
     force(thunk(n,min3(x,tree_min(get l),tree_min(get r))))
\end{OCaml}

Constructs a balanced tree from a lazy input list:
\begin{OCaml}
baltree_of_list : list -> tree
baltree_of_list =
 let rec baltree_rec h_p h_t tree_in list_in =
   match list with
   | Nil => (tree_in, list_in)
   | Cons(n, x, tl) =>
     let h_x = height_of_int x in
     if not (h_t <= h_x <= h_p) then
       (tree_in, list_in)
     else
       let n1, n2, n3, n4 = fork n in
       force(thunk(n1,
         let right, rest = baltree_rec h_x -1 Leaf (force tl) in
         let l_x, r_x = ref(n2, tree_in), ref(n3, right) in
         baltree_rec h_p h_x (Bin(n4,x,l_x,r_x)) rest
       ))
 in fun list_in =>
    let (tree_out, Nil) = baltree_rec max_int (-1) Leaf list_in
    in tree_out
\end{OCaml}
\vspace{-2.0ex}
\caption{ML-like code for building balanced trees}
\label{fig:baltree-of-list}
\end{figure}

\figref{baltree-of-list} gives OCaml-like pseudocode that defines
a type \cod{list} of lazy lists and a type
\cod{tree} of trees of integers.
A list is either \cod{Nil} (empty), or a
\cod{Cons} cell containing a \kw{name}, an integer value, and a
thunk that, when forced, produces the tail of the list.
A tree is
either a leaf, or a node containing a \kw{name}, an integer value,
and mutable pointers (\kw{ref}s) to the left and right subtrees.
The figure also defines the reduction of a list to its minimum element
(\cod{tree\_min}), and the construction of a tree from a list
(\cod{baltree\_of\_list}). The role of names in promoting reuse is
explained further below, with a concrete example; we first explain the
non-incremental behavior of these functions.

\paragraph{Reducing a Tree.}
Function~\treemin computes the minimum of the tree in standard
functional style, with two twists.

First, the \cod{Bin} case
introduces a \kw{thunk} that it immediately \kw{force}s:
thunks are \Adapton's unit of incremental reuse, so if
we want to reuse a computation, we must make a thunk out of it.
Forcing the thunk immediately effectively makes this part of the
computation eager.

Second, \kw{thunk} now takes a name, \cod{n}, as its first argument.
Programmers use first-class names to (deterministically) identify the
allocated thunk.
In computations that involve traversing data structures, these names
typically come from the nodes of that structure, as there is a
(potentially reusable) intermediate computation done at each node. In
this example, the programmer follows this pattern, using names from
the input tree to identify thunks created when visiting the
corresponding node.

To work correctly, the programmer must use each name
at most once per incremental run, lest name usage be ambiguous
(which is a run-time error in our system).
Here, each name given in the input tree is used exactly once.  We'll
discuss how names are produced below.

\paragraph{Building a Balanced Tree from a Lazy List.}
The \baltreeoflist function in \figref{baltree-of-list}
builds a \emph{probabilistically-balanced} tree (expected $O(\log n)$ height)
from a lazy list.
The height of each input list element in the resulting tree is determined by a
function \cod{height\_of\_int}, which counts the number of trailing
zero bits in a hash of the given integer.
\citet{PughTe89} showed that this height assignment induces a
probabilistically balanced tree containing a sequence of distinct
objects.  Moreover, they showed that ``similar'' sequences induce
``similar'' trees.

As an example, suppose we have this input list:
\[
\ptr\!\alpha~a%
\ptr\beta~b%
\ptr\gamma~c%
\ptr\delta~d%
\ptr\eta~e%
\ptr\zeta~f%
\ptr\cod{Nil}
\]
Greek letters are names, Roman letters are integers, and the
notation ``$\alpha~a\ptr\ldots$''
means
\cod{Cons($\alpha$, $a$, \ldots)}.
\begin{floatingfigure}[r]{.45\columnwidth}
\centering
\begin{tikzpicture} %
    \tikzset{level distance=15pt,sibling distance=2pt}
    \tikzset{grow'=up} %
    \tikzset{every tree node/.style={anchor=base west}} %
    \Tree [.$d$ [.$b$ [.$a$ ] [.$c$ ] ] [.$e$ {$\cdot$} [.$f$ ] ] ]
\end{tikzpicture}
\caption{Example tree}
\label{fig:ex-tree}
\end{floatingfigure}
If input
elements~$[a,b,c,d,e,f]$ have heights~$[0,1,0,2,1,0]$, respectively,
then \baltreeoflist will produce the binary
tree depicted in \figref{ex-tree}.

The \baltreerec function recursively builds this
tree.
The parameters of \baltreerec are the height of the
current parent in the constructed tree~(\cod{h\_p}), this parent's
current (left) subtree and its height (\cod{tree\_in} and \cod{h\_t},
respectively), and the remaining input list (\cod{list\_in}).
\baltreeoflist uses \baltreerec, giving it an empty
initial tree (of height $-1$), and an unrestricted (\cod{max\_int})
parent height.
When this height is unrestricted, \baltreerec~produces a tree
with all the list elements, with none left over.

In the \cod{Nil} base case, there is no more list, and the accumulator
\cod{tree\_in} is the resulting tree.
In the recursive \cod{Cons} case, \cod{height\_of\_int} computes the
height of the element \cod{x}, and this height determines
where \cod{x} is placed into the current tree: either under the
current parent (if the inequalities hold), or above it (otherwise).
When \cod{h\_x} is between \cod{h\_t} and \cod{h\_p},
\baltreerec creates and forces a thunk that calls itself
recursively.
In these recursive steps, it (1) computes the right subtree
for \cod{x} and (2) computes the rest of the tree,
into which it places the \cod{Bin} node for \cod{x}.
The introduced thunks and reference cells are given names produced by
\cod{\kw{fork}\;n}, which generates new names from the name \cod{n}
taken from the input \cod{Cons} cell.  The derived names
have the form $\cod{n}{\cdot}1$, $\cod{n}{\cdot}2$, etc.  Deriving
names in this way makes naming deterministic across runs.

Why create a tree at all?  As has been observed in past
work\jana{cite!}, a balanced tree allows us to aggregate data with incremental
efficiency, since the tree will always have logarithmic depth, as will
bottom-up computations (like \treemin) that reduce or
transform it.
While past work has considered incremental computations
over such balanced trees, in this paper we show that the
\emph{construction} of the tree from a changing sequence can also be
efficiently incrementalized.
\jana{This implies that no one has done this (incrementalize tree construction)
   before.  Do we really know that's the case?}
The \baltreerec function carries out this process.

\subsection{Computation Graphs and Incremental Reuse}

\begin{figure*}
\centering
\begin{tabular}{cc}
\includegraphics[width=3.2in]{nominal}
&
\includegraphics[width=3.2in]{classic}
\\
(a) \NominalAdapton
&
(b) (Classic) \Adapton
\end{tabular}
\caption{DCG from running \baltreerec and making a change;
  nominal and non-nominal versions}
\label{fig:dcg-versions}
\end{figure*}

Behind the scenes, cached work from past computations is captured in
the DCG\@.
\figref{dcg-versions}(a) shows the DCG induced by running
\cod{tree\_of\_list} on the input list $[a, b, c, d, e, f]$.
The nodes of the DCG consist of thunks and refs.
The initial graph is shown in black and gray; an input change,
described shortly, is shown in red.
The initial graph consists of the input list (and its
thunks), as well as six thunks of function \baltreerec (one per
input element), and twelve refs for the resulting tree (two per input
element).
An edge in the DCG records an action taken by a thunk when it last
ran.  Nodes for refs are inert, and have no associated action.
In particular, the DCG records an edge when a thunk \emph{observes}
another thunk or ref (via \kw{force} or \kw{get}, respectively),
and when a thunk \emph{allocates} another thunk or ref (via
\kw{thunk} or \kw{ref}, respectively).

\paragraph{Input change.}
Suppose that after running on the initial list above, the user makes a
change that causes the first element of this sequence to be deleted,
resulting in the following changed input sequence:
\[
    {\color{red}{\bullet\!\!-\!\!-\!\!-\!\!-\!\!\!\!\!\!-\!\!\to}}%
    \,\beta~b%
    \ptr\gamma~c%
    \ptr\delta~d%
    \ptr\eta~e%
    \ptr\zeta~f%
    \ptr\cod{Nil}
\]
When this change occurs, our system \emph{dirties} the edges
in the DCG along all paths from the changed ref to its dependents.
In particular, if a thunk observes another that has changed, then an
invariant is that its corresponding \emph{observation edge} is
currently marked as dirty.
Similarly, if a thunk allocates another that is changed, then its
\emph{allocation edge} is marked dirty.
In both cases, the system (eagerly) dirties transitive dependencies as
changes occur to enforce these invariants. Both the original change
and the dirtying step are marked ``(1a)'' and ``(1b)'' in
Figure~\ref{fig:dcg-versions}(a).
For the latter, the ``root'' node represents the top-level thunk that
demands the result of the computation, and its force edge is dirtied
by the change.

\paragraph{Nominal reuse.}
After the programmer re-demands the tree via the root, the computation
uses names to identify and reuse \emph{all} of the original DCG nodes.
This process occurs incrementally, during the re-evaluation of the DCG thunks.

The initial \baltreerec thunk reruns first, since it is
re-demanded and has a dirty edge.
Each time we rerun a node, we remove its old (possibly dirty) edges
and replace them with clean versions.
With element $a$ absent, this re-run thunk uses the name
$\beta{\cdot}1$ to (re)allocate thunk $\beta{\cdot}1$
 but with updated content: \cod{tree\_in} is now~\cod{Leaf}, not
\cod{Bin($\alpha{\cdot}4$,$a$,\_,\_)}.
Similarly, \cod{h\_t} is $-1$, not $0$.
(The figure does not depict the change in internal content.)

Since its content changed, the system (transitively) marks any edges
dependent on thunk $\beta\cdot1$ as dirty, to indicate that their
action may no longer be consistent.  In this case, there are no such
dependencies (we removed them before rerunning the first node); if we
had earlier run \treemin over the produced tree, then the
observation edges that (directly or transitively) forced $\beta{\cdot}1$
would now be marked dirty.
Critically, this dirtying occurs \emph{during} the re-evaluation process,
unlike classic \Adapton which only dirties before re-evaluation starts.

Since its content changed, we reevaluate $\beta{\cdot}1$.
In so doing, $\beta{\cdot}1$ (re)allocates refs $\beta{\cdot}2$ and
$\beta{\cdot}3$, and thunk $\delta{\cdot}1$.
The content of $\beta{\cdot}2$ changes, since $a$ is now absent
($\beta{\cdot}3$ remains unchanged).
However, by virtue of reusing these ref names, the content of the
thunk $\delta{\cdot}1$ is \emph{identical} to the last run; in
particular, the value for its argument \cod{tree} is identical to
\cod{Bin($\beta{\cdot}4$,$b$,$\beta{\cdot}2$,$\beta{\cdot}3$)},
as before.
As a result, when $\beta{\cdot}1$ forces $\delta{\cdot}1$ it does
\emph{not} reevaluate it:
We avoid reevaluation whenever we force a
node with all clean edges whose content has not changed.
(In the graph, the clean edges
between $\beta{\cdot}1$ and $\delta{\cdot}1$ in the graph are deleted
just before $\beta{\cdot}1$ runs and then regenerated.)
In total, we reevaluate only two nodes (\text{root} and $\beta{\cdot}1$),
and reuse \emph{all} of the nodes' original identities.

\paragraph{Reuse without names.}

In contrast, consider how classic \Adapton would have handled this change.
The code of the classic version of
\Figureref{fig:baltree-of-list} would look the same but with all uses
of names and \cod{fork} removed. After running this adjusted code, we
would end up with the non-red
portions of the DCG shown in
\Figureref{fig:dcg-versions}(b).  This is basically the same DCG as for
the nominal version, but all names have been removed.

After the change and re-demand of the root, we perform change
propagation, which results in the changes shown in red.
Many of the nodes are reused, but nodes
that correspond to input elements $b$ and $d$ are re-created.
Intuitively, these nodes are duplicated because their (sub-)trees
originally contained the deleted element, $a$.

In systems without names, these duplicate versions arise because
allocated objects in the incremental program are identified by their content.
In particular, the identifying content of a ref is its initialization
value; the identifying content of a thunk is its closure's
environment. As such,
a thunked function call is identified by the function and its
argument values.
The content of the \baltreerec thunks for elements $b$ and $d$
changes, since in both cases, the argument value for \cod{tree\_in} no
longer contains the \cod{Bin} for $a$.
Similarly, the content for their left refs changes too, since before
they (transitively) pointed at the \cod{Bin} for $a$.
Consequently, these four objects are re-created in the incremental run.

Similar duplication would cascade through any dependent computations
over the produced tree (e.g., \treemin); in particular, any
thunk dependent on a duplicated one must generally also be
duplicated. As we see in the experimental results for mergesort and
the AVL tree benchmarks,
this cascading duplication can significantly harm performance.

\paragraph{Names improve performance, in space and time.}

To summarize, using names improves reuse, which reduces the time to
perform change propagation (since fewer thunks are re-forced) and
reduces the amount of allocated memory (since fewer thunks are
spuriously re-created).